\documentclass[11pt]{article}
\usepackage{amsmath,amssymb,amsfonts,amsthm,latexsym}
\usepackage{graphicx,booktabs,multirow}
\usepackage{enumerate,setspace,geometry,changepage,lipsum,xr}
\usepackage[toc,title,titletoc,header]{appendix}
\usepackage{titletoc}
\usepackage{etoc}
\usepackage[bottom]{footmisc}
\usepackage[authoryear]{natbib}
\usepackage{xstring} 

\numberwithin{equation}{section}

\usepackage[pdfborder={0 0 0},colorlinks=true,linkcolor=magenta,citecolor=blue,filecolor=magenta]{hyperref}
\newcommand{\myref}[2]{\hyperref[#1]{#2}}

\usepackage{lineno,etoolbox}
\setpagewiselinenumbers

\allowdisplaybreaks

\def\qed{\rule{2mm}{2mm}}

\newtheorem{theorem}{Theorem}
\newtheorem{lemma}{Lemma}

\newtheorem{assumption}{Assumption}

\newtheorem{example}{Example}
\newtheorem{remark}{Remark}

\newcounter{assumptionM}
\newcounter{assumptionA}
\def\theassumptionM{M.\arabic{assumptionM}}
\def\theassumptionA{A.\arabic{assumptionA}}
\AtEndEnvironment{remark}{~\qed}%
\AtEndEnvironment{example}{~\qed}%

\begin{document}
\author{
Federico A. Bugni \\
Department of Economics\\
Northwestern University \\
\url{federico.bugni@northwestern.edu}
\and
Ivan A.\ Canay \\
Department of Economics\\
Northwestern University\\
\url{iacanay@northwestern.edu}
\and
Deborah Kim \\
Department of Economics\\
University of Warwick\\
\url{deborah.kim@warwick.ac.uk}
}

\title{On the Rates of Convergence of Induced Ordered Statistics\\ and their Applications\footnote{We thank Matias Cattaneo for comments and discussion. All errors are our own.}}

\maketitle

\vspace{-1cm}
\begin{spacing}{1.1}
\begin{abstract}
Induced order statistics (IOS) arise when sample units are reordered according to the value of an auxiliary variable, and the associated responses are analyzed in that induced order. IOS play a central role in applications where the goal is to approximate the conditional distribution of an outcome at a fixed covariate value using observations whose covariates lie closest to that point, including regression discontinuity designs, $k$-nearest–neighbor methods, and distributionally robust optimization. Existing asymptotic results allow the dimension of the IOS vector to grow with the sample size only under smoothness conditions that are often too restrictive for practical data–generating processes. In particular, these conditions rule out boundary points, which are central to regression discontinuity designs. This paper develops general convergence rates for IOS under primitive and comparatively weak assumptions. We derive sharp marginal rates for the approximation of the target conditional distribution in Hellinger and total variation distances under quadratic mean differentiability and show how these marginal rates translate into joint convergence rates for the IOS vector. Our results are widely applicable: they rely on a standard smoothness condition and accommodate both interior and boundary conditioning points, as required in regression discontinuity and related settings. In the supplementary appendix, we provide complementary results under a Taylor/H\"older remainder condition. Our results reveal a clear trade–off between smoothness and speed of convergence, identify regimes in which Hellinger and total variation distances behave differently, and provide explicit growth conditions on the number of nearest neighbors. 
\end{abstract}
\end{spacing}

\noindent KEYWORDS: total variation distance, Hellinger distance, induced order statistics, rank tests, permutation tests.

\noindent JEL classification codes: C12, C14.

\maketitle

\thispagestyle{empty}

\newpage 

\section{Introduction}

Induced order statistics arise when units in a sample are reordered according to the value of an auxiliary variable, and one then studies the associated responses in that induced order. Classical work on induced order statistics (IOS) has developed their basic probabilistic properties, limit distributions, and representations in terms of mixtures and conditional laws; see, for instance, \cite{david/galambos:74,bhattacharya:74,reiss:89,kaufmann/reiss:92} and the book-level treatment by \cite{falk/husler/reiss:2010}. More recently, IOS have proved useful in a number of econometric and statistical applications in which one wishes to approximate the conditional distribution of an outcome given a covariate taking a specific value. A prominent example is the regression discontinuity design, where identifying assumptions are formulated in terms of the distribution of potential outcomes at a cutoff and IOS provide a natural way to approximate this distribution using observations whose running variable lies closest to the cutoff; see, for example, \cite{canay/kamat:18,goldman/kaplan:2018,bugni/canay:21,bugni2023permutation,bugni/canay/kim:2025}. Similar constructions appear in $k$-nearest-neighbor methods, distributionally robust optimization, and related problems where analysis focuses on the subsample formed by observations closest to a point of interest; see, for example, \cite{esteban/morales:22}.

In the vast majority of these applications, the dimension of the IOS vector is kept fixed in the asymptotic approximations. Existing results that allow the dimension to grow with the sample size rely on smoothness assumptions that are often too restrictive for the underlying data-generating process. This raises the question of whether one can obtain general convergence results for IOS under primitive and comparatively weak conditions, in a form that can serve as a reusable toolkit for analyzing tests and estimators based on $k$ nearest neighbors as $k$ grows with the sample size.

The present paper answers this question. We derive convergence rates in the Hellinger and total variation distances for the joint law of the IOS, and we trace how these rates depend on the smoothness of the underlying model. Our main results are widely applicable: they rely on quadratic mean differentiability (QMD) of the conditional densities at the conditioning point, a standard condition in asymptotic statistics, and accommodate both interior and boundary conditioning points---as required in regression discontinuity designs and related settings. Under this condition, we obtain sharp marginal rates for approximating the target conditional distribution and show how these translate into joint convergence rates for the IOS vector, with explicit growth conditions on $k$. In the supplementary appendix, we provide complementary results under a Taylor/H\"older remainder condition, which yield a wider range of marginal exponents and clarify how rates slow down or fail as smoothness weakens.

Formally, we consider a pair of random vectors $(X,Y)$ taking values in $\mathbf R^d\times\mathbf R^m$, with joint density $f$. The conditional law of $Y$ given $X=x_0$ is denoted by $P$, while for small $r>0$ the conditional law of $Y$ given $X\in B_r:=\{x:\|x-x_0\|<r\}$ is denoted by $P_r$. Given an i.i.d.\ sample $\{(X_i,Y_i):1\le i\le n\}$, we use Euclidean distance to $x_0$ to identify the set of the $k$ nearest observations and collect the associated responses into the vector, with components listed in original sample order,
\[
    S_n := (S_{n,1},\dots,S_{n,k})~.
\]
The ideal benchmark experiment for many applications would be an i.i.d.\ sample of size $k$ from $P$, say $S=(\tilde Y_1,\dots,\tilde Y_k)$ with $\tilde Y_1,\dots,\tilde Y_k$ i.i.d.\ with law $P$. Our primary object of interest is the discrepancy between the laws $\mathcal L(S_n)$ and $\mathcal L(S)$ as a function of $n$ and $k$, measured in Hellinger distance $\mathrm{H}(\cdot)$ and total variation distance $\mathrm{TV}(\cdot)$. These metrics control the error of tests and estimators based on $S_n$ relative to their infeasible counterparts based on $S$ via classical variational characterizations, so their rates directly translate into size and risk bounds for procedures built from induced order statistics.

The paper makes three main contributions. First, we provide general high-level results that map marginal approximation rates for the conditional law $P_r$ to joint convergence rates for the IOS vector $S_n$. Specifically, under a mild Lipschitz condition on the density of $X$, we show how bounds on the \emph{marginal} rates of the form 
\begin{equation}\label{eq:marginal-rates}
    \mathrm{H}(P_r,P)=O(r^{a_h}) \quad \text{ and }\quad \mathrm{TV}(P_r,P)=O(r^{a_{tv}}) 
\end{equation}
translate directly into \emph{joint} convergence rates for the IOS vector $S_n$ of the form 
\begin{align}
    \mathrm{H}\big(\mathcal L(S_n),\mathcal L(S)\big)  &=O\big(k^{1/2}(k/n)^{a_h/d}\big) \notag\\
    \mathrm{TV}\big(\mathcal L(S_n),\mathcal L(S)\big) &=O\left(\min\Big\{k(k/n)^{a_{tv}/d}, k^{1/2}(k/n)^{a_h/d}\Big\}\right)~,\label{eq:TV-minO}
\end{align}
as $n\to\infty$ and $k=k_n$ grows with $n$. By separating the analysis into (i) a high-level result linking marginal and joint rates and (ii) primitive conditions that determine the marginal exponents $(a_h,a_{tv})$, our approach isolates precisely when and how smoothness assumptions affect the behavior of IOS. In addition, the framework treats Hellinger and total variation distances on equal footing, allowing their distinct behaviors to be characterized transparently and under comparable assumptions.

Second, we develop primitive conditions that deliver the required marginal rates under QMD of the conditional densities at $x_0$, which is a standard condition that accommodates both interior and boundary points. The supplementary appendix provides a complementary smoothness framework based on differentiability of $f$ in $x$ with a H\"older remainder, which delivers a wider range of marginal exponents $(a_h,a_{tv})$ and clarifies how rates slow down as smoothness weakens. A key distinction is that the joint Hellinger rate depends solely on the marginal exponent $a_h$, whereas the joint total variation rate is shaped by both $a_h$ and $a_{tv}$ through the minimum in \eqref{eq:TV-minO}. Under H\"older smoothness, $a_{tv}=2a_h$, so total variation decays faster than Hellinger and this improvement carries over to the joint rates. Yet both metrics impose the same growth condition on $k$, so the faster total variation rate does not relax the admissible order of $k$.

Third, we formally compare our assumptions with those of \citet[Theorem~3.5.2]{falk/husler/reiss:2010} (FHR), which is a benchmark result in this setting. We identify the features of that condition that generate the unusually fast rate $\mathrm{H}(P_r,P)=O(r^2)$, and we show that these features impose strong structure on the joint density, including local exponential–family behavior and stringent restrictions on how the support of $Y$ may vary with $X$. In contrast, our assumptions allow for substantially more flexible data-generating processes, including boundary points and other non-smooth features that are ruled out by the FHR conditions.

The end result is a unified set of conditions and rates that can serve as a toolkit for analyzing tests and estimators based on induced order statistics across a broad range of applications. We illustrate the relevance of our results through several examples. First, we revisit permutation–based inference in the regression discontinuity design of \cite{canay/kamat:18} and show how our theory yields new asymptotic results for that framework. Existing large-sample approximations treat the dimension of $S_n$ as fixed, and commonly used heuristics for selecting $k$ rely on rate calculations that are not consistent with the findings established here. Our analysis provides formally valid growth conditions on $k$, clarifies how smoothness assumptions shape the behavior of IOS--based procedures, and offers guidance for constructing data-dependent rules for choosing $k$ in practice. We then discuss applications to $k$-nearest-neighbor estimators and to distributionally robust optimization. In particular, the recent work of \cite{esteban/morales:22} on distributionally robust optimization relies on the regularity conditions of \citet{falk/husler/reiss:2010} to control the discrepancy between empirical and true conditional laws, providing another example in which the tools developed here can be used to relax smoothness requirements and refine asymptotic guarantees. Across these settings, our results provide a unified framework for deriving asymptotic properties of IOS--based methods under transparent and verifiable smoothness conditions.

The remainder of the paper is organized as follows. Section~\ref{sec:setup} introduces the main notation and revisits the existing result of \cite{falk/husler/reiss:2010}. Section~\ref{sec:main-results} presents the main formal results. Section~\ref{sec:qmd} covers the quadratic–mean differentiability setting, while the Taylor/H\"older–type smoothness framework is developed in the supplementary appendix (see Appendix~\ref{supp:results-Holder}). Section~\ref{sec:application} illustrates how our results apply to several econometric and statistical problems, with a focus on the properties of permutation-based tests in the regression discontinuity design. Section~\ref{sec:on-assumptions} examines the interplay between these assumptions, explaining why the conditions of \cite{falk/husler/reiss:2010} yield faster convergence relative to the ones we present here. The proofs of the main theorems appear in Appendix~\ref{app:main-proofs}. Finally, the supplementary appendix collects auxiliary lemmas, technical proofs, and additional examples. To keep the main text uncluttered, all results in the supplement are numbered with an ``S.'' prefix (e.g., Lemma~S.1, Theorem~S.2).

\section{Setup and Notation}\label{sec:setup}

Let $X\in\mathcal X \subseteq \mathbf{R}^d$ and $Y\in\mathcal Y \subseteq \mathbf{R}^m$, and let $Q$ denote the distribution of $(X,Y)$, assumed to belong to a model class $\mathbf{Q}$. Assume $Q$ is dominated by the product measure $d\mu := dx\otimes \nu(dy)$, where $dx$ is Lebesgue measure on $\mathbf{R}^d$ and $\nu$ is an arbitrary fixed $\sigma$-finite measure on $\mathbf{R}^m$. We denote the joint density with respect to $d\mu$ by $f(x,y)$ and the marginal density of $X$ by
\[
    g(x):=\int f(x,y)\,\nu(dy)~.
\]
We denote the conditional distribution of $Y$ given $X=x_0$ by $P$, assumed to belong to a model class $\mathbf P$ induced by $\mathbf Q$. For any $x\in\mathbf{R}^d$ with $g(x)>0$, we denote the density of the conditional distribution of $Y$ given $X=x$ by
\[
    p_x(y) := \frac{f(x,y)}{g(x)}~.
\]
This setup allows $Y$ to be continuous, discrete, or mixed, while requiring that $X$ admits a Lebesgue density. 

Fix $x_0\in\mathcal{X}$ and let $r>0$ be given. We denote the Euclidean distance of $X$ from $x_0$ by
\begin{equation}\label{eq:ranks}
    R := \|X-x_0\| ~
\end{equation}
and the open ball of radius $r$ with center $x_0$ by
\[
    B_{r}:=\{x\in \mathbf R^d:\|x-x_0\|< r\}~.
\]
We denote by $P_{r}$ the distribution that determines the local conditional distributions of $Y$ conditional on the event $X\in B_r$ and let 
\[
    h_r(y) := \frac{\int_{B_r} f(x,y)\,dx}{\int_{B_r} g(x)\,dx}~.
\]
Under the smoothness conditions introduced in the next section, it follows that $h_r(y)$ approximates $p_{x_0}(y)$ as $r\to 0$. For notational simplicity, the dependence of $B_{r}$ and $h_r(y)$ on $x_0$ is kept implicit, since the center $x_0$ remains fixed throughout the paper.

We observe an i.i.d.\ sample of size $n$ from $Q\in\mathbf Q$ and denote it by 
\begin{equation}\label{eq:obs-data}
    \{(X_i,Y_i):\,1\le i\le n\}~.
\end{equation}
We also denote the order statistics of $\{R_i:1\le i\le n\}$ in \eqref{eq:ranks} by $R_{(1)}\le \cdots\le R_{(n)}$. Note that by our assumptions on $\mathbf Q$, ties occur with probability $0$. Define the (random) permutation $\sigma_n$ of $\{1,\ldots,n\}$ by
\[
    \sigma_n(i)=j\quad\Longleftrightarrow\quad R_i=R_{(j)}~, \qquad i,j=1,\ldots,n~,
\]
so that $\sigma_n(i)$ is the rank of observation $i$.
The index set of the $k$ observations closest to $x_0$ is
\[
    K_n:=\{\,i\in\{1,\ldots,n\}:\ \sigma_n(i)\le k\,\}~.
\]
Let $\iota_n$ be the increasing rearrangement of $K_n$ (i.e., $\iota_n(1)<\cdots<\iota_n(k)$ and $\{\iota_n(1),\ldots,\iota_n(k)\}=K_n$). Finally, for $j=1,\ldots,k$, let $Y_{\iota_n(j)}$ be the response attached to the observation whose distance from $x_0$ ranks among the $k$ smallest, with the indices $\iota_n(1)<\cdots<\iota_n(k)$ listed in the original sample order. In this paper, these are our \emph{induced order statistics}, listed in original sample order (equivalently, a permutation of distance-ordered concomitants) (see, e.g., \cite{david/galambos:74,bhattacharya:74,canay/kamat:18}).

Collect these \emph{induced order statistics} into
\begin{equation}\label{eq:Sn}
    S_n := (S_{n,1},\ldots,S_{n,k}) := \big(Y_{\iota_n(1)},\ldots,Y_{\iota_n(k)}\big)~.
\end{equation}
By definition, $S_n$ is a $k$-tuple of outcomes of $Y$ corresponding to those $X$’s that fall closest to $x_0$. Example \ref{ex:iota} below provides an illustrative example. Under suitable smoothness conditions, one could use this sample to approximate the distribution of $Y$ given $X=x_0$, which we previously denoted by $P$. This is important because many hypotheses and parameters of interest are formulated in terms of $P$. For these problems, the ``ideal'' random sample would be
\begin{equation}\label{eq:S}
   S:=(S_{1},\ldots,S_{k}) := (\tilde{Y}_{1},\ldots,\tilde{Y}_{k}),\quad \text{with}\quad \tilde{Y}_{1},\ldots,\tilde{Y}_{k} \ \text{ i.i.d. with\ } P~. 
\end{equation}
In such settings, $S_n$ can serve as an ``approximation'' of $S$. Section~\ref{sec:application} gives detailed examples. 

\begin{example}\label{ex:iota}
Let $m=d=1$, $n=10$, $k=4$, and suppose $x_0=0$. Consider the hypothetical sample
\[
    \{(-3,2),\ (4,-1),\ (-1,0),\ (2,5),\ (0.5,1),\ (-0.2,3),\ (7,4),\ (-5,-2),\ (1.4,6),\ (-0.7,8)\}~.
\]
In this case $R_i = |X_i|$ and so the rank map $\sigma_n$ is given by 
\[
\begin{array}{c|cccccccccc}
i & 1 & 2 & 3 & 4 & 5 & 6 & 7 & 8 & 9 & 10\\
\hline
\sigma_n(i) 
& 7 & 8 & 4 & 6 & 2 & 1 & 10 & 9 & 5 & 3
\end{array}~.
\]
Since $k=4$, the index set of the $k$ nearest observations is
\[
    K_n=\{ i:\sigma_n(i)\le 4\,\}=\{6,5,10,3\}~.
\]
Applying the re-arrangement determined by $\iota_n$ we get $\iota_n = (3,5,6,10)$. Hence the induced order statistics are the $Y$--values attached to these indices, listed in the original sample order:
\[
    S_n = (Y_{\iota_n(1)},Y_{\iota_n(2)},Y_{\iota_n(3)},Y_{\iota_n(4)}) 
    = (Y_3,Y_5,Y_6,Y_{10}) = (0, 1, 3, 8)~.
\]
\end{example}

For hypothesis testing, let $\phi:\mathbf R^k\to[0,1]$ be a measurable test functional. Working with the infeasible test $\phi(S)$ would be ideal, but in practice, we resort to $\phi(S_n)$ instead.
Since $\phi\le 1$, the variational characterization of total variation implies
\begin{equation}\label{eq:TVbound}
    \big|E_Q[\phi(S_n)]-E_Q[\phi(S)]\big|\ \le\ \mathrm{TV}\big(\mathcal L(S_n),\mathcal L(S)\big)~,
\end{equation}
where $\mathcal L(\cdot)$ denotes the law of a random vector and $\mathrm{TV}$ denotes the total variation distance. Moreover, the standard comparison between total variation and the Hellinger distance $\mathrm{H}$ yields
\begin{equation}\label{eq:HvsTV}
    \mathrm{H}^2\big(\mathcal L(S_n),\mathcal L(S)\big)\ \le\ \mathrm{TV}\big(\mathcal L(S_n),\mathcal L(S)\big)
\ \le\ \sqrt{2}\,\mathrm{H}\big(\mathcal L(S_n),\mathcal L(S)\big)~.
\end{equation}
Thus, any rate for either distance immediately delivers a rate for the size error in \eqref{eq:TVbound}. 

\begin{example}\label{ex:testing}
Consider the setting in Example~\ref{ex:iota}, where $k=4$ and $S_n = (0,1,3,8)$.  
In the setting of Section~\ref{sec:application}, the Cramér--von Mises (CvM) statistic compares two subsamples: the first corresponds to the first $k/2$ components of $S_n$, namely $(S_{n,1},S_{n,2})=(0,1)$, and the second to the last $k/2$ components, $(S_{n,3},S_{n,4})=(3,8)$.  
The associated empirical CDFs are
\[
   \hat F_n^{-}(t)=\frac12\big(I\{0\le t\}+I\{1\le t\}\big)~,
   \qquad \text{and} \qquad
   \hat F_n^{+}(t)=\frac12\big(I\{3\le t\}+I\{8\le t\}\big)~.
\]
The CvM test statistic is therefore
\begin{equation}\label{eq:CvM-statistic}
    T(S_n) = \frac{1}{k}\sum_{j=1}^k \big(\hat F_n^{-}(S_{n,j}) - \hat F_n^{+}(S_{n,j})\big)^2
    = \frac{1}{4}\left((\tfrac12)^2 + 1^2 + (\tfrac12)^2 + 0^2\right) = \frac38~.
\end{equation}
\end{example}

For estimation, consider an IOS--based estimator $\psi(S_n)$ of a scalar parameter $\theta(P)$, where $\theta(P)$ is defined as a bounded functional of the conditional law $P$. Examples include the conditional CDF or the conditional mean under $\|Y\|\le B$ for some constant $B$. Then
\[
    E_Q \big[(\psi(S_n)-\theta(P))^2\big] \le 2 E\big[(\psi(\tilde S_n)-\psi(\tilde  S))^2\big] + 2 E_P \big[(\psi(S)-\theta(P))^2\big]~,
\]
where the expectation in the first term is taken with respect to any joint distribution of $(\tilde  S_n,\tilde  S)$ whose marginals satisfy $\tilde  S_n\sim\mathcal L(S_n)$ and $\tilde  S\sim\mathcal L(S)$. The second term is the standard $k$--sample risk of the infeasible estimator $\psi(S)$ based on $k$ i.i.d.\ draws from $P$, and is typically of order $O(1/k)$ for sample means or CDF estimators under mild moment conditions. Under the assumption that $|\psi|\le B$, a maximal coupling yields
\begin{equation}\label{eq:estimation-bound}
    EE\big[(\psi(\tilde S_n)-\psi(\tilde  S))^2\big] \le
    4B^2 \mathrm{TV}\!\big(\mathcal{L}(S_n),\mathcal{L}(S)\big)~.
\end{equation}
Thus, any convergence rate for $\mathrm{H}(\mathcal{L}(S_n),\mathcal{L}(S))$ or $\mathrm{TV}(\mathcal{L}(S_n),\mathcal{L}(S))$ yields an immediate, uniform bound on the mean squared error of $\psi(S_n)$.

\begin{example}\label{ex:estimation}
In the same setting as Example~\ref{ex:iota} with $k=4$ and $S_n=(0,1,3,8)$, take the parameter of interest to be the conditional mean $\theta(P)=\mathrm E_P[Y]=E_Q[Y\mid X=0]$. The natural estimator based on $k$ i.i.d.\ draws from $P$ is $\psi(S)=\tfrac1k\sum_{j=1}^k S_j$; the feasible IOS--based estimator is
\[
   \psi(S_n)=\frac{1}{k}\sum_{j=1}^k S_{n,j} = \frac{1}{4}(0+1+3+8) = 3~.
\]
\end{example}

Taken together, these two applications demonstrate that the convergence rates developed in this paper provide generic control over the performance of IOS--based procedures, whether for hypothesis testing or estimation. The monograph \cite{falk/husler/reiss:2010} provides an influential benchmark for the convergence of induced order statistics in Hellinger distance, obtained under a relatively strong smoothness condition tailored to their framework. Before presenting our results, we briefly summarize the key result in \cite{falk/husler/reiss:2010} and its main assumption.

\subsection{Existing results}

\cite{falk/husler/reiss:2010} provide an approximation result for $S_n$ by deriving a rate of convergence for $\mathrm{H}\big(\mathcal L(S_n),\mathcal L(S)\big)$. The intuition of this result goes as follows. Since $R$ has a continuous CDF, it follows from \citet[Theorem~1]{kaufmann/reiss:92} that, conditional on $R_{(k+1)}=r$, the vector $S_n$ consists of $k$ i.i.d.\ draws from $P_{r}$:
\[
\mathcal L\big(S_n\,\big|\,R_{(k+1)}=r\big) = P_{r}^{k}~.
\]
For small $r>0$, $P_{r}$ approximates the target conditional law $P$, so unconditionally we expect
\begin{equation}\label{eq:prod-mixture}
\mathcal L(S_n) =\int P_{r}^{k}\ \mathcal L\big(R_{(k+1)}\big)(dr)
\approx \int P^{k}\ \mathcal L \big(R_{(k+1)}\big)(dr)
=P^{k}=\mathcal L(S)\,.
\end{equation}
In words: the $Y$–values attached to the $k$ nearest $X$–neighbors of $x_0$ behave approximately like $k$ independent draws from $P$. In \citet[Theorem 3.5.2]{falk/husler/reiss:2010}, a formal proof of this approximation is given under the assumption stated below. For completeness, we reproduce their theorem after the assumption.

\begin{assumption}[FHR assumption]\label{ass:book}
The model class $\mathbf Q$ consists of all laws $Q$ for which there exists $t_0>0$ such that, uniformly for $t\in\mathbf{R}^d$ with $\|t\|\le t_0$ and all $y\in\mathbf{R}^m$,
\begin{equation}\label{eq:book-expansion}
    f(x_0+t,y)=f(x_0,y)\Big\{1+t^\top\zeta_1(y)+O\big(\|t\|^2\zeta_2(y)\big)\Big\}~,    
\end{equation}
where $\zeta_1:\mathbf{R}^m\to\mathbf{R}^d$, $\zeta_2:\mathbf{R}^m\to[0,\infty)$, and
\begin{equation}\label{eq:book-L2integrability}
    \int \big(\|\zeta_1(y)\|^2+\zeta_2(y)^2\big) f(x_0,y)\nu(dy)<\infty~.
\end{equation}
\end{assumption}

\begin{theorem}[Falk–Hüsler–Reiss, Theorem 3.5.2]\label{thm:FHR-352}
Suppose $g(x_0)>0$ and Assumption~\ref{ass:book} holds. Then, uniformly over $n\in\mathbf N$ and $k\in\{1,\ldots,n\}$,
\[
\mathrm{H}\big(\mathcal L(S_n),\mathcal L(S)\big) = O\big(k^{1/2}(k/n)^{2/d}\big)~.
\]
\end{theorem}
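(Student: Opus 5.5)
The proof combines the mixture representation \eqref{eq:prod-mixture} with two standard facts: joint convexity of the squared Hellinger distance, and its subadditivity over product measures. By \citet[Theorem~1]{kaufmann/reiss:92}, $\mathcal L(S_n)=\int P_r^k\,\mathcal L(R_{(k+1)})(dr)$. Since $\mathrm{H}^2$ is an $f$-divergence, it is jointly convex, and Jensen's inequality gives
\[
\mathrm{H}^2\big(\mathcal L(S_n),P^k\big)\le \int \mathrm{H}^2(P_r^k,P^k)\,\mathcal L(R_{(k+1)})(dr)~.
\]
For products, $1-\tfrac12\mathrm H^2(P_r^k,P^k)=(1-\tfrac12\mathrm H^2(P_r,P))^k$, with the normalization $\mathrm H^2=\int(\sqrt p-\sqrt q)^2$. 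Hence $\mathrm H^2(P_r^k,P^k)\le k\,\mathrm H^2(P_r,P)$. It therefore suffices to show two things:
\begin{enumerate}[(i)]
\item the marginal bound $\mathrm H(P_r,P)\le C r^2$ for $r\le r_0$;
\item the moment bound $E[R_{(k+1)}^4\wedge 1]=O((k/n)^{4/d})$.
\end{enumerate}
For $r>r_0$ we use the trivial bound $\mathrm H^2\le 2$.

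For (i), integrate \eqref{eq:book-expansion} over $B_r$. The linear term $t^\top\zeta_1(y)$ is odd in $t$, so it integrates to zero over the symmetric ball. This gives $\int_{B_r}f(x_0+t,y)\,dt=\lambda(B_r)f(x_0,y)\{1+O(r^2\zeta_2(y))\}$. Integrating in $y$, and using $\int\zeta_2 f(x_0,\cdot)\,d\nu<\infty$ (which follows from Cauchy--Schwarz and \eqref{eq:book-L2integrability}), gives $\int_{B_r}g=\lambda(B_r)g(x_0)\{1+O(r^2)\}$. Hence $h_r(y)=p_{x_0}(y)\{1+O(r^2(\zeta_2(y)+1))\}$. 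The factor in braces is nonnegative, and $|\sqrt{1+u}-1|\le |u|$ for $u\ge -1$. Therefore
\[
\mathrm H^2(P_r,P)=\int p_{x_0}\big(\sqrt{h_r/p_{x_0}}-1\big)^2 d\nu\le C r^4\int (1+\zeta_2)^2 p_{x_0}\,d\nu=O(r^4)~.
\]
This is finite by \eqref{eq:book-L2integrability}. Two technical points need care. First, $h_r$ has no mass where $f(x_0,y)=0$; this holds because the expansion forces $f(x_0+t,y)=0$ wherever $f(x_0,y)=0$. Second, the $O(\cdot)$ must be uniform in $y$, which is exactly what Assumption~\ref{ass:book} provides.

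For (ii), $g(x_0)>0$ together with the expansion implies $F_R(r)=P(R\le r)\ge c\,r^d$ for $r\le r_0$. Write $R_{(k+1)}=F_R^{-1}(U_{(k+1)})$, where $U_{(k+1)}$ is a uniform order statistic with the $\mathrm{Beta}(k+1,n-k)$ law. On $\{U_{(k+1)}\le c r_0^d\}$ we get $R_{(k+1)}\le (U_{(k+1)}/c)^{1/d}$. By the beta moment formula and $E[U^{4/d}]\le (E[U^{\lceil 4/d\rceil}])^{(4/d)/\lceil 4/d\rceil}$, this gives $E[U_{(k+1)}^{4/d}]\le C((k+1)/(n+1))^{4/d}$. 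On the complementary event we use $P(U_{(k+1)}>c r_0^d)\le E[U_{(k+1)}^{4/d}]/(cr_0^d)^{4/d}$ and bound the Hellinger term by $2$.

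Combining the pieces,
\[
\mathrm H^2\big(\mathcal L(S_n),\mathcal L(S)\big)\le k\big(C\,E[R_{(k+1)}^4;R_{(k+1)}\le r_0]+2P(R_{(k+1)}>r_0)\big)=O\big(k(k/n)^{4/d}\big)~.
\]
All constants are free of $n$ and $k$, so the bound is uniform over $n\in\mathbf N$ and $k\in\{1,\dots,n\}$. Taking square roots gives the claimed rate $O\big(k^{1/2}(k/n)^{2/d}\big)$.

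The main obstacle is step (i). It is the only place where the specific FHR structure matters, and two things must be done carefully there. One is the cancellation of the first-order term by the symmetry of $B_r$. The other is the uniform control of the remainder, including keeping the normalizing ratio $\int_{B_r}g/(\lambda(B_r)g(x_0))$ bounded away from zero.
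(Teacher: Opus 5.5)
Your proof is correct and follows essentially the paper's route: the paper gets this theorem by combining Theorem~\ref{thm:high-level}(a) (Kaufmann--Reiss mixture representation, convexity of $\mathrm{H}^2$, the Bernoulli bound $\mathrm{H}^2(P_r^k,P^k)\lesssim k\,\mathrm{H}^2(P_r,P)$, and moment bounds for uniform order statistics) with Lemma~\ref{lem:FHR-Lp} at $p=2$, which is your step (i): the linear term cancels over the ball and $(\sqrt{1+u}-1)^2\le u^2$ gives $\mathrm{H}(P_r,P)=O(r^2)$. The differences are only technical: you control $P(R_{(k+1)}>r_0)$ by Markov's inequality applied to $U_{(k+1)}^{4/d}$ rather than the paper's exponential bound with a case split, and you bound the beta moments through integer moments plus Lyapunov rather than Gamma-function ratios.
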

Theorem~\ref{thm:FHR-352} implies that any procedure based on $S_n$ behaves asymptotically like the corresponding one based on the i.i.d.\ vector $S=(S_1,\ldots,S_k)$ in \eqref{eq:S} with common law~$P$. While powerful, this result rests critically on Assumption~\ref{ass:book}. That assumption imposes substantive restrictions on the joint density $f(x,y)$ and is often too strong for the applications we study. First, it rules out boundary points: the uniform expansion \eqref{eq:book-expansion} must hold for all $t$ in a full neighborhood of the origin, so $x_0$ is necessarily in the interior of $\mathcal X$. Regression discontinuity designs, however, condition on a known cutoff of the running variable and perform local analysis separately on observations just below and just above that cutoff, so that $x_0$ is a boundary point of the support of the running variable in each subsample; see Section \ref{sec:application} for more details. Second, Assumption~\ref{ass:book} forces the local behavior of $f(x,y)$ in $x$ to take an exponential tilt form in $y$, so that the model is locally indistinguishable from an exponential family; see Section~\ref{sec:on-assumptions}. Third, it requires that the zero-density set $\{y:f(x_0,y)=0\}$ remain exactly invariant under local perturbations of $x$, which excludes many models in which the conditional support of $Y$ varies with $X$. Finally, Theorem~\ref{thm:FHR-352} yields a rate for $\mathrm{TV}\big(\mathcal L(S_n),\mathcal L(S)\big)$ only indirectly via \eqref{eq:HvsTV}, and the resulting bound may not be sharp for total variation. These considerations motivate the next section, where we develop convergence rates for both $\mathrm{H}\big(\mathcal L(S_n),\mathcal L(S)\big)$ and $\mathrm{TV}\big(\mathcal L(S_n),\mathcal L(S)\big)$ under weaker primitive conditions that accommodate both interior and boundary points. In the Supplemental Appendix, we also make explicit the trade--off between smoothness and speed of convergence, including regimes in which the rate deteriorates and thresholds beyond which convergence fails.

\section{Rates of Convergence of Induced Ordered Statistics}\label{sec:main-results}

In this section, we study new, weaker conditions under which the laws of the induced order statistics converge to the ``ideal'' limit experiment. Our rates for $\mathrm{H}\big(\mathcal L(S_n),\mathcal L(S)\big)$ and $\mathrm{TV}\big(\mathcal L(S_n),\mathcal L(S)\big)$ hinge on the \emph{marginal} approximation errors $\mathrm{H}(P_r,P)$ and $\mathrm{TV}(P_r,P)$, where $P_r$ is the conditional law of $Y$ given $X\in B_r$ and $P$ is the conditional law of $Y$ given $X=x_0$. Importantly, we allow $x_0$ to be either an interior point or a boundary point of $\mathcal X$. Throughout the paper, we say that $x_0$ is an interior point of $\mathcal X$ if there exists $\tilde r>0$ such that $B_r\subset\mathcal X$ for all $r\in(0,\tilde r]$, and we say that $x_0$ is a boundary point of $\mathcal X$ otherwise. This distinction is required for accommodating applications in regression discontinuity designs.

Our analysis proceeds in two steps. We begin with a high-level result that translates bounds on the marginal rate of convergence of $P_r$ to $P$, together with a mild continuity condition on the density $g$ of $X$, into bounds on the joint rate of convergence of $\mathcal L(S_n)$ to $\mathcal L(S)$. We then develop primitive conditions that deliver the required marginal rates under quadratic mean differentiability (QMD) of the conditional laws at $x_0$ (see Section~\ref{sec:qmd}). A complementary smoothness framework based on Taylor/H\"older-type differentiability with explicit remainder control is deferred to the Supplemental Appendix (see Section~\ref{supp:results-Holder} therein). This separation clarifies the key ways in which our assumptions differ from, and weaken, the original condition of \citet[][Theorem 3.5.2]{falk/husler/reiss:2010}.

We begin with a mild regularity condition on the marginal density $g$ at $x_0$.

\begin{assumption}[Local regularity of $g$ at $x_0$]\label{ass:g-regular}
The model class $\mathbf Q$ consists of all laws $Q$ such that:
\vspace{-3mm}
\begin{enumerate}[(i)]
\item The marginal density $g$ of $X$ satisfies $g(x_0)>0$ and
\[
    |g(x)-g(x_0)| \le C_g \|x-x_0\|
\]
for all $x\in\mathcal X$ in a neighborhood of $x_0$ and some constant $C_g<\infty$.

\item There exist constants $C_{x_0}>0$ and $r_0>0$ such that
\[
    \mathrm{Vol}\big(B_r \cap \mathcal X\big) \ge C_{x_0} r^{d} \qquad\text{for all } 0<r<r_0~,
\]
where $\mathrm{Vol}$ denotes volume.
\end{enumerate}
\end{assumption}

Continuity of the distribution of $X$ together with positivity $g(x_0)>0$ are standard primitives in the induced order statistics literature; see, for example, \cite{reiss:89}, \cite{kaufmann/reiss:92}, and \cite{falk/husler/reiss:2010}. We strengthen these conditions slightly by imposing local Lipschitz regularity of $g$ at $x_0$ and a local thickness condition on $B_r\cap\mathcal X$, a formulation that remains compatible with $x_0$ being a boundary point. 

Although Assumption~\ref{ass:g-regular} may appear unusual at first glance, it is quite mild once one distinguishes between interior and boundary points. If $x_0$ is an interior point of $\mathcal X$, then for all sufficiently small $r$ we have $B_r\subset\mathcal X$, hence $\mathrm{Vol}(B_r \cap \mathcal X)=\mathrm{Vol}(B_r)$ is of order $r^d$, so condition~(ii) holds automatically. In this case, Assumption~\ref{ass:g-regular} reduces to condition~(i), which imposes local Lipschitz regularity of $g$ at $x_0$. If $x_0$ is a boundary point of $\mathcal X$, condition~(ii) simply requires that $\mathcal X$ occupy a nonvanishing fraction of each small ball centered at $x_0$, a property that holds, for example, when $\mathcal X$ has a Lipschitz boundary near $x_0$. Condition~(i) remains a one-sided Lipschitz condition on $g$ at $x_0$. As mentioned earlier, allowing for such boundary points is essential for regression discontinuity design applications.

Finally, as discussed in Section \ref{sec:on-assumptions}, Assumption~\ref{ass:book} implies Assumption~\ref{ass:g-regular} with $x_0$ necessarily being an interior point in the support $\mathcal X$, so the result below introduces no additional restrictions beyond those already present in Theorem~\ref{thm:FHR-352}.

\begin{theorem}\label{thm:high-level}
Let Assumption \ref{ass:g-regular} hold. Suppose that
$\mathrm{H}(P_r,P)=O(r^{a_h})$ for some $a_h\ge0$ and
$\mathrm{TV}(P_r,P)=O(r^{a_{tv}})$ for some $a_{tv}\ge0$.
Then the following bounds hold uniformly for $n\in\mathbf N$ and $k\in\{1,2,\dots,n\}$:
\begin{enumerate}[\hspace{1cm}(a)]
    \item $\mathrm{H}\big(\mathcal L(S_n),\mathcal L(S)\big) = O\big(k^{1/2}(k/n)^{a_h/d}\big)$.
    \item $\mathrm{TV}\big(\mathcal L(S_n),\mathcal L(S)\big) = O\left(\min\Big\{k(k/n)^{a_{tv}/d}, k^{1/2}(k/n)^{a_h/d}\Big\}\right)$.
\end{enumerate}
\end{theorem}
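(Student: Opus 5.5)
We will use the mixture representation \eqref{eq:prod-mixture}: by \citet[Theorem~1]{kaufmann/reiss:92}, conditional on $R_{(k+1)}=r$ the vector $S_n$ has law $P_r^k$, so $\mathcal L(S_n)=\int P_r^k\,\mathcal L(R_{(k+1)})(dr)$ while $\mathcal L(S)=P^k$. The squared Hellinger distance and the total variation distance are both jointly convex, so
\[
\mathrm{H}^2\big(\mathcal L(S_n),\mathcal L(S)\big)\le E\big[\mathrm{H}^2(P^k_{R_{(k+1)}},P^k)\big],\qquad
\mathrm{TV}\big(\mathcal L(S_n),\mathcal L(S)\big)\le E\big[\mathrm{TV}(P^k_{R_{(k+1)}},P^k)\big].
\]
For product measures we have $\mathrm{H}^2(P_r^k,P^k)\le k\,\mathrm{H}^2(P_r,P)$, via $1-\rho^k\le k(1-\rho)$ for the affinity $\rho$ (with the normalization $\mathrm H^2=1-\rho$, or up to a constant otherwise). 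We also have $\mathrm{TV}(P_r^k,P^k)\le k\,\mathrm{TV}(P_r,P)$ by telescoping. Both distances are bounded by a constant. Hence, for a small fixed $r_1$,
\[
\mathrm{H}^2 \lesssim k\,E\big[R_{(k+1)}^{2a_h}\big]+P(R_{(k+1)}>r_1),\qquad
\mathrm{TV}\lesssim k\,E\big[R_{(k+1)}^{a_{tv}}\big]+P(R_{(k+1)}>r_1).
\]
The second TV bound, $k^{1/2}(k/n)^{a_h/d}$, will then follow from part (a) together with \eqref{eq:HvsTV}.

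The key step is the moment bound $E[R_{(k+1)}^{s}\wedge r_1^s]\lesssim ((k+1)/n)^{s/d}$ for $s\ge0$, uniformly in $k\le n$, together with an exponential bound on $P(R_{(k+1)}>r_1)$. Let $F(r)=P(R<r)=\int_{B_r\cap\mathcal X}g$. Assumption~\ref{ass:g-regular} gives $F(r)\ge c\,r^d$ for $r<r_0$: by (i), $g\ge g(x_0)/2$ on small balls, and by (ii), $\mathrm{Vol}(B_r\cap\mathcal X)\ge C_{x_0}r^d$. We then write $R_{(k+1)}=F^{-1}(U_{(k+1)})$, where $U_{(k+1)}$ is the $(k+1)$-th uniform order statistic, which is Beta$(k+1,n-k)$ distributed. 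On the event $\{U_{(k+1)}\le F(r_1)\}$ we get $R_{(k+1)}\le (U_{(k+1)}/c)^{1/d}$, so $E[R_{(k+1)}^s;\,\cdot\,]\le c^{-s/d}E[U_{(k+1)}^{s/d}]$. The Beta moment formula $E U_{(k+1)}^{t}=\Gamma(k+1+t)\Gamma(n+1)/(\Gamma(k+1)\Gamma(n+1+t))\lesssim ((k+1)/n)^{t}$ holds uniformly. On the complement, $P(U_{(k+1)}>F(r_1))$ is a binomial tail, which is exponentially small in $n$ when $k/n$ is small. When $k/n$ is bounded below, the claimed rates are at least of order $k^{1/2}$ or $k$, which is $\gtrsim1$, so they hold trivially because the distances are bounded.

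Combining the pieces, we get $\mathrm{H}^2\lesssim k\,(k/n)^{2a_h/d}+e^{-cn}$. The exponential term is absorbed, since $k(k/n)^{2a_h/d}\ge n^{-2a_h/d}$ dominates $e^{-cn}$ up to constants. Taking square roots yields (a). The same argument with exponent $a_{tv}$ gives $k(k/n)^{a_{tv}/d}$, and (b) follows by taking the minimum with the rate from (a). Because only $((k+1)/n)\le 2k/n$ enters, all constants are uniform in $n$ and $k\le n$.

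I expect the main obstacle to be the uniform control of $R_{(k+1)}$: the lower bound on $F$ near a possibly boundary $x_0$, and the handling of large $R_{(k+1)}$ where the marginal rate assumptions apply only for small $r$. I would first try the quantile-transform/Beta-moment route described above, and fall back on a Chernoff bound for the binomial count $\#\{i:R_i<r\}$ if the Beta computation becomes messy.
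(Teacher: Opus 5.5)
Your proposal is correct and follows essentially the same route as the paper. Both arguments use the same steps:
\begin{enumerate}[(i)]
\item the Kaufmann--Reiss mixture representation;
\item convexity of $\mathrm{H}^2$ and $\mathrm{TV}$;
\item the product bounds $\mathrm{H}^2(P_r^k,P^k)\le k\,\mathrm{H}^2(P_r,P)$ and $\mathrm{TV}(P_r^k,P^k)\le k\,\mathrm{TV}(P_r,P)$;
\item a split at a small $r_1$;
\item the quantile transform with Beta moments of $U_{(k+1)}$;
\item an exponential tail bound when $k/n$ is small and the trivial bound otherwise.
\end{enumerate}
Your direct use of $F^{-1}(u)\le (u/c)^{1/d}$ on the event $\{U_{(k+1)}\le F(r_1)\}$ is slightly cleaner than the paper's corresponding step. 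That step reverses the direction of Lemma~\ref{lem:Fr}'s bound, which gives $F(r_1)\ge C_L r_1^d$, not $\le$.
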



The proof of this result parallels that of Theorem~\ref{thm:FHR-352}, but differs in several important respects. First, Theorem~\ref{thm:high-level} is formulated directly in terms of the \emph{marginal} convergence rates $\mathrm{H}(P_r,P)=O(r^{a_h})$ and $\mathrm{TV}(P_r,P)=O(r^{a_{tv}})$, making explicit how smoothness of the conditional distribution of $Y$ near $x_0$ feeds into the \emph{joint} rates of convergence of the induced order statistics. Second, the theorem provides an explicit rate for $\mathrm{TV}\big(\mathcal L(S_n),\mathcal L(S)\big)$ alongside the Hellinger bound, thereby allowing the two metrics to be compared on equal footing. Third, and more subtly, the joint total variation rate depends on \emph{both} marginal exponents $a_h$ and $a_{tv}$. This reflects a structural bottleneck: the total variation rate can be controlled either directly through the marginal total variation discrepancy, or indirectly via the inequality $\mathrm{TV}\le\sqrt{2} \mathrm{H}$ in \eqref{eq:HvsTV}. Consequently, even if the marginal total variation distance decays very rapidly ($a_{tv}$ is large), this does not relax the growth restriction on ($k$) needed for joint total variation convergence beyond what is already imposed by the Hellinger channel. In contrast, the joint Hellinger rate depends only on the marginal exponent $a_h$ and cannot be improved via the inequality $\mathrm{H}^2\le\mathrm{TV}$. This distinction is structural and follows from the exact tensorization of the Hellinger affinity over product measures, which implies that joint Hellinger convergence is fully determined by marginal Hellinger rates. Consequently, improvements in marginal total variation smoothness beyond what is implied by $a_h$ do not translate into faster joint Hellinger convergence. The primitive conditions developed in this paper determine when the exponents $a_h$ and $a_{tv}$ equal one, two, or other values, thereby tracing precisely how marginal smoothness translates into joint convergence rates for the induced order statistics.

We note that under Assumption~\ref{ass:book} and $g(x_0)>0$, Lemma~\ref{lem:FHR-Lp} implies $\mathrm{H}(P_r,P)=O(r^2)$ and $\mathrm{TV}(P_r,P)=O(r^2)$. In this case, part~(a) of Theorem~\ref{thm:high-level} reproduces exactly the rate in Theorem~\ref{thm:FHR-352}, illustrating that the two results are consistent.

\subsection{Differentiability in Quadratic Mean}\label{sec:qmd}

In this section, we develop marginal approximation rates for the conditional densities $p_x$ at $x_0$ under QMD. We show that QMD yields linear marginal convergence of $P_r$ to $P$ in both Hellinger and total variation distances. We also study sharpness and the limits of possible improvements over the linear rate, distinguishing what can and cannot be strengthened at boundary points versus interior points. These marginal results feed into joint convergence bounds for $\mathcal L(S_n)$ through Theorem~\ref{thm:high-level}.

\begin{assumption}[QMD at $x_0$]\label{ass:QMD}
The model class $\mathbf Q$ consists of all laws $Q$ for which there exists a measurable score $\dot\ell_{x_0}:\mathbf R^m\to\mathbf R^d$ with
\[
    \int \|\dot\ell_{x_0}(y)\|^2 p_{x_0}(y) \nu(dy)<\infty~,
\]
such that, as $t\to 0$ in $\mathbf R^d$ and, for all $x_0 + t \in \mathcal X$,
\[
    \int \Big[ \sqrt{p_{x_0+t}(y)}-\sqrt{p_{x_0}(y)}-\tfrac12\big(t^\top\dot\ell_{x_0}(y)\big)\sqrt{p_{x_0}(y)} \Big]^2 \nu(dy)
= o(\|t\|^2)~.
\]
\end{assumption}

QMD is a central smoothness condition in asymptotic statistics. It underlies the local asymptotic normality (LAN) framework and plays a key role in the asymptotic theory of likelihood–based procedures (see, e.g., \cite{lecam:86}, \cite{vandervaart:00}). Beyond likelihood theory, QMD also supports diffusion/LAN approximations in statistical decision problems, such as optimal design and inference for adaptive experiments and bandits, by justifying Gaussian limit experiments and the associated risk analyses; \cite{adusumilli:bandits}. 

Lemma~\ref{lem:book-implies-QMD} in the supplemental appendix shows that Assumption~\ref{ass:book} implies Assumption~\ref{ass:QMD}, but not conversely. Section~\ref{sec:on-assumptions} provides a detailed comparison of the two sets of primitive conditions, highlighting the distinct restrictions they impose on $\mathbf Q$ and the implications for the behavior of $P_r$. The next theorem gives the linear marginal rates under QMD and characterizes sharpness and the limits of uniform improvements.

\begin{theorem}\label{thm:H-QMD}
    Suppose Assumptions~\ref{ass:g-regular}--\ref{ass:QMD} hold. Then,
    \begin{equation}\label{eq:H-TV-QMD}
        \mathrm{H}(P_r,P)=O(r)
        \quad\text{and}\quad
        \mathrm{TV}(P_r,P)=O(r)~.    
    \end{equation}
    Moreover:
    \begin{enumerate}[(i)]
    \item \emph{Boundary sharpness.} If $x_0$ lies on the boundary of $\mathcal X$, then the rate in \eqref{eq:H-TV-QMD} is sharp in the sense that there exist a data-generating process $Q\in\mathbf Q$ and constants $C>0$ and $\bar r>0$ such that
    \[
    \mathrm{H}(P_r,P)\ge C r
    \quad\text{and}\quad 
    \mathrm{TV}(P_r,P)\ge C r
    \quad\text{for all } r\in(0,\bar r)~.
    \]
    
      \item \emph{Interior points: no uniform polynomial improvement.} For any $\tilde{r}>0$, let $\mathbf{Q}^{o}\left( \tilde{r}\right) $ be the set of all distributions $Q$ that satisfy Assumptions \ref{ass:g-regular} and \ref{ass:QMD}, and for which $B_{r}\cap \mathcal{X} =B_{r}$ for all $r\in \left( 0,\tilde{r}\right) $. Then, for any $\tilde{r}>0$ and $\varepsilon >0$, there do not exist constants $C<\infty $ and $\bar{r}\in \left( 0,\tilde{r}\right) $ such that  
    \begin{equation}
    \sup_{Q\in \mathbf{Q}^{o}\left( \tilde{r}\right) }\mathrm{H}(P_{r},P) \leq Cr^{1+\varepsilon }~~~\text{for all}~r\in \left( 0,\bar{r}\right)~,
    \label{eq:no-polynomial-H}
    \end{equation}
    and there do not exist constants $C<\infty $ and $\bar{r}\in \left( 0,\tilde{r}\right) $ such that
    \begin{equation}
    \sup_{Q\in \mathbf{Q}^{o}\left( \tilde{r} \right) }\mathrm{TV}(P_{r},P) \leq Cr^{1+\varepsilon }~~~\text{for all}~r\in \left( 0,\bar{r}\right) ~.
    \label{eq:no-polynomial-TV}
    \end{equation}
    \end{enumerate}
\end{theorem}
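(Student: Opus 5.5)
The plan has three parts: the upper bound, boundary sharpness, and the absence of a uniform improvement at interior points.

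\emph{Upper bound.} Write $G_r:=\int_{B_r\cap\mathcal X} g(x)\,dx$ and the weights $w_r(x):=g(x)/G_r$ on $B_r\cap\mathcal X$. Then $h_r=\int w_r(x)\,p_x\,dx$ is a mixture of the conditional densities $p_x$ with $\|x-x_0\|<r$. Since $\mathrm{H}^2$ is jointly convex, as an $f$-divergence, we get
\[
\mathrm{H}^2(P_r,P)\le \int w_r(x)\,\mathrm{H}^2(p_x,p_{x_0})\,dx \le \sup_{\|t\|<r,\,x_0+t\in\mathcal X}\mathrm{H}^2(p_{x_0+t},p_{x_0}).
\]
By Assumption~\ref{ass:QMD} and the triangle inequality in $L^2(\nu)$,
\[
\|\sqrt{p_{x_0+t}}-\sqrt{p_{x_0}}\|_{L^2(\nu)}\le \tfrac12\|t\|\,\big(\textstyle\int\|\dot\ell_{x_0}\|^2p_{x_0}\,d\nu\big)^{1/2}+o(\|t\|).
\]
Hence $\mathrm{H}(p_{x_0+t},p_{x_0})=O(\|t\|)$ uniformly for small $t$, and so $\mathrm{H}(P_r,P)=O(r)$. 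The total variation bound follows the same way from convexity of TV, or directly from $\mathrm{TV}\le\sqrt2\,\mathrm{H}$ in \eqref{eq:HvsTV}. For this part, Assumption~\ref{ass:g-regular} is needed only to guarantee $G_r>0$, which follows from $g(x_0)>0$, continuity of $g$, and part (ii).

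\emph{Boundary sharpness.} I would use a one-dimensional example with $d=m=1$, $\mathcal X=[0,1]$, $x_0=0$ and $X\sim U[0,1]$, so that $g\equiv1$ and Assumption~\ref{ass:g-regular} holds. Take $Y\mid X=x\sim N(x,1)$, a QMD location family with score $y$. Then $P_r$ is the mixture of $N(x,1)$ over $x\sim U[0,r]$, whose mean is $r/2$. The mean $\int y\,dP$ is not bounded, so I first bound the mean shift through a bounded test function: with $\phi(y)=\mathrm{sign}(y)$ or $\Phi$-type functions, $|E_{P_r}\phi-E_P\phi|\ge c\,r$. Since $\mathrm{TV}\ge \tfrac12|E_{P_r}\phi-E_P\phi|$ for $|\phi|\le1$, this gives $\mathrm{TV}(P_r,P)\ge Cr$. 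The Hellinger bound then follows from $\mathrm{H}\ge \mathrm{TV}/\sqrt2$. The mechanism is that at a boundary the mixture is one-sided, so the first-order term $\int_0^r x\,dx/r\asymp r$ does not cancel.

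\emph{Interior points.} The idea is that symmetric averaging cancels the linear term, so any failure must come from the $o(\|t\|)$ remainder, which can be made arbitrarily slow. I would fix $\varepsilon$, $\tilde r$, $C$ and $\bar r$ and construct a sequence, or a single family, in $\mathbf Q^o(\tilde r)$ that violates the bound. The simplest route is a family $Q_\lambda$ with $d=m=1$, $X\sim U[-1,1]$, $x_0=0$, and $Y\mid X=x\sim N(\mu_\lambda(x),1)$, where $\mu_\lambda(x)=\lambda|x|$. This is not QMD at $0$, so it fails. I would instead take $\mu(x)=|x|^{1+\varepsilon/2}$ or $\mu(x)=x^2/\ldots$. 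With $\mu(x)=|x|^{1+\delta}$ for $0<\delta<\varepsilon$, $\mu$ is differentiable at $0$ with zero derivative, so QMD holds with score $0$. The mixture $P_r$ has mean $\asymp r^{1+\delta}$, and by the bounded-test-function argument, $\mathrm{TV}(P_r,P)\ge c\,r^{1+\delta}$, which exceeds $Cr^{1+\varepsilon}$ for small $r$. This already gives the claim even for a single $Q$, and a fortiori for the supremum; the Hellinger statement again follows from $\mathrm{H}\ge\mathrm{TV}/\sqrt2$.

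The main obstacle is making the lower bounds rigorous. For the mixture, I would choose $\phi(y)=\mathbf 1\{y>0\}$ and show that $r\mapsto E_{P_r}\phi=\int w_r(x)\,\Phi(\mu(x))\,dx$ differs from $\Phi(0)=1/2$ by at least $c\cdot\mathrm{avg}_{B_r}\mu$. This follows from $\Phi(u)-\tfrac12\ge \varphi(1)\,u$ for $u\in[0,1]$, since $\mu\ge0$ is small near $0$.
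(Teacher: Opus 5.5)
Your proposal is correct. For the upper bound and part (i) it follows the paper's route, and part (ii) uses a different construction. For the upper bound, the paper also applies the convexity lemma to $P_r=\int P_x\,\mu_r(dx)$ and bounds $\mathrm{H}^2(P_x,P)$ by $C\|x-x_0\|^2$. For part (i), its example is the same Gaussian shift in the first coordinate, with $X$ uniform on $[0,1]^d$ and the event $\{y_1>0\}$.

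Your details are leaner in three places:
\begin{enumerate}[(a)]
\item The triangle inequality in $L^2(\nu)$ replaces the exact quadratic expansion of $D(x)$ in Lemma~\ref{lem:Dx}.
\item Bounding the mixture by a supremum over $B_r\cap\mathcal X$ avoids the bound $g\le g(x_0)+C_gr$ and the lower bound $F(r)\ge C_Lr^d$ used in step (4) of the paper.
\item The inequality $\Phi(u)-\tfrac12\ge\varphi(1)\,u$ on $[0,1]$ replaces the paper's derivative computation for $r\mapsto P_r(A)$.
\end{enumerate}

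In part (ii) you use a Gaussian location model with mean $|x|^{1+\delta}$, $0<\delta<\varepsilon$. It is QMD at $x_0$ with zero score, since the remainder is $2(1-e^{-\mu(x)^2/8})\asymp|x|^{2+2\delta}$, and it gives $\mathrm{TV}(P_r,P)\ge c\,r^{1+\delta}$. The paper instead uses a binary $Y$ with $P(Y=e_1\mid X=x)=\tfrac12+\|x\|/(1-\ln\|x\|)$, which gives $\mathrm{TV}(P_r,P)\gtrsim r/(1-\ln(r/2))$. Your $Q$ depends on $\varepsilon$. That suffices, because $\varepsilon$ is fixed before the construction. The paper's single $Q$ defeats every $\varepsilon>0$ at once, and it shows that the $o(r)$ improvement available at interior points (Lemma~\ref{lem:QMD-TV-or}) can be merely logarithmic.

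Two small repairs are needed:
\begin{enumerate}[(a)]
\item Since $\tilde r$ is arbitrary, $X\sim U[-1,1]$ belongs to $\mathbf Q^o(\tilde r)$ only if $\tilde r\le1$. Take $X$ uniform on $[-R,R]$ with $R\ge\tilde r$ (the paper uses $\mathrm{Unif}(B_{\tilde r})$); this changes nothing for $r\le1$.
\item The dimensions $d,m$ are fixed by the setup, so both examples should be written in $\mathbf R^d\times\mathbf R^m$. For part (i) take $Y\mid X=x\sim N(\mu(x)e_1,\mathbf I_m)$ with $\mu(x)=x_1$ and $X$ uniform on $[0,1]^d$. For part (ii) take $\mu(x)=\|x\|^{1+\delta}$ and $X$ uniform on a ball. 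Your bounds then carry over using $E[X_1\mid X\in B_r\cap[0,1]^d]\asymp r$ and $E[\|X\|^{1+\delta}\mid X\in B_r]\asymp r^{1+\delta}$.
\end{enumerate}
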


Theorem~\ref{thm:H-QMD} implies that under QMD the marginal discrepancy is of order $O(r)$ in both $\mathrm{H}(P_r,P)$ and $\mathrm{TV}(P_r,P)$, and this order does not hinge on whether $x_0$ is an interior point or a boundary point of $\mathcal X$. Because sharpness is established via a boundary-point construction, one might suspect that interior points deliver uniformly faster rates. The theorem clarifies the limits of this intuition: while for interior points allow for $\mathrm{TV}(P_r,P)=o(r)$, no polynomial improvement over the $O(r)$ bound can be obtained uniformly over the QMD model class. This no-polynomial-improvement conclusion also applies to the Hellinger distance, though, in that case, we do not obtain an analogous $o(r)$ refinement.

Combining \eqref{eq:H-TV-QMD} in Theorem~\ref{thm:H-QMD} with Theorem~\ref{thm:high-level} yields
\begin{equation*}
   \mathrm{H}\big(\mathcal L(S_n),\mathcal L(S)\big) = O\big(k^{1/2}(k/n)^{1/d}\big),
   \qquad
   \mathrm{TV}\big(\mathcal L(S_n),\mathcal L(S)\big) = O\big(k^{1/2}(k/n)^{1/d}\big)~.
\end{equation*}
Here $a_h = a_{tv} = 1$, so the relevant bound for $\mathrm{TV}\big(\mathcal L(S_n),\mathcal L(S)\big)$ is provided by the inequality $\mathrm{TV} \le \sqrt{2}\mathrm{H}$, and both distances share the same joint rate. This rate directly determines how fast $k$ may grow relative to $n$ while still ensuring convergence of $\mathcal L(S_n)$ to $\mathcal L(S)$:
\[
    k^{1/2}(k/n)^{1/d} \to 0      \qquad\Longleftrightarrow\qquad   k = o\big(n^{2/(2+d)}\big)~.
\]
In particular, for $d=1$ this becomes $k = o(n^{2/3})$. This threshold coincides with the growth conditions derived, through different arguments, in \cite{bugni/canay:21} and \cite{bugni/canay/kim:2025} for the specific uses of induced order statistics in those papers.

Theorem~\ref{thm:H-QMD} delivers a simple benchmark rate. While we view this result as clean and widely applicable, it does not describe how stronger or weaker smoothness assumptions translate into faster or slower rates. Appendix~\ref{supp:results-Holder} provides complementary results. Under the Taylor/H\"older remainder condition of the joint density in Assumption~\ref{ass:smoothness-holder}, Theorem~\ref{thm:Holder-TV-H} delivers a range of marginal rates through explicit exponents $(a_h,a_{tv})$ that depend on the smoothness order $\kappa_{\rm s}+\kappa_{\rm r}$. As $\kappa_{\rm s}+\kappa_{\rm r}$ decreases, these exponents decrease, so the marginal convergence slows down, and the joint IOS approximation becomes more demanding through Theorem~\ref{thm:high-level}. For example, in the H\"older continuity case $\kappa_{\rm s}=0$, the theorem gives $\mathrm{H}(P_r,P)=O(r^{\kappa_{\rm r}/2})$ and $\mathrm{TV}(P_r,P)=O(r^{\kappa_{\rm r}})$, so the rate can be arbitrarily slow as $\kappa_{\rm r}\downarrow 0$. In such cases, IOS--based approximations can break down regardless of the speed at which $k$ grows.

\begin{remark}\label{rem:smoothness-limits}
Theorem~\ref{thm:H-QMD} is our default marginal rate result. It applies under a standard condition and delivers the same $O(r)$ order for $\mathrm{H}(P_r,P)$ and $\mathrm{TV}(P_r,P)$ whether $x_0$ is an interior point or a boundary point of $\mathcal X$, which is the relevant case in our regression discontinuity applications. Theorem~\ref{thm:Holder-TV-H} is included as a complementary result in the supplementary appendix. Under Assumption~\ref{ass:smoothness-holder}, the marginal exponents $(a_h,a_{tv})$ depend on the geometry term $\eta$ and can be smaller at boundary points than at interior points; thus, when boundary behavior is central, this framework can yield slower marginal rates than QMD. At the same time, Assumptions~\ref{ass:QMD} and \ref{ass:smoothness-holder} are non-nested, so the Taylor/H\"older result is not presented as imposing weaker restrictions on $\mathbf Q$. Rather, it makes explicit how smoothness controls marginal rates and clarifies when convergence fails altogether.
\end{remark}

\section{Applications}\label{sec:application}

We start this section by describing how the general framework developed in this paper applies to the permutation test based on induced order statistics proposed in \cite{canay/kamat:18}. That paper considers a sharp regression discontinuity design (RDD) in which treatment is assigned when a real-valued running variable $X$ crosses a known cutoff (normalized to zero), and focuses on a popular testable implication: that the distribution of baseline covariates is continuous at the cutoff. Formally, their null hypothesis is that the conditional law of a baseline covariate $Y$ is the same when approaching the cutoff from the left and from the right,
\[
H_0:\ \mathcal L(Y\mid X=0^-) = \mathcal L(Y\mid X=0^+)~,
\]
which is the distributional analogue of the usual continuity-of-means checks routinely reported in empirical applications of RDD. Here, $0^+$ denotes the limit as $x\downarrow 0$ and $0^-$ the limit as $x\uparrow 0$ of the conditional laws $\mathcal L(Y\mid X=x)$. Under $H_0$, the $Y$ observations whose running variable lies closest to the cutoff on either side should be approximately exchangeable, and \cite{canay/kamat:18} exploit this by constructing a permutation test based on induced order statistics formed by the $q$ nearest neighbors to the left and the $q$ nearest neighbors to the right of the cutoff.

In the notation of this paper, the $2q$ induced order statistics used by \cite{canay/kamat:18} correspond to the special case $x_0=0$ and $k=2q$, where we separately order the observations with $X_i<0$ and $X_i>0$ by their distance to the cutoff. Let $\iota^{-}_n(1),\ldots,\iota^{-}_n(q)$ denote the indices of the $q$ observations with $X_i<0$ whose values are closest to $0$, listed in the original sample order; and similarly let $\iota^{+}_n(1),\ldots,\iota^{+}_n(q)$ denote the indices of the $q$ observations with $X_i>0$ closest to $0$. The induced order statistics used in the test can then be written as
\[
    S_n = (S_{n,1},\ldots,S_{n,k}) = \big(Y_{\iota^{-}_n(1)},\ldots,Y_{\iota^{-}_n(q)},\,
        Y_{\iota^{+}_n(1)},\ldots,Y_{\iota^{+}_n(q)}\big)~.
\]
The test statistic in \cite{canay/kamat:18} is the Cram\'er--von Mises functional $T(S_n)$ in \eqref{eq:CvM-statistic} that compares the empirical distribution functions constructed from the left and right subsamples. Critical values are obtained via permutation, based on the permuted vectors $S^{\pi}_{n} = (S_{n,\pi(1)},\ldots,S_{n,\pi(k)})$. We denote their resulting 
test by $\phi(S_n)$. 

\cite{canay/kamat:18} establish the asymptotic validity of their approximate permutation test under an asymptotic framework in which $k$ is held fixed as $n\to\infty$, and they do not analyze regimes in which $k$ grows with $n$. Since $E_Q[\phi(S)]\le \alpha$ under their assumptions, the bound in \eqref{eq:TVbound}, together with Theorems~\ref{thm:high-level} and \ref{thm:H-QMD} (applied separately to the left and right blocks of size $q$), and the fact that $d=1$, yield
\[
   E_Q[\phi(S_n)] \le  \alpha + \mathrm{TV}\big(\mathcal L(S_n),\mathcal L(S)\big)
        = \alpha + O\big(k^{3/2}/n\big)~,
\]
for $k=2q$. Note that in this application, the conditioning points $x_0=0^{-}$ and $x_0=0^{+}$ lie on the boundary of the support of $X$ in the corresponding subsamples. This feature, which is intrinsic to RDDs, underscores the importance of Assumption~\ref{ass:g-regular} for allowing boundary points. It follows that their permutation test remains asymptotically valid provided
\begin{equation}\label{eq:q-n23}
    q = o\big(n^{2/3}\big)~,
\end{equation}
recalling that $k=2q$ in their notation. Hence, the machinery developed in this paper delivers a simple and transparent characterization of how quickly $q$ may grow while maintaining asymptotic size control for the test in
\cite{canay/kamat:18}.

\begin{remark}\label{rem:rot}
The rule of thumb proposed in \cite{canay/kamat:18} selects $q$ according to the rate $n^{0.9}/\log n$. This choice was motivated by the asymptotic framework in that paper (where $q$ is fixed as $n\to\infty$) and by a heuristic argument based on the bivariate normal model for $(Y,X)$. The results developed here clarify how this rule should be modified when one allows $q$ to grow with $n$. In particular, under Assumptions~\ref{ass:g-regular}--\ref{ass:QMD}, Theorem~\ref{thm:H-QMD} implies that asymptotic size control requires the rate in \eqref{eq:q-n23}. Consequently, the rule of thumb in \cite{canay/kamat:18} does not satisfy the rate restriction needed for validity when $q$ diverges. More importantly, the rate $n^{0.9}$ is not attainable under any of the smoothness regimes considered here, including the alternative smoothness regimes considered in the supplemental appendix. A feasible data-dependent rule may therefore be constructed by replacing the exponent $0.9$ with any $\gamma<2/3$ while retaining the same normalization and constants used in the original paper; that is,
\[
    q_{\rm rot} = C_{\rm ck} n^{\gamma},
\]
where $C_{\rm ck}$ denotes the constant proposed in \cite{canay/kamat:18}.
\end{remark}

\begin{remark}
Beyond \cite{canay/kamat:18}, several related papers employ IOS--based procedures to test continuity of a density in RDDs \citep{bugni/canay:21}, to test discontinuities in event studies in \citep{bugni2023permutation}, or to test for conditional stochastic dominance in RDD and related settings \citep{bugni/canay/kim:2025,goldman/kaplan:2018}. While we do not pursue these applications in detail, our results apply directly to the analysis of such tests and clarify the assumptions required to ensure their validity when the number of induced order statistics $k$ is allowed to grow with the sample size $n$.
\end{remark}

We next illustrate how our main results can be used to justify normal approximations for a broad class of IOS--based estimators (also known as $k$-nearest-neighbor estimators). Let $P$ denote the conditional law of $Y$ given $X=x_0$, and let $\theta(P)\in\mathbf R^\ell$ be the target parameter of interest. Examples include mean-type functionals $\theta(P)=E_P[Y]$ and quantile-type functionals defined by $P(Y\le \theta(P))=q$ for a fixed $q\in(0,1)$. We consider IOS--based estimators of $\theta(P)$ of the form $\psi(S_n)$, where $S_n=(S_{n,1},\ldots,S_{n,k})$ denotes the $k$ induced order statistics and where $\psi:\mathbf R^k\to\mathbf R^\ell$ is a Borel-measurable map. Let $S=(S_1,\ldots,S_k)$ be the infeasible benchmark consisting of $k$ i.i.d.\ draws from $P$. Under suitable conditions, standard normal approximation arguments imply that
\[
\sup_{t\in\mathbf R^\ell}\Big|
P^k\big(\sqrt{k}\,(\psi(S)-\theta(P))\le t\big)-\Phi_{\Sigma}(t)
\Big|
=O(k^{-1/2})
\]
where $\Phi_{\Sigma}(t)$ is the normal distribution function with mean zero and covariance matrix $\Sigma$. 

Combining the i.i.d.\ normal approximation for the infeasible benchmark $S$ with a total variation bound yields asymptotic normality of the IOS--based estimator $\psi(S_n)$. To see this, for each $t\in\mathbf R^\ell$, let $A_t= (-\infty, t/\sqrt{k}+\theta(P)]$ with the inequality interpreted componentwise. Since $\psi$ is Borel measurable, 
\begin{align*}
\Big|
Q^n\big(S_n \in \psi^{-1}(A_t)\big)-P^k\big(S \in \psi^{-1}(A_t)\big)
\Big|
\le \mathrm{TV}\big(\mathcal L(S_n),\mathcal L(S)\big)
\end{align*}
where $Q$ is the law of $(X,Y)$. Taking the supremum over $t$ and using the $O(k^{-1/2})$ normal approximation for $\psi(S)$ therefore gives
\[
\sup_{t\in\mathbf R^\ell}\Big|
Q^n\big(\sqrt{k}\,(\psi(S_n)-\theta(P))\le t\big)-\Phi_{\Sigma}(t)
\Big|
\le \mathrm{TV}\big(\mathcal L(S_n),\mathcal L(S)\big)+O(k^{-1/2})~.
\]
Consequently, $\psi(S_n)$ has the same asymptotic normal limit as the infeasible i.i.d.\ benchmark provided the total variation distance $\mathrm{TV}(\mathcal L(S_n),\mathcal L(S))$ vanishes as $n$ grows. Under Assumptions~\ref{ass:g-regular}--\ref{ass:QMD}, $\psi(S_n)$ is asymptotically normal provided
\[
k=o \Big(n^{\tfrac{2}{d+2}}\Big)~.
\]

While our analysis is motivated by IOS--based procedures, the results also speak to settings in which local conditional laws are approximated using shrinking neighborhoods around a conditioning point. One example arises in conditional stochastic optimization with side information, where a decision maker seeks to choose an action $z\in\mathcal Z$ (e.g., an order quantity, a portfolio allocation, or a policy parameter) that minimizes the conditional expected cost
\[
	\inf_{z\in\mathcal Z} E_Q\big[c(z,Y,X)\mid X=x_0\big]
	=
	\inf_{z\in\mathcal Z} E_P\big[c(z,Y,x_0)\big]~,
\]
with $P$ denoting the conditional law of $Y$ given $X=x_0$. A recent contribution by \cite{esteban/morales:22} studies this problem using a distributionally robust formulation, in which decisions are chosen to be robust against misspecification of the conditional law within a neighborhood of an empirical approximation constructed from observations with $X$ close to $x_0$. Two key tuning parameters in their approach are a trimming fraction $\alpha_n$ and a tolerance radius $\rho_n$, which controls the size of the distributional neighborhood over which robustness is imposed. The feasibility of their robust optimization problem, and hence the validity of their finite-sample guarantee, requires $\rho_n$ to decay at a rate that is sufficiently slow relative to the discrepancy between the local empirical conditional law and the target law $P$. A key ingredient in their analysis is a bound on this discrepancy, measured in Hellinger distance. Their argument implies that if $\mathrm{H}(P_r,P)=O(r^{a})$ and $r\asymp \alpha_n^{1/m}$, then feasibility requires $\rho_n$ not to decay faster than $\rho_n \gtrsim \alpha_n^{\min\{1,a\}/m}$ (up to lower–order terms). Under the smoothness condition of \cite{falk/husler/reiss:2010} one has $a=2$, while under Assumption~\ref{ass:QMD} one has $a=1$, so in both cases $\min\{1,a\}=1$ and the feasibility lower bound scales as $\rho_n \gtrsim \alpha_n^{1/m}$. Thus, QMD preserves the same admissible scaling for $\rho_n$ as the classical framework while requiring substantially weaker local structure (and allowing boundary points). A genuinely different scaling for $\rho_n$ arises only under rougher regimes with $a<1$, as in the alternative smoothness conditions in the Supplemental Appendix (Section~\ref{supp:results-Holder}).

Taken together, these examples, RDDs, $k$–nearest-neighbor methods, and distributionally robust optimization illustrate how our results apply to both IOS–based procedures and to a broader class of problems that rely on local conditional distributions.

\section{Understanding the Assumptions}\label{sec:on-assumptions}

In this section, we take a closer look at the structure of the assumptions introduced earlier, with the goal of isolating the features that determine the achievable rates of convergence. Notably, Assumption~\ref{ass:book} in \cite{falk/husler/reiss:2010} implies the approximation error $\mathrm{H}(P_r,P)=O(r^{2})$, a rate that is substantially faster than the feasible rates delivered by Theorem~\ref{thm:H-QMD}, which are sharp within the class of models satisfying its assumptions. Here, we highlight the key factors that determine this discrepancy in rates.

We start with a closer look at Assumption~\ref{ass:book}. For simplicity, let $d=1$. Fix $y$ and note that \eqref{eq:book-expansion} implies that 
\[
\frac{f(x_0+t,y)-f(x_0,y)}{t}
    = f(x_0,y)
      \left\{
        \zeta_1(y)
        + \frac{O(t^{2}\zeta_2(y))}{t}
      \right\}~.
\]
As $t\to 0$, the remainder term satisfies $O(t\zeta_2(y))\to 0$, so taking the limit yields
\[
\frac{\partial f(x_0,y)}{\partial x}
    = f(x_0,y)\,\zeta_1(y)
    \quad \text{and} \quad \frac{\partial}{\partial x}\log f(x_0,y)=\zeta_1(y)~.
\]
Thus, wherever $f(x_0,y)>0$, Assumption~\ref{ass:book} pins down the local score at $x_0$ as a function of $y$ alone. Rewriting the expansion more explicitly,
\begin{align*}
f(x_0+t,y) &= f(x_0,y)\exp \left\{\log \big(1+t\zeta_1(y)+O(t^2\zeta_2(y))\big)\right\}\\
        &\overset{(1)}{=} f(x_0,y)\exp \left\{t\zeta_1(y)+O\big(t^2\zeta_2(y)\big)+O\big(t^2\zeta^2_1(y)\big)\right\},
\end{align*}
where (1) holds by the expansion $\log(1+u)=u+O(u^2)$. Dropping second--order terms gives the leading approximation
\[
    f(x_0+t,y)\approx f(x_0,y)\exp\{t\zeta_1(y)\}~,
\]
which has the exponential tilt form in $y$ with statistic $\zeta_1(y)$. Assumption~\ref{ass:book} thus forces the local joint density to evolve in $x$ \emph{as if} through an exponential tilt in $y$, making the model locally indistinguishable, to first order, from the exponential family generated by $\zeta_1(y)$.

There are two immediate consequences of this analysis. First, Assumption~\ref{ass:book} rules out the possibility that $x_0$ is a boundary point of the support of $X$. This is because the uniform expansion in \eqref{eq:book-expansion} must hold for all perturbations $t$ in a full neighborhood of $x_0$, which implies that $B_r\subset\mathcal X$ for all sufficiently small $r$. Second, Assumption~\ref{ass:book} imposes a local invariance of the zero--density region,
\begin{equation}\label{eq:N-set}
    \mathcal N := \{\,y \in \mathcal Y : f(x_0,y)=0 \}~,
\end{equation}
where $\mathcal Y$ denotes the unconditional support of $Y$. If $f(x_0,y)=0$, then $f(x_0+t,y)=0$ for all sufficiently small $t$. Consequently, any model in which new $y$--values enter the support of $f(x,\cdot)$ as $x$ moves locally away from $x_0$ violates Assumption~\ref{ass:book}, even if it satisfies weaker conditions such as QMD or the alternative smoothness conditions we discuss in the Supplemental Appendix \ref{supp:results-Holder}.

In contrast, QMD neither imposes this exact invariance nor rules out boundary points. What QMD does imply is that the conditional law at $x_0+t$ assigns only $o(\|t\|^2)$ mass to the region where $p_{x_0}$ vanishes. Indeed, on the set $\{y:\,p_{x_0}(y)=0\}$, the QMD integrand reduces to $p_{x_0+t}(y)$, so
\[
\int_{\{y:\,p_{x_0}(y)=0\}} p_{x_0+t}(y)\,\nu(dy)=o(\|t\|^{2})~.
\]
Thus QMD allows $p_{x_0+t}(y)$ to be positive on $\{p_{x_0}=0\}$ for arbitrarily small $t$, but controls its total contribution in an integrated sense.

All in all, Assumption~\ref{ass:book} is stronger than the conditions required for Theorem~\ref{thm:H-QMD}. In particular, it implies Assumptions~\ref{ass:g-regular} (with the additional restriction $x_0\in{\rm int}(\mathcal X)$) and \ref{ass:QMD}, whereas the converse does not hold; see Lemma~\ref{lem:book-implies-QMD} in the supplementary appendix.

\section{Concluding Remarks}

This paper studies the behavior of induced order statistics when the number of nearest neighbors grows with the sample size. We provide general results that link marginal approximation rates for $P_r$ to joint convergence rates for the IOS vector in Hellinger and total variation distances, and we show how these rates depend on the local behavior of the conditional distribution of interest. In the main text, we develop sharp marginal rates under quadratic mean differentiability (QMD), which delivers a simple benchmark rate while accommodating both interior and boundary points. In the supplementary appendix, we provide complementary results under a Taylor/H\"older remainder condition that yields a wider range of marginal rates. Taken together, these results clarify when IOS--based approximations remain valid, when convergence slows down as smoothness weakens, and when it breaks down altogether. In doing so, we also formalize the sense in which stronger conditions such as those of \citet{falk/husler/reiss:2010} imply comparatively fast convergence by imposing substantial structure on the underlying model.

Beyond providing new results for IOS, the framework developed here is intended as a reusable toolkit for settings in which local conditional distributions are approximated using shrinking neighborhoods around a point of interest. By making explicit the trade--offs between smoothness, boundary behavior, and admissible growth rates for $k$, the analysis facilitates more transparent asymptotic arguments and helps guide future work on inference and estimation based on IOS. While our results are motivated by IOS--based procedures, most notably those arising in regression discontinuity designs and $k$-nearest-neighbor methods, they also speak to related problems that rely on local approximations to conditional laws but are not explicitly framed in terms of IOS, such as distributionally robust optimization.

\appendices

\renewcommand{\theequation}{\Alph{section}-\arabic{equation}}
\setcounter{equation}{0}

\section{Proof of the main results}\label{app:main-proofs}

\subsection{Proof of Theorem \ref{thm:high-level}}

\underline{Part (a).} Let $r_{1}$ be as defined in Lemma \ref{lem:Fr}. Lemma \ref{lem:Fr} implies that $F(r) \geq F( r/2) >0$ for all $r\in ( 0,r_{1}]$. Then, \citet[Theorem 1]{kaufmann/reiss:92} implies that, for all $r\in ( 0,r_{1}) $, 
\begin{equation}\label{eq:HL_1}
    \mathcal{L}( S_{n}|R_{( k+1) }=r) =P_{r}^{k}~,
\end{equation}
where $P_{r}^{k}$ denotes the $k$-fold product measure of $P_{r}$.

Consider the following derivation for any $k\in \{1,\dots,n\}$ and $n\in \mathbf{N}$,
\begin{align}
\mathrm{H}^{2}&(\mathcal{L}( S_{n}) ,\mathcal{L}( S) ) \overset{(1)}{\leq }\int_{0}^{\infty }\mathrm{H}^{2}( \mathcal{L} ( S_{n}|R_{(k+1) }=r) ,\mathcal{L}( S|R_{(k+1) }=r) ) dP_{R_{(k+1) }}( r)  \notag\\
&\overset{(2)}{=}\int_{0}^{r_{1}}\mathrm{H}^{2}( P_{r}^{k},P^{k}) dP_{R_{(k+1) }}( r) +\int_{r_{1}}^{\infty }\mathrm{H}^{2}( \mathcal{L}( S_{n}|R_{(k+1) }=r) ,\mathcal{L}( S|R_{(k+1) }=r)) dP_{R_{(k+1) }}( r)\notag \\
&\overset{(3)}{\leq }k\int_{0}^{r_{1}}\mathrm{H}^{2}( P_{r},P) dP_{R_{(k+1) }}( r) +P( R_{(k+1) }\geq r_{1}) ~,\label{eq:high-level_1}
\end{align}
where (1) holds by the convexity lemma (e.g., \citet[Lemma 3.1.3]{reiss:1993}), (2) by \eqref{eq:HL_1}, $S\perp R_{(k+1) }$, and $S\sim P^{k}$, and (3) by $\mathrm{H}^{2}(P',P'') \leq 1$ and Bernoulli's inequality, which states that $1-( 1-x) ^{k}\leq kx$ for any $x\in [ 0,1] $ and $k\in \mathbf{N}$.

Given \eqref{eq:high-level_1}, we complete the proof by finding constants $C_1,C_2 \in (0,\infty)$ such that, for all $k\in \{1,\dots,n\}$ and $n\in \mathbf{N}$,
\begin{align}
    \int_{0}^{r_{1}}\mathrm{H}^{2}( P_{r},P) dP_{R_{(k+1) }}( r) &\leq C_{1}(k/n)^{2a_{h}/d}\label{eq:high-level_2}\\
    P( R_{(k+1) }\geq r_{1})& \leq C_{2}k(k/n)^{2a_{h}/d}~.\label{eq:high-level_3}
\end{align}

We begin by proving \eqref{eq:high-level_2}. By $\mathrm{H}(P_{r},P)=O(r^{a_{h}})$, there is $C_{3}\in ( 0,\infty ) $ such that, for all $r\geq 0$, 
\begin{equation}\label{eq:high-level_4}
\mathrm{H}^{2}(P_{r},P)\leq C_{3}r^{2a_{h}}~.
\end{equation}
Then, consider the following derivation:
\begin{align*}
    \int_{0}^{r_{1}}\mathrm{H}^{2}( P_{r},P) dP_{R_{(k+1) }}( r) &\overset{(1)}{\leq }
    C_{3}E[ R_{(k+1) }^{2a_{h}}I\{ R_{(k+1) }\leq r_{1}\} ]  \\
    &\overset{(2)}{=}C_{3}E[ F^{-1}(U_{(k+1:n)})^{2a_{h}}I\{ F^{-1}(U_{(k+1:n)}) \leq r_{1}\} ]  \\
    &\overset{(3)}{\leq }C_{3}E[ F^{-1}(U_{(k+1:n)})^{2a_{h}}I\{ U_{(k+1:n) }\leq  C_{L}r_{1}^{d}\} ]  \\
    &\overset{(4)}{\leq }C_{3}C_{L}^{-2a_{h}/d}E[ U_{(k+1:n) }^{2a_{h}/d}I\{ U_{(k+1:n) }\leq  C_{L}r_{1}^{d}\} ]  \\
    &\leq C_{3}C_{L}^{-2a_{h}/d}E[ U_{(k+1:n) }^{2a_{h}/d}]  \\
    &\overset{(5)}{\leq }C_{3}C_{L}^{-2a_{h}/d}C_{4}(k/n)^{2a_{h}/d}~,
\end{align*}
where (1) holds by \eqref{eq:high-level_4}, (2) by quantile transformation, which gives $R_{(k+1) }=F^{-1}( U_{(k+1:n) }) $ for $F^{-1}$ as defined in Lemma \ref{lem:Fr}, (3) by the fact that $F^{-1}(U_{(k+1:n)})\leq r_1$ implies $U_{(k+1:n) }\leq F(r_1)$ and Lemma \ref{lem:Fr}, which implies that $F(r_1)\leq C_{L}r_{1}^{d}$, (4) by Lemma \ref{lem:Fr} applied to $U_{(k+1:n)}\leq C_{L}r_{1}^{d}$, and (5) by Lemma \ref{lem:ordered_uniform_power} with $m=2a_{h}/d$. By setting $C_{1}:=C_{3}C_{L}^{-2a_{h}/d}C_{4}$, \eqref{eq:high-level_2} follows.

We now show \eqref{eq:high-level_3}. Let $\tilde{r}_{1}\in (0,r_{1}]$ be a arbitrary continuity point of $F$. We divide the argument into two cases. First, consider any $k\in \{1,\dots ,n\}$ and $n\in \mathbf{N}$ such that $k/n\leq F(\tilde{r}_{1})/2$. Let $\mu :=k/(n+1)\leq (k/n)(n/(n+1))\leq F(r_{1})/2<1$ and $\sigma ^{2}:=\mu (1-\mu )$. Then, 
\begin{align}
    P(R_{(k+1)}\geq r_{1})& \overset{(1)}{\leq }P(R_{(k+1)}\geq \tilde{r} _{1})  \notag \\
    &\overset{(2)}{\leq }P\left(\frac{\sqrt{n}(U_{(k+1:n)}-\mu )}{\sigma } \geq \frac{\sqrt{n}(F(\tilde{r}_{1})-\mu )}{\sigma }\right)  \notag \\
    & \overset{(3)}{\leq }\exp \left( -\frac{n(F(\tilde{r}_{1})-\mu )^{2}}{ 3(F(\tilde{r}_1)-\mu ^{2})} \right)   \notag \\
    & \overset{(4)}{\leq }\exp (-{nF(\tilde{r}_{1})}/{12})  \notag \\
    & \overset{(5)}{\leq }C_{5}(1/n)^{2a_{h}/d}  \notag \\
    & \overset{(6)}{\leq }C_{5}k(k/n)^{2a_{h}/d}~,\label{eq:high-level_5}
\end{align}
where (1) holds by $\tilde{r}_{1}\leq r_{1}$, (2) by $\{ F^{-1}(U_{(k+1:n)})\geq  \tilde{r}_{1}\} \subseteq \{ U_{(k+1:n)}\geq F(\tilde{r} _{1})\} $, which relies on $\tilde{r}_{1}$ being a continuity point of $F$, (3) by $F(\tilde{r}_{1})-\mu \geq F(\tilde{r}_{1})/2>0$ and \citet[Lemma 3.1.1]{reiss:89} with $\varepsilon =\sqrt{n}(F(\tilde{r}_{1})-\mu )/\sigma >0 $ and $\sigma ^{2}=\mu (1-\mu )$, (4) by $\mu \in [ 0,F(\tilde{r}_{1})/2]$, which yields $(F(\tilde{r}_{1})-\mu )^{2}\geq F(\tilde{r}_{1})^{2}/4$ and $F(\tilde{r}_{1})\geq F(\tilde{r}_{1})-\mu ^{2}>0$, (5) follows from $\exp(-cn)=O(n^{-\delta})$ for any $\delta>0$, and (6) by $ k\geq 1$. 

Second, for any $k\in \{1,\dots ,n\}$ and $n\in \mathbf{N}$ such that $k/n>F(\tilde{r}_{1})/2$, we have
\begin{equation}
P(R_{(k+1)}\geq r_{1})
\leq 1 
\overset{(1)}{\leq }C_{6}(k/n)^{2a_h/d}
\overset{(2)}{\leq }C_{6}k(k/n)^{2a_{h}/d}~,
\label{eq:high-level_6}
\end{equation}
where (1) holds for $C_6 := (2/F(\tilde r_1))^{2a_h/d}$, and (2) holds since $k\geq 1$. To conclude, note that \eqref{eq:high-level_5} and \eqref{eq:high-level_6} imply \eqref{eq:high-level_3} holds with $C_{2}:=\max \{ C_{5},C_{6}\}$ for all $k\in \{1,\dots ,n\}$ and $n\in \mathbf{N}$, as desired.

\noindent\underline{Part (b).} By part (a) and \eqref{eq:HvsTV}, we obtain that, for any $k\in \{1,\dots ,n\}$ and $n\in \mathbf{N}$, $\mathrm{TV}(\mathcal{L}(S_{n}),\mathcal{L}(S))=O(k^{1/2} (k/n)^{a_{h}/d})$. To complete the proof, it then suffices to show that, for any $k\in \{1,\dots ,n\}$ and $n\in \mathbf{N}$, we also have $\mathrm{TV}(\mathcal{L}(S_{n}),\mathcal{L}(S))=O(k(k/n)^{a_{tv}/d})$. 

Our proof follows the arguments of part (a) very closely.  Recall that $a_{tv}\in [a_h,2a_h]$, which follows again from $\mathrm{H}(P_{r},P)=O(r^{a_{h}})$ and \eqref{eq:HvsTV}, and note that the arguments used to derive \eqref{eq:HL_1} continue to apply. Then, for any $k\in {1,\dots,n}$ and $n\in\mathbf{N}$, we have
\begin{align}
    \mathrm{TV}& (\mathcal{L}(S_{n}),\mathcal{L}(S))  \overset{(1)}{=}\sup_{B\in \mathcal{B}}\left\vert \int_{0}^{\infty }\left( P\left( S_{n}\in B|R_{(k+1)}=r\right) -P\left( S\in B|R_{(k+1)}=r\right) \right) dP_{R_{(k+1)}}(r)\right\vert \notag \\
    & \leq \int_{0}^{\infty }\sup_{B\in \mathcal{B}}\left\vert P\left( S_{n}\in B|R_{(k+1)}=r\right) -P\left( S\in B|R_{(k+1)}=r\right) \right\vert dP_{R_{(k+1)}}(r) \notag \\
    & =\int_{0}^{\infty }\mathrm{TV}(\mathcal{L}(S_{n}|R_{(k+1)}=r),\mathcal{L}(S|R_{(k+1)}=r))dP_{R_{(k+1)}}(r)\notag \\
    & \overset{(2)}{=}\int_{0}^{r_{1}}\mathrm{TV}(P_{r}^{k},P^{k})dP_{R_{(k+1)}}(r)+\int_{r_{1}}^{\infty }\mathrm{TV}(\mathcal{L}(S_{n}|R_{(k+1)}=r),\mathcal{L}(S|R_{(k+1)}=r))dP_{R_{(k+1)}}(r) \notag \\
    & \overset{(3)}{\leq }k\int_{0}^{r_{1}}\mathrm{TV}(P_{r},P)dP_{R_{(k+1)}}(r)+P(R_{(k+1)}\geq r_{1})~,  \label{eq:high-level_1b}
\end{align}
where (1) holds by the law of total probability and using $\mathcal{B}$ to denote the common $\sigma$-algebra to $\mathcal{L}(S_{n})$ and $\mathcal{L}(S)$, and (2) by \eqref{eq:HL_1}, $ S\perp R_{(k+1)}$, and $S\sim P^{k}$, and (3) by \citet[Eq.\ (4.5)]{hoeffding/wolfowitz:1958}. Given \eqref{eq:high-level_1b}, we complete the proof by finding constants $C_{1},C_{2}\in (0,\infty )$ such that, for all $k\in \{1,\dots ,n\}$ and $ n\in \mathbf{N}$, 
\begin{align}
\int_{0}^{r_{1}}\mathrm{TV}(P_{r},P)dP_{R_{(k+1)}}(r)& \leq C_{1}(k/n)^{a_{tv}/d} \label{eq:high-level_2b}\\
P(R_{(k+1)}\geq r_{1})& \leq C_{2}k(k/n)^{a_{tv}/d}~.\label{eq:high-level_3b}
\end{align}
Then, \eqref{eq:high-level_2b} and \eqref{eq:high-level_3b} follow by repeating the arguments used to derive \eqref{eq:high-level_2} and \eqref{eq:high-level_3}, with $\mathrm{H}(P_r,P)=O(r^{a_h})$ replaced by $\mathrm{TV}(P_r,P)=O(r^{a_{tv}})$.

\subsection{Proof of Theorem \ref{thm:H-QMD}}

We start by proving \eqref{eq:H-TV-QMD} and begin with a preliminary result. For any $x\in \mathcal{X}$, we define $D(x)$ as in Lemma \ref{lem:Dx}, and note that $\mathrm{H}^{2}(P_{x},P)=D(x)/2$. Lemma \ref{lem:Dx} shows that, as $||x - x_0|| \to 0$,
\begin{equation}
D(x)=\frac{1}{4}(x-x_{0})^{\top }\Big(\int \dot{\ell}_{x_{0}}(y)\dot{\ell}_{x_{0}}(y)^{\top }p_{x_{0}}(y)\nu (dy)\Big)(x-x_{0})+o(\Vert x-x_{0}\Vert ^{2})~.  \label{eq:thm3_1}
\end{equation}
Moreover, Assumption \ref{ass:QMD} implies
\begin{equation}
\Big\vert \int \dot{\ell}_{x_{0}}(y)\dot{\ell}_{x_{0}}(y)^{\top }p_{x_{0}}(y)\nu (dy)\Big\vert \leq \int \Vert \dot{\ell}_{x_{0}}(y)\Vert ^{2}p_{x_{0}}(y)\nu (dy)<\infty ~.  \label{eq:thm3_2}
\end{equation}
By \eqref{eq:thm3_1} and \eqref{eq:thm3_2}, there are constants $C>0$ and $\bar{r}>0$ so that, for all $\Vert x - x_0 \Vert < \bar{r}$, 
\begin{equation}
D(x)\leq C\Vert x-x_{0}\Vert ^{2}~.  \label{eq:thm3_3}
\end{equation}

Let $r_{1}$ be as in Lemma \ref{lem:Fr}. By this result, for all $ r\in ( 0,r_{1}) $, 
\begin{equation}
F( r) :=\int_{B_{r}}g( x) dx\geq C_{L}r^{d}>0~.
\label{eq:thm3_4}
\end{equation}
For any $r\in ( 0,r_{1}) $, we can then define the following measure on $\mathcal{X}$:
\begin{equation}
\mu _{r}(dx)=\frac{g(x)I\{x\in B_{r}\}}{F(r)}dx~.  \label{eq:thm3_5}
\end{equation}%
By this definition, note that $h_{r}(y)=\int p_{x}(y)\mu _{r}(dx)$ and $P_{r}=\int P_{x}\mu _{r}(dx)$.

Let $r_2 := \min\{r_1,\bar{r}\}$. For any $r\in ( 0,r_{2})$, consider the following derivation.
\begin{align}
\mathrm{H}^{2}(P_{r},P)
&\overset{(1)}{\leq }\int \mathrm{H}^{2}(P_{x},P)\mu _{r}(dx)\notag\\
&\overset{(2)}{=}\frac{1}{2F(r)}\int_{B_{r}}D(x)g(x)dx  \notag \\
&\overset{(3)}{\leq }\frac{C}{2F(r)}\int_{B_{r}}\Vert x-x_{0}\Vert ^{2}g(x)dx  \notag \\
&\overset{(4)}{\leq }\frac{C}{2C_{L}r^{d}}( g( x_{0}) +C_{g}r) \int_{B_{r}}\Vert x-x_{0}\Vert ^{2}dx  \notag \\
&\overset{(5)}{=}O(r^{2})~,  \label{eq:thm3_6} 
\end{align}
where (1) holds by the convexity lemma (e.g., \citet[Lemma 3.1.3]{reiss:1993}), with measure $\mu _{r}(dx)$ and $P_{r}=\int P_{x}\mu _{r}(dx)$, (2) by $\mathrm{H}^{2}(P_{x},P)=D(x)/2$ and \eqref{eq:thm3_5}, (3) by \eqref{eq:thm3_3}, (4) by \eqref{eq:thm3_4} and Assumption~\ref{ass:g-regular}, and (5) by $\int_{B_{r}}\Vert x-x_{0}\Vert ^{2}dx=O(r^{d+2})$. From \eqref{eq:thm3_6}, we obtain
\begin{equation*}
\mathrm{H}(P_{r},P)=O(r)~.
\end{equation*}
The convergence rate of the total variation distance follows from its relationship with the Hellinger, $\mathrm{TV}(P_{r},P)\leq \sqrt{2} \mathrm{H}(P_{r},P)$. This completes the proof of \eqref{eq:H-TV-QMD}. 

Part (i) of the Theorem follows from Lemma~\ref{lem:QMD-sharpness}. The first statement in Part (ii) follows from Lemma~\ref{lem:QMD-no-poly-improvement}, while the result delivering  $\mathrm{TV}(P_{r},P) = o(r)$ follows from Lemma~\ref{lem:QMD-TV-or}. This completes the proof. \qed 

\section[Auxiliary results]{Auxiliary results}\label{app:auxiliary}

\begin{lemma}\label{lem:Fr}
Under Assumption \ref{ass:g-regular}, there exists $r_{1}>0$ and $0<C_{L}<C_{H}<\infty $ such that, for all $r\in [ 0,r_{1}) $, 
\begin{equation}\label{eq:CDF_lips1}
   F(r) :=  \int_{B_r}g(x) dx \quad \text{satisfies} \quad C_{L}r^{d} \leq  F(r)  \leq C_{H}r^{d}~.  
\end{equation}
In addition, for any $u\in (0,C_{L}r_{1}^{d}]$, the (left-continuous) quantile function of $F$, 
\begin{equation*}
    F^{-1}(u) := \inf\{r : F(r)\ge u\}, \quad \text{satisfies} \quad  F^{-1}(u)\leq ({u}/{C_{L}})^{1/d}~.
\end{equation*}
\end{lemma}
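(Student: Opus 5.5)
The plan is to sandwich $g$ between two positive constants on a small neighbourhood of $x_0$, and then compare $F(r)=\int_{B_r}g(x)\,dx$ with volumes of balls.

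\textbf{Choice of $r_1$.} Let $\rho>0$ be the radius of the neighbourhood in Assumption~\ref{ass:g-regular}(i), and let $r_0$ be as in Assumption~\ref{ass:g-regular}(ii). Set
\[
r_1:=\min\{\rho,\ r_0,\ g(x_0)/(2C_g)\}.
\]
For $x\in B_r\cap\mathcal X$ with $r<r_1$, the Lipschitz bound $|g(x)-g(x_0)|\le C_g\|x-x_0\|\le C_g r_1$ gives
\[
g(x_0)/2\le g(x)\le 3g(x_0)/2 .
\]
Since $g$ is a density supported in $\mathcal X$, it vanishes off $\mathcal X$, so $F(r)=\int_{B_r\cap\mathcal X}g(x)\,dx$.

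\textbf{Two-sided bound on $F$.} For the lower bound, Assumption~\ref{ass:g-regular}(ii) gives
\[
F(r)\ge \tfrac{g(x_0)}{2}\,\mathrm{Vol}(B_r\cap\mathcal X)\ge \tfrac{g(x_0)}{2}\,C_{x_0}r^d .
\]
For the upper bound,
\[
F(r)\le \tfrac{3g(x_0)}{2}\,\mathrm{Vol}(B_r)=\tfrac{3g(x_0)}{2}\,\omega_d r^d,
\]
where $\omega_d$ is the volume of the unit ball. Take $C_L:=g(x_0)C_{x_0}/2$ and $C_H:=\max\{3g(x_0)\omega_d/2,\,2C_L\}$, which makes $C_L<C_H$ strictly. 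Note that $C_{x_0}\le\omega_d$ automatically, since $\mathrm{Vol}(B_r\cap\mathcal X)\le \omega_d r^d$. The case $r=0$ is trivial because $F(0)=0$. This proves \eqref{eq:CDF_lips1}.

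\textbf{Quantile bound.} Fix $u\in(0,C_Lr_1^d]$ and put $s:=(u/C_L)^{1/d}\le r_1$.
\begin{enumerate}[(1)]
\item If $s<r_1$, the lower bound in \eqref{eq:CDF_lips1} gives $F(s)\ge C_Ls^d=u$. Hence $s$ belongs to $\{r:F(r)\ge u\}$, and so $F^{-1}(u)\le s$.
\item If $s=r_1$, the lower bound is not directly available at $r_1$, because \eqref{eq:CDF_lips1} is stated on the half-open interval $[0,r_1)$. Instead, use that $F$ is continuous: $F(r)=P(\|X-x_0\|<r)$, and $X$ has a Lebesgue density, so spheres have probability zero. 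Letting $r\uparrow r_1$ gives $F(r_1)\ge C_Lr_1^d=u$, and again $F^{-1}(u)\le s$.
\end{enumerate}
An equally valid alternative in case (2) is to shrink $r_1$ slightly at the outset, so that the bound holds on a closed interval.

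\textbf{Main obstacle.} The only delicate point is this endpoint case, together with the facts that $g=0$ off $\mathcal X$ and that the Lipschitz condition is only imposed for $x\in\mathcal X$. Everything else is routine.
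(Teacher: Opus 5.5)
Your proof is correct and follows essentially the same route as the paper: use the Lipschitz condition to sandwich $g$ between $g(x_0)/2$ and a constant on a small ball, integrate against $\mathrm{Vol}(B_r\cap\mathcal X)\ge C_{x_0}r^d$ and $\mathrm{Vol}(B_r)$, and read off the quantile bound from $F(s)\ge C_L s^d$ with $s=(u/C_L)^{1/d}$. The paper avoids your endpoint case by proving the lower bound on the closed interval $[0,r_1]$. Its choice $r_1=g(x_0)/(2\max\{C_g,1\})$ does, however, omit the caps by the Lipschitz-neighbourhood radius and by $r_0$ that your $\min$ correctly includes. At the endpoint, monotonicity of $F$ alone already gives $F(r_1)\ge\sup_{r<r_1}C_Lr^d=C_Lr_1^d$, so you do not need continuity there.
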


\begin{proof}
Let $r_{1} := g(x_{0})/(2\max \{ C_g,1\} )>0$. For any $x \in \mathcal{X}$ with $\Vert x-x_{0}\Vert \leq r $ and $r\in [ 0,r_{1}]$, we have
\begin{equation}\label{eq:CDF_lips2}
    g(x) \overset{(1)}{\geq }g(x_{0}) -C_gr\overset{(2)}{ \geq }( g(x_{0}) /2) ( 2-C_g/(\max \{ C_g,1\} )) \geq g(x_{0}) /2\overset{(3)}{>}0~,
\end{equation}
where (1) and (3) hold by Assumption \ref{ass:g-regular} and (2) by $r \leq r_{1}$. Then, for all $r\in [ 0,r_{1}] $,
\begin{equation}\label{eq:Fr-lowerbound}
    F(r) \overset{(1)}{\geq}\frac{g(x_{0}) }{2} \int_{B_r \cap \mathcal X}dx \ge \frac{g(x_{0}) }{2} C_{x_0} r^d\overset{(2)}{=}C_{L}r^{d}~,
\end{equation}
where (1) holds by \eqref{eq:CDF_lips2} and $\Vert x-x_{0}\Vert \leq r$, and (2) by setting $C_{L}:=\tfrac{g(x_{0})}{2}C_{x_0}$. This proves the first inequality in \eqref{eq:CDF_lips1}. 

For any $x \in \mathcal{X}$ with $\Vert x-x_{0}\Vert \leq r $ and $r\in [ 0,r_{1}]$, we also have
\begin{equation}\label{eq:CDF_lips3}
    g(x) \overset{(1)}{\leq}g(x_{0}) + C_g r\overset{(2)}{ \leq }g(x_{0}) +C_g r_{1}~,  
\end{equation}
where (1) holds by Assumption \ref{ass:g-regular} and (2) by $r\leq r_{1}$. Then, for all $r\in [0,r_{1}]$,
\begin{equation*}
    F(r) \overset{(1)}{\leq }(g(x_{0}) + C_g r_{1})\mathrm{Vol}(B_r)\overset{(2)}{=}C_{H}r^{d}~,
\end{equation*}
where (1) holds by \eqref{eq:CDF_lips3} and $\Vert x-x_{0}\Vert \leq r$, and (2) by using that $\mathrm{Vol}(B_r) =\pi ^{d/2}/\Gamma ( {d}/{2}+1) r^{d}$ and setting $C_{H}:=(g(x_{0}) +C_g r_{1}) \pi ^{d/2}/\Gamma ( {d}/{2}+1) $. This proves the second inequality in \eqref{eq:CDF_lips1}.

For the last claim, take any $r\in (0,r_{1}]$, and note that 
\begin{equation}\label{eq:CDF_lips4}
    F^{-1}(C_{L}r^{d}) \overset{(1)}{\leq} F^{-1}(F(r))  \overset{(2)}{\leq }r~,
\end{equation}
where (1) holds by the weak monotonicity of the quantile function and \eqref{eq:Fr-lowerbound}, and (2) by elementary properties of the quantile function (e.g., \citet[Eq.\ 36]{pollard:02}). For any $u\in (0,C_{L}r_{1}^{d}]$, applying \eqref{eq:CDF_lips4} to $r=(u/C_{L})^{1/d}\in(0,r_{1}]$ proves the desired result.
\end{proof}

\begin{lemma}\label{lem:Dx}
Let Assumption~\ref{ass:QMD} hold and define, for $x$ in a neighborhood of $x_0$,
\[
D(x) := \int \Big( \sqrt{p_x(y)} - \sqrt{p_{x_0}(y)} \Big)^2 \,\nu(dy)~.
\]
Let
\[
I(x_0) := \int \dot\ell_{x_0}(y)\,\dot\ell_{x_0}(y)^\top\, p_{x_0}(y)\,\nu(dy)
\]
denote the $d\times d$ matrix induced by the score $\dot\ell_{x_0}$. Then, as $x\to x_0$,
\[
D(x) = \frac{1}{4}\,(x-x_0)^\top I(x_0)\,(x-x_0) + o\big(\|x-x_0\|^2\big)~.
\]
\end{lemma}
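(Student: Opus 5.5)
Write $t:=x-x_0$ (with $x_0+t\in\mathcal X$, as Assumption~\ref{ass:QMD} requires) and decompose the integrand of $D(x)$ around its linear QMD approximation. Set
\[
    a_t(y):=\tfrac12\big(t^\top\dot\ell_{x_0}(y)\big)\sqrt{p_{x_0}(y)}~,\qquad
    b_t(y):=\sqrt{p_{x_0+t}(y)}-\sqrt{p_{x_0}(y)}-a_t(y)~,
\]
so that $\sqrt{p_x}-\sqrt{p_{x_0}}=a_t+b_t$. Assumption~\ref{ass:QMD} says exactly that $\|b_t\|_{L^2(\nu)}^2=o(\|t\|^2)$. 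Since $\int\|\dot\ell_{x_0}\|^2p_{x_0}\,d\nu<\infty$, the function $a_t$ lies in $L^2(\nu)$, and
\[
    \|a_t\|_{L^2(\nu)}^2=\tfrac14\int (t^\top\dot\ell_{x_0})^2p_{x_0}\,d\nu=\tfrac14\,t^\top I(x_0)\,t~.
\]
Interchanging the quadratic form and the integral is justified by the finite second moment, which also makes $I(x_0)$ a finite matrix.

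Next, expand the square:
\[
    D(x)=\|a_t+b_t\|^2=\|a_t\|^2+2\langle a_t,b_t\rangle+\|b_t\|^2~.
\]
The last term is $o(\|t\|^2)$ by QMD. For the cross term, Cauchy--Schwarz gives
\[
    |\langle a_t,b_t\rangle|\le \|a_t\|\,\|b_t\|\le \tfrac12\|t\|\Big(\int\|\dot\ell_{x_0}\|^2p_{x_0}\,d\nu\Big)^{1/2}\cdot o(\|t\|)=o(\|t\|^2)~,
\]
using $\|a_t\|^2\le \tfrac14\|t\|^2\int\|\dot\ell_{x_0}\|^2p_{x_0}\,d\nu$. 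Combining the three terms gives $D(x)=\tfrac14 t^\top I(x_0)t+o(\|t\|^2)$, which is the claim.

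There is no real obstacle here; the argument is just the triangle and Cauchy--Schwarz inequalities in $L^2(\nu)$. The only points needing care are, first, measurability and well-definedness of $\sqrt{p_x}$. This requires $g(x)>0$ near $x_0$, which holds in the context where the lemma is used (Lemma~\ref{lem:Fr}). Second, the limit $x\to x_0$ must be understood along $x\in\mathcal X$, matching the statement of Assumption~\ref{ass:QMD}, so that boundary points are covered.
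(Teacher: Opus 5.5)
Your proof is correct and follows essentially the same route as the paper. Both arguments split $\sqrt{p_x}-\sqrt{p_{x_0}}$ into the linear QMD term plus a remainder, expand the square, control the squared remainder by Assumption~\ref{ass:QMD}, and bound the cross term by Cauchy--Schwarz. The paper works with the difference quotient $\Delta(y,x)$ and directional derivative $\dot d(y,x_0;x)$ normalized by $\|x-x_0\|$, whereas you keep the unnormalized terms $a_t$ and $b_t$; this is only a cosmetic difference.
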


\begin{proof}
We begin with some definitions. For any $x\in \mathcal{X}\backslash \{x_{0}\}$, denote the difference quotient by 
\begin{equation*}
\Delta (y,x):=\frac{\sqrt{p_{x}(y)}-\sqrt{p_{x_{0}}(y)}}{\Vert x-x_{0}\Vert } ~,
\end{equation*}
and the (directional) quadratic mean derivative by 
\begin{equation*}
\dot{d}(y,x_{0};x):=\frac{1}{2}\mathrm{u}_{x}^{\top }\dot{\ell}_{x_{0}}(y) \sqrt{p_{x_{0}}(y)}~,
\end{equation*}
where 
\begin{equation*}
\mathrm{u}_{x}:=\frac{x-x_{0}}{\Vert x-x_{0}\Vert }~.
\end{equation*}
For any $x = x_0$, we define $\Delta (y,x):=0$ and $\dot{d}(y,x_{0};x):=0$. 

Next, consider the following derivation for any $x\in \mathcal{X}\backslash \{x_{0}\}$,
\begin{align}
D(x) &=\Vert x-x_{0}\Vert ^{2}\int \Delta (y,x)^{2}\nu (dy) \notag\\
&=\Vert x-x_{0}\Vert ^{2}\int ( \dot{d}(y,x_{0};x)+(\Delta (y,x)-\dot{d}(y,x_{0};x)))^{2} \nu(dy)  \notag \\
&=\Bigg( 
\begin{array}{c}
\frac{1}{4}\,(x-x_{0})^{\top }I(x_{0})\,(x-x_{0})+ \\ 
\Vert x-x_{0}\Vert ^{2}\int ( \Delta (y,x)-\dot{d}(y,x_{0};x))^{2}\nu (dy)+ \\ 
2\Vert x-x_{0}\Vert ^{2}\int \dot{d}(y,x_{0};x)\big(\Delta (y,x)-\dot{d} (y,x_{0};x)\big)\nu (dy)
\end{array}
\Bigg) ~.\label{eq:Dx_1}
\end{align}
Note that \eqref{eq:Dx_1} extends to $x_{0}\in \mathcal{X}$.
For all $x\in \mathcal{X}$, we then have
\begin{align*}
\Big\vert D(x)-\frac{1}{4}\,(x-x_{0})^{\top }I(x_{0})\,(x-x_{0})\Big\vert  =\Vert x-x_{0}\Vert ^{2}\Bigg( 
\begin{array}{c}
\int ( \Delta (y,x)-\dot{d}(y,x_{0};x))^{2}\nu (dy)+ \\ 
2|\int \dot{d}(y,x_{0};x)\big(\Delta (y,x)-\dot{d}(y,x_{0};x)\big)\nu (dy)|
\end{array}
\Bigg) ~.
\end{align*}
To complete the proof, it suffices to show that for any sequence of $x\in \mathcal{X}$ with $x\to x_{0}$,
\begin{align}
\Vert x-x_{0}\Vert ^{2}\int ( \Delta (y,x)-\dot{d}(y,x_{0};x))^{2}\nu (dy) &=O(\Vert x-x_{0}\Vert ^{2}) \label{eq:Dx_3} \\
\Vert x-x_{0}\Vert ^{2}~\Big|\int \dot{d}(y,x_{0};x)\big(\Delta (y,x)-\dot{d}(y,x_{0};x)\big)\nu (dy)\Big| &=O(\Vert x-x_{0}\Vert ^{2})~.\label{eq:Dx_4}
\end{align}

To show \eqref{eq:Dx_3}, note that for all $x\in \mathcal{X}$,
\begin{align*}
&\Vert x-x_{0}\Vert ^{2}\int ( \Delta (y,x)-\dot{d}(y,x_{0};x))
^{2}\nu (dy) \\
&\overset{(1)}{=}\int \Big( \sqrt{p_{x}(y)}-\sqrt{p_{x_{0}}(y)}-\tfrac{1}{2}( (x-x_{0})^{\top }\dot{\ell}_{x_{0}}(y)) \sqrt{p_{x_{0}}(y)} \Big) ^{2}\nu (dy) \\
&\overset{(2)}{=}o(\Vert x-x_{0}\Vert ^{2})~~\text{ as }x \to x_{0}~,
\end{align*}
where (1) holds by definition of $\Delta (y,x)$ and $\dot{d}(y,x_{0};x)$ (even if $x=x_{0}$), and (2) by Assumption~\ref{ass:QMD}. 

To show \eqref{eq:Dx_4}, first note that for all $x\in \mathcal{X}$ with $\Vert
x-x_{0}\Vert \not=0$,
\begin{equation}
\int \big(\dot{d}(y,x_{0};x)\big)^{2}\nu (dy)=\frac{1}{4}\int \big(\mathrm{u}_{x}^{\top }\dot{\ell}_{x_{0}}(y)\big)^{2}p_{x_{0}}(y)\nu (dy)\overset{(1)}{\leq }\frac{1}{4}\int \Vert \dot{\ell}_{x_{0}}(y)\Vert ^{2}p_{x_{0}}(y)\nu (dy)\overset{(2)}{<}\infty ~,
\label{eq:Dx_5}
\end{equation}
where (1) holds by $\big(\mathrm{u}_{x}^{\top }\dot{\ell}_{x_{0}}(y)\big) ^{2}\leq \Vert \dot{\ell}_{x_{0}}(y)\Vert ^{2}$, which follows from $\Vert 
\mathrm{u}_{x}\Vert =1$ and the Cauchy-Schwarz inequality and (2) by Assumption \ref{ass:QMD}. Also, observe that \eqref{eq:Dx_5} extends to $x_{0}\in \mathcal{X}$, as this gives $\int \big(\dot{d}(y,x_{0};x)\big)^{2}\nu (dy)=0$. Then, \eqref{eq:Dx_4} follows from the next derivation for all $x\in \mathcal{X}$,
\begin{align*}
&\Vert x-x_{0}\Vert ^{2}~\Big|\int \dot{d}(y,x_{0};x)\big(\Delta (y,x)-\dot{d}(y,x_{0};x)\big)\nu (dy)\Big| \\
&\overset{(1)}{\leq }\Vert x-x_{0}\Vert ^{2}~\Big(\int (\Delta (y,x)-\dot{d}(y,x_{0};x))^{2}\nu (dy)\Big)^{1/2}~\Big(\int \dot{d}(y,x_{0};x)^{2}\nu (dy)\Big)^{1/2} \\
&\overset{(2)}{=}o(\Vert x-x_{0}\Vert ^{2})~~\text{ as }x\to x_{0}~,
\end{align*}
as desired in \eqref{eq:Dx_4}, where (1) holds by the Cauchy-Schwarz inequality and (2) by \eqref{eq:Dx_3} and \eqref{eq:Dx_5}. This completes the proof.
\end{proof}

\bibliography{hellinger.bib}

\newpage 

\clearpage
\begin{titlepage}
  \centering

  {\Large \textsc{Supplementary Appendix for}\\[0.5em]
  \textbf{``On the Rate of Convergence of Induced Ordered Statistics\\
  and their Applications''}}\\[2em]

  {\large
  \begin{minipage}[t]{0.45\textwidth}
    \centering
    Federico A. Bugni\\
    Department of Economics\\
    Northwestern University\\
    \url{federico.bugni@northwestern.edu}
  \end{minipage}
  \hfill
  \begin{minipage}[t]{0.45\textwidth}
    \centering
    Ivan A.\ Canay\\
    Department of Economics\\
    Northwestern University\\
    \url{iacanay@northwestern.edu}
  \end{minipage}

  \par\vspace{1.5em}

  \begin{minipage}[t]{\textwidth}
    \centering
    Deborah Kim\\
    Department of Economics\\
    University of Warwick\\
    \url{deborah.kim@warwick.ac.uk}
  \end{minipage}
  \par
  }
  \vspace{3em}

  \begin{spacing}{1.1}
  \begin{abstract}
  \noindent
  This document provides supplementary material for the paper
  ``On the Rates of Convergence of Induced Ordered Statistics and their
  Applications.'' It collects auxiliary lemmas and technical proofs omitted
  from the main text, develops additional marginal rate results under
  alternative smoothness assumptions, and provides further
  examples and discussion.
  \end{abstract}
  \end{spacing}

  \vspace{1em}
  \noindent\textbf{KEYWORDS:} hellinger distance, total variation distance,
  induced order statistics, rank tests, permutation tests.

  \vspace{0.5em}
  \noindent\textbf{JEL codes:} C12, C14.

  \vfill
\end{titlepage}

\renewcommand{\thesection}{S.\arabic{section}}
\renewcommand{\thesubsection}{S.\arabic{section}.\arabic{subsection}}
\renewcommand{\theequation}{S.\arabic{section}.\arabic{equation}}
\numberwithin{equation}{section}   
\renewcommand{\thetheorem}{S.\arabic{theorem}}
\renewcommand{\thelemma}{S.\arabic{lemma}}
\renewcommand{\thecorollary}{S.\arabic{corollary}}
\renewcommand{\theproposition}{S.\arabic{proposition}}
\renewcommand{\theassumption}{S.\arabic{assumption}}
\renewcommand{\thedefinition}{S.\arabic{definition}}
\renewcommand{\theexample}{S.\arabic{example}}
\renewcommand{\theremark}{S.\arabic{remark}}
\setcounter{section}{0}
\setcounter{theorem}{0}
\setcounter{lemma}{0}
\cleardoublepage
\setcounter{section}{0} 
\setcounter{lemma}{0}
\setcounter{equation}{0}


\startcontents[appendix]   
\printcontents[appendix]{l}{1}{\section*{Supplementary Contents}}

\section[\hspace{2mm}Supplementary Lemmas]{Supplementary Lemmas}\label{supp:more-lemmas}

\begin{lemma}\label{lem:QMD-sharpness}
    Suppose Assumptions~\ref{ass:g-regular}--\ref{ass:QMD} hold. If $x_0$ lies on the boundary of $\mathcal X$, then there exist a data-generating process $Q\in\mathbf Q$ and constants $C>0$ and $\bar r>0$ such that
\begin{equation}
        \mathrm{H}(P_r,P)\ge C r
    \quad\text{and}\quad 
    \mathrm{TV}(P_r,P)\ge C r
    \quad\text{for all } r\in(0,\bar r)~.
    \label{eq:QMD-sharpness_0}
\end{equation}
\end{lemma}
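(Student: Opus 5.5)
The plan is to prove the lemma by an explicit construction. Note that TV $\le\sqrt2\,$H, so a lower bound on TV of order $r$ also gives a lower bound on H of order $r$. It therefore suffices to exhibit one $Q$ with $\mathrm{TV}(P_r,P)\ge Cr$. I will take the simplest boundary configuration, $d=1$, $\mathcal X=[0,1]$, $x_0=0$, with $X\sim U[0,1]$, so that $g\equiv1$. Then Assumption~\ref{ass:g-regular} holds trivially, with $C_g=0$ and $\mathrm{Vol}(B_r\cap\mathcal X)=r$. If the statement is read as fixing a general $\mathcal X$ with $x_0$ on its boundary, the same idea transfers by letting the conditional law depend on $x$ through a linear functional $u^\top(x-x_0)$. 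The direction $u$ is chosen so that $\mathcal X\cap B_r$ lies in a half-space $\{u^\top(x-x_0)\ge0\}$ up to lower order. I will present the clean one-dimensional case and remark on this extension.

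For the conditional law I take a Gaussian location model, $Y\mid X=x\sim N(x,1)$, with $\nu$ Lebesgue measure. Then $p_x(y)=\phi(y-x)$, and QMD at $x_0=0$ holds with score $\dot\ell_0(y)=y$. This is the classical fact that smooth location families with finite Fisher information are QMD, and it also follows from the Hellinger expansion $\mathrm{H}^2(N(t,1),N(0,1))=1-e^{-t^2/8}$ together with pointwise differentiability of $\sqrt{p_x}$ (as in \citet{vandervaart:00}, Lemma 7.6). Hence $Q\in\mathbf Q$.

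Next I compute $P_r$. Here $h_r(y)=r^{-1}\int_0^r\phi(y-x)\,dx$, the density of $Z+U_r$, where $Z\sim N(0,1)$ and $U_r\sim U[0,r]$ are independent. The mean of $P_r$ is $r/2$ and the mean of $P$ is $0$. Rather than bounding TV by a mean difference (unbounded test function), I use the bounded test function $\phi(y)=I\{y>0\}$:
\[
\mathrm{TV}(P_r,P)\ge P_r(Y>0)-P(Y>0)=\frac1r\int_0^r\big(\Phi(x)-\tfrac12\big)dx\ge \frac1r\int_0^r \phi(1)\,x\,dx=\frac{\phi(1)}{2}\,r
\]
for $r<1$. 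This uses $\Phi(x)-1/2\ge\phi(1)x$ on $[0,1]$, which follows from the mean value theorem. Finally, $\mathrm H\ge \mathrm{TV}/\sqrt2$ by \eqref{eq:HvsTV}, so both bounds hold with $C=\phi(1)/(2\sqrt2)$ and $\bar r=1$.

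The main point of care is the boundary role. At an interior point, averaging over a symmetric ball cancels the first-order term, which is why Theorem~\ref{thm:H-QMD}(ii) allows $o(r)$ in TV. Here the one-sided ball makes the first-order term survive, and that is what drives the linear lower bound. The only mildly technical step is verifying QMD, which I will cite rather than reprove. The extension to a general boundary geometry is the step I expect to be the obstacle. It requires that $\int_{B_r\cap\mathcal X}u^\top(x-x_0)\,dx\gtrsim r^{d+1}$, and I will obtain this from Assumption~\ref{ass:g-regular}(ii) combined with a suitable choice of $u$, or else simply restrict to the half-space example.
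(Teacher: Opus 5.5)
Your proposal is correct and is essentially the paper's construction: $X$ uniform on $[0,1]^d$ with $x_0=0$, a Gaussian location shift in the first coordinate of $Y$, the half-space $\{y_1>0\}$ as test set, and $\mathrm{TV}\le\sqrt2\,\mathrm H$. The only differences are that your elementary bound $\Phi(x)-\tfrac12\ge\phi(1)x$ replaces the paper's differentiation of $r\mapsto P_r(A)$, and that you work in $d=1$. Because the lemma only asks you to exhibit some $Q$, so you may choose $\mathcal X$, general $d$ needs no geometric argument: take $Y\mid X=x\sim N((x_1,0,\dots,0)^\top,\mathbf I_m)$ on $\mathcal X=[0,1]^d$, and the same bound gives $P_r(A)-\tfrac12\ge\phi(1)\,E[X_1\mid X\in B_r\cap[0,1]^d]=c_d\,r$, with $c_d>0$ by scaling.
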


\begin{proof}
Let $d\geq 1$, $m\geq 1$, and $x_{0}=\mathbf{0}_{d}.$ Let $X\sim \mathrm{Unif}([0,1]^{d})$, so that $g(x)=1[x\in [0,1]^{d}]$ and $\mathcal{X}=[0,1]^{d}$. Then, Assumption~\ref{ass:g-regular} holds with $C_{g}=0$, $r_{0}=1$, and $C_{x_{0}}=\left( 2^{-d}\pi ^{d/2}/\Gamma ( d/2+1) \right)$. 
To see this, note that for all $r\in (0,1)$, we have
\begin{align}
\mathrm{Vol}( B_{r}\cap \mathcal{X}) =2^{-d}\mathrm{Vol}(B_{r}) =\left( 2^{-d}\pi ^{d/2}/\Gamma ( d/2+1) \right) r^{d}~.
\label{QMD-sharpness_1}
\end{align}

For any $x\in \mathcal{X}$, we define $\{ Y|X=x\} \sim N(\mu (x),\mathbf{I}_{m})$ with $\mu (x):=(x_{1},0,\dots ,0)^{\top }\in \mathbf{R}^{m}$, so that $p_{x}(y)=(2\pi )^{-m/2}\exp ( -\| y-\mu (x)\|^{2}/2 )$. 
For any $x\in \mathcal{X}$, it then follows that
\begin{align}
\nabla _{x}\ln p_{x}(y)=(y_{1}-x_{1},0,\dots,0)^{\top }\in \mathbf{R}^{m}~.
\label{QMD-sharpness_3}
\end{align}

We now show that Assumption \ref{ass:QMD} holds with 
\begin{align}
\dot{\ell}_{x_{0}}(y):=\nabla _{x}\ln p_{x}(y)|_{x=x_{0}}=(y_{1},0,\dots,0)^{\top }~.
\label{QMD-sharpness_5}
\end{align}
First, note that
\begin{align}
\int \| \dot{\ell}_{x_{0}}(y)\| ^{2}p_{x_{0}}(y)\nu (dy) \overset{(1)}{=} \int y_{1}^{2}p_{x_{0}}(y)\nu (dy) =E[ Y_{1}^{2}|X=x_{0}] \overset{(2)}{=}1<\infty ~,
\label{QMD-sharpness_6}
\end{align}
where (1) holds by \eqref{QMD-sharpness_5} and (2) by $\{ Y|X=x_{0}\} \sim N(\mathbf{0}_{m},\mathbf{I}_{m})$. Second, since $p_{x}( y)$ is positive and differentiable in $x$, a first-order Taylor expansion of $g(x):=\sqrt{p_{x}( y)}$ centered at $x=x_{0}$ gives:
\begin{align}
\sqrt{p_{x}( y)}=\sqrt{p_{x_{0}}( y)}+\frac{1}{2}( p_{\tilde{x}}( y)) ^{-1/2}\nabla _{x}p_{\tilde{x}}(y)^{\top }(x-x_{0})~,
\label{QMD-sharpness_7}
\end{align}
where $\tilde{x}$ is between $x_{0}$ and $x$. By evaluating \eqref{QMD-sharpness_7} at $x=x_{0}+t\in \mathcal{X}$ and using $( p_{\tilde{x}}( y)) ^{-1/2}\nabla _{x}p_{\tilde{x}}(y)=\nabla _{x}\ln p_{\tilde{x}}(y)$ and \eqref{QMD-sharpness_3}, we obtain
\begin{align}
\sqrt{p_{x_{0}+t}( y)}=\sqrt{p_{x_{0}}( y)}+\frac{1}{2}\sqrt{p_{\tilde{x}}( y)}(y_{1}-\tilde{x}_{1})t_{1}~,
\label{QMD-sharpness_8}
\end{align}
where $\tilde{x}$ is between $x_{0}$ and $x_{0}+t$. From here, we consider the following derivation:
\begin{align}
&\int \left( \sqrt{p_{x_{0}+t}( y)}-\sqrt{p_{x_{0}}( y)}-\frac{1}{2}t^{\top }\dot{\ell}_{x_{0}}(y)\sqrt{p_{x_{0}}( y)}\right) ^{2}\nu ( dy) \nonumber \\
&\overset{(1)}{=}
\left( E[ ( Y_{1}-\tilde{x}_{1})^{2}|X=\tilde{x}] +E[ Y_{1}^{2}|X=x_{0}] -2\int \sqrt{p_{\tilde{x}}( y) p_{x_{0}}( y)}y_{1}(y_{1}-\tilde{x}_{1})\nu ( dy) \right)\frac{t_{1}^{2}}{4} \nonumber \\
&\overset{(2)}{=}\left( 1-\int \sqrt{p_{\tilde{x}}( y) p_{x_{0}}( y)}y_{1}(y_{1}-\tilde{x}_{1})\nu ( dy) \right)\frac{t_{1}^{2}}{2} \nonumber \\
&= \left( 1-\exp \left( -\tilde{x}_{1}^{2}/8\right)( 1-\tilde{x}_{1}^{2}/4 ) \right)\frac{t_{1}^{2}}{2}~,
\label{QMD-sharpness_9}
\end{align}
where (1) holds by \eqref{QMD-sharpness_5} and \eqref{QMD-sharpness_8}, and (2) by $\{ Y|X=x\} \sim N((x_{1},0,\dots ,0),\mathbf{I}_{m})$.
As $t\downarrow 0$, $x_{0}+t\to x_{0}$,
and so $\tilde{x}\to x_{0}$ and $\exp \left( -\tilde{x} _{1}^{2}/8\right) \left( 1-\tilde{x}_{1}^{2}/4\right) \to 1$. Then, as $t\downarrow 0$, \eqref{QMD-sharpness_9} implies that  
\begin{align}
\int \left( \sqrt{p_{x_{0}+t}\left( y\right) }-\sqrt{p_{x_{0}}\left(
y\right) }-\frac{1}{2}t^{\top }\dot{\ell}_{x_{0}}(y)\sqrt{p_{x_{0}}\left(
y\right) }\right) ^{2}\nu \left( dy\right) =o( t_{1}^{2})
=o( \Vert t\Vert ^{2})~ .
\label{QMD-sharpness_10}
\end{align}
By \eqref{QMD-sharpness_6} and \eqref{QMD-sharpness_10}, we have verified that Assumption \ref{ass:QMD} holds.

To complete the proof, it suffices to find $C>0$ and $\bar{r}>0$ such that \eqref{eq:QMD-sharpness_0} holds. Let $A:=\{y\in \mathbf{R}^{m}:y_{1}>0\}$, and note that
\begin{align*}
\sqrt{2}\mathrm{H}(P_{r},P) \geq \mathrm{TV}(P_{r},P) \geq | P_{r}(A)-P(A)|~.
\end{align*}
Given this result, we complete the proof by finding $C>0$ and $\bar{r}>0$ such that
\begin{align}
\left| P_{r}(A)-P(A)\right| \geq Cr\quad\text{for all } r\in(0,\bar r)~.
\label{QMD-sharpness_18}
\end{align}
We devote the rest of the proof to showing \eqref{QMD-sharpness_18}.

Under $P=P_{0}$, we have $\{ Y|X=x_{0}\} \sim N(\mathbf{0}_{m},\mathbf{I}_{m})$, and so $P(A)=1/2$. For any $r\in (0,1)$,
\begin{align}
P_{r}(A)= \int P(Y_{1}>0|X=x)\mu _{r}(dx) \overset{(1)}{=} E[ \Phi (X_{1})|X\in B_{r}(0)\cap [ 0,1]^{d}] \overset{(2)}{=} 
\tfrac{\int_{B_{r}(0)\cap [ 0,1]^{d}}\Phi ( x_1) dx}{\mathrm{Vol}(B_{r}(0)\cap [ 0,1]^{d})} 
~,
\label{QMD-sharpness_11}
\end{align}
where (1) holds by $\{ Y|X=x\} \sim N((x_1,0,\dots,0),\mathbf{I}_{m})$ and (2) by $\{ X_1|X\in B_{r}(0)\cap [ 0,1]^{d} \}\sim U([0,r])$ for $r\in (0,1)$. Note that $P_{r}(A)$ is twice differentiable in $r\in (0,1)$. A first-order Taylor expansion of $g(r):=P_{r}(A)$ centered at $s\in (0,r)$ gives:
\begin{align}
P_{r}(A)=P_{s}(A)+  
\tfrac{\partial P_{r}(A)}{\partial r}\big\vert_{r=\tilde s}
 ( r-s) ~,
\label{QMD-sharpness_12}
\end{align}
where $\tilde{s}\in [ s,r]$. We can obtain
\begin{align}
\lim_{r \downarrow 0} \tfrac{\partial P_{r}(A)}{\partial r}
=\tfrac{\Gamma(d/2+1) }{\sqrt{2} \pi \Gamma((d+3)/2)}>0~.
\label{QMD-sharpness_13}
\end{align}
By \eqref{QMD-sharpness_13} and $\sqrt{2} \pi>1$, there is $\bar{r}>0$ such that for all $\tilde{s}\in (0,\bar{r})$,
\begin{align}
\tfrac{\partial P_{r}(A)}{\partial r}\big\vert_{r=\tilde s} \geq \tfrac{\Gamma(d/2+1) }{\Gamma((d+3)/2)}>0~.
\label{QMD-sharpness_15}
\end{align}
Then, for $s,r$ with $0<s<r<\bar{r}$, \eqref{QMD-sharpness_12} and \eqref{QMD-sharpness_15} imply
\begin{align}
\left| P_{r}(A)-P_{s}(A)\right| \geq \inf_{\tilde{s}\in [ r,s ] }  \tfrac{\partial P_{r}(A)}{\partial r}\big\vert_{r=\tilde s} ( r-s)\geq  \tfrac{\Gamma(d/2+1) }{\Gamma((d+3)/2)}( r-s)~.
\label{QMD-sharpness_16}
\end{align}
By taking $s\downarrow 0$, $P_{s}(A) \to P(A)$, and so \eqref{QMD-sharpness_16} implies that for any $r\in (0,\bar{r})$,
\begin{align}
\left| P_{r}(A)-P(A)\right| \geq  \tfrac{\Gamma(d/2+1) }{ \Gamma((d+3)/2)} r~.
\label{QMD-sharpness_17}
\end{align}
Note that \eqref{QMD-sharpness_17} implies \eqref{QMD-sharpness_18} with $C=\Gamma(d/2+1)/\Gamma((d+3)/2)$, concluding the proof.
\end{proof}

\begin{lemma}\label{lem:QMD-no-poly-improvement}
For any $\tilde{r}>0$, let $\mathbf{Q}^{o}\left( \tilde{r}\right) $ be the set of all distributions $Q$ that satisfy Assumptions \ref{ass:g-regular} and \ref{ass:QMD}, and for which $B_{r}\cap \mathcal{X} =B_{r}$ for all $r\in \left( 0,\tilde{r}\right) $. 
Then, for any $\tilde{r}>0$ and $\varepsilon >0$, there do not exist constants $C<\infty $ and $\bar{r}\in \left( 0,\tilde{r}\right) $ such that  
\begin{equation}
\sup_{Q\in \mathbf{Q}^{o}\left( \tilde{r}\right) }\mathrm{H}(P_{r},P) \leq Cr^{1+\varepsilon }~~~\text{for all}~r\in \left( 0,\bar{r}\right)~,
\label{eq:no-polynomial-example1}
\end{equation}
and there do not exist constants $C<\infty $ and $\bar{r}\in \left( 0,\tilde{r}\right) $ such that
\begin{equation}
\sup_{Q\in \mathbf{Q}^{o}\left( \tilde{r} \right) }\mathrm{TV}(P_{r},P) \leq Cr^{1+\varepsilon }~~~\text{for all}~r\in \left( 0,\bar{r}\right) ~.
\label{eq:no-polynomial-example1b}
\end{equation}
\end{lemma}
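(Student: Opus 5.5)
The plan is to show something stronger than the statement: for each fixed $r$, the supremum over $\mathbf Q^{o}(\tilde r)$ of $\mathrm{H}(P_r,P)$ and of $\mathrm{TV}(P_r,P)$ is bounded below by a constant $c_d>0$ that does not depend on $r$. A bound of the form $Cr^{1+\varepsilon}$ for all $r\in(0,\bar r)$ would force $c_d\le Cr^{1+\varepsilon}\to0$ as $r\downarrow 0$, a contradiction. The reason this works is that QMD fixes the first-order behaviour of each model but places no uniform bound on the size of the score. I will therefore use a one-parameter family of Gaussian location models whose score scale $\lambda$ is left free, and for each $r$ pick $\lambda=1/r$.

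\textbf{Construction.} Take $x_0=\mathbf 0_d$ and let $X\sim\mathrm{Unif}(B_{\rho}(0))$ with $\rho>\tilde r$, so that $B_r\cap\mathcal X=B_r$ for all $r\in(0,\tilde r)$. For $\lambda>0$, let $Q_\lambda$ have $\{Y\mid X=x\}\sim N(\mu_\lambda(x),\mathbf I_m)$, where $\mu_\lambda(x):=(\lambda x_1,0,\dots,0)^\top$.

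\textbf{Membership in $\mathbf Q^{o}(\tilde r)$.} Assumption~\ref{ass:g-regular} holds with $C_g=0$ and $C_{x_0}=\mathrm{Vol}(B_1)$. Assumption~\ref{ass:QMD} holds with score $\dot\ell_{x_0}(y)=(\lambda y_1,0,\dots,0)^\top$. This follows from exactly the mean-value computation in the proof of Lemma~\ref{lem:QMD-sharpness}, with $x_1$ replaced by $\lambda x_1$. The remainder there becomes $\tfrac{\lambda^2t_1^2}{2}\bigl(1-e^{-\lambda^2\tilde x_1^2/8}(1-\lambda^2\tilde x_1^2/4)\bigr)$, which is $o(\|t\|^2)$ for each fixed $\lambda$. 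Hence $Q_\lambda\in\mathbf Q^{o}(\tilde r)$ for every $\lambda>0$.

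\textbf{Lower bound.} Fix $r\in(0,\tilde r)$ and set $\lambda=1/r$.
\begin{enumerate}[(1)]
\item Conditional on $X\in B_r$, $X$ is uniform on $B_r$, so $V:=\lambda X_1=X_1/r$ has the law of the first coordinate of a uniform draw from the unit ball. This law does not depend on $r$ and is nondegenerate.
\item Therefore $P_r$ is the law of $(Z_1+V,Z_2,\dots,Z_m)$ with $Z\sim N(\mathbf 0,\mathbf I_m)$ independent of $V$, while $P=N(\mathbf 0,\mathbf I_m)$. Both laws are fixed, independent of $r$.
\item They are distinct, since the first coordinate has variance $1+\mathrm{Var}(V)>1$ under $P_r$ and variance $1$ under $P$.
\item So $c_d:=\mathrm{TV}(P_r,P)>0$ is a constant depending only on $d$ (and $m$), and $\mathrm{H}(P_r,P)\ge c_d/\sqrt2$ by \eqref{eq:HvsTV}.
\end{enumerate}
Consequently $\sup_{Q\in\mathbf Q^{o}(\tilde r)}\mathrm{TV}(P_r,P)\ge c_d$ and $\sup_{Q\in\mathbf Q^{o}(\tilde r)}\mathrm{H}(P_r,P)\ge c_d/\sqrt2$ for every $r\in(0,\tilde r)$. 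This contradicts both \eqref{eq:no-polynomial-example1} and \eqref{eq:no-polynomial-example1b} for any $C$, $\bar r$ and $\varepsilon>0$.

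The main obstacle is checking that letting $\lambda$ depend on $r$ is legitimate. It is, because the supremum is taken separately at each $r$, and each $Q_\lambda$ is a fixed model in the class.
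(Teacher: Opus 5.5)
Your proof is correct, but it takes a genuinely different route from the paper's.

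\textbf{The paper's route.} The paper exhibits a single model in $\mathbf Q^{o}(\tilde r)$. It takes $X$ uniform on $B_{\tilde r}$ and a binary $Y$ with $P(Y=e_1\mid X=x)=\tfrac12+\delta(\|x\|)$, where $\delta(u)=u/(1-\ln u)$. This model is QMD with zero score because $\delta(u)=o(u)$. Yet it satisfies $\mathrm{TV}(P_r,P)\gtrsim r/(1-\ln(r/2))$, which is $o(r)$ but slower than every $r^{1+\varepsilon}$. So the bound already fails for that one $Q$.

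\textbf{Your route.} You instead exploit the fact that the class imposes no uniform bound on the score, since $\int\|\dot\ell_{x_0}\|^2p_{x_0}\,d\nu=\lambda^2$ is unbounded over your family. Choosing $\lambda=1/r$, you show $\sup_{Q}\mathrm{TV}(P_r,P)\ge c_d>0$ for every $r$. This is simpler. It also proves more about the supremum: no uniform rate of any kind holds over $\mathbf Q^{o}(\tilde r)$, not even $O(r)$ or $o(1)$.

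\textbf{What each buys.} The price of your argument is that the failure comes purely from non-uniform constants. For each fixed $\lambda$, $V$ is symmetric, so the linear term cancels. One then checks that both $\mathrm{TV}(P_r,P)$ and $\mathrm{H}(P_r,P)$ are $O(\lambda^2r^2)$, so every individual member of your family does enjoy a polynomial improvement. The paper's construction gives the more informative pointwise statement: a single QMD model at an interior point can fail to improve polynomially on $O(r)$. That is what complements the $o(r)$ result of Lemma~\ref{lem:QMD-TV-or}.

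\textbf{A minor point.} QMD of each fixed Gaussian shift is standard, so you need not lean on the mean-value computation in Lemma~\ref{lem:QMD-sharpness}. That computation treats the intermediate point $\tilde x$ as fixed, although it depends on $y$.
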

\begin{proof}
We prove this result by contradiction. To this end, fix $\tilde{r}>0$ and $ \varepsilon >0$ arbitrarily, and assume there exists $C<\infty $ and $ \bar{r}\in ( 0,\tilde{r}) $ satisfying \eqref{eq:no-polynomial-example1} or \eqref{eq:no-polynomial-example1}, respectively. For any $Q\in \mathbf{Q}^{o}( \tilde{r}) $, we get
\begin{equation}
\mathrm{H}(P_{r},P)\leq Cr^{1+\varepsilon }\text{ and }\mathrm{TV} (P_{r},P) \leq Cr^{1+\varepsilon }\text{ for all }r\in ( 0,\bar{r} ) ~.
\label{eq:no-polynomial-example2}
\end{equation}
To complete the proof, it suffices to find one distribution $Q\in \mathbf{Q}^{o}( \tilde{ r}) $, such that \eqref{eq:no-polynomial-example2} fails for all $C<\infty $. We show this in the remainder of the proof.

We define the distribution $Q$ as follows. Let $d\geq 1$, $m\geq 1$, and $x_{0}=\mathbf{0} _{d}.$ Let $X\sim \mathrm{Unif}(B_{\tilde{r}})$, so that $g(x)=1[x\in B_{ \tilde{r}}]/\mathrm{Vol}(B_{\tilde{r}})$ and $\mathcal{X}=B_{\tilde{r}}$. Next, let $e_{1}:=(1,0,\ldots ,0)^{\top }\in \mathbf{R}^{m}$ and let $\delta : \mathbf{R}\to \mathbf{R}$ be defined as follows: $\delta ( u) :=u/( 1-\ln u) $ for $u\in ( 0,\exp ( -2) ) $, and $\delta ( u) :=0$ for $u=0$ or $u>\exp ( -2) $. Let $\{ Y|X=x\} $ be supported on $\{\mathbf{0}_{m},e_{1}\}\subset \mathbf{R}^{m}$ with the following distribution: 
\begin{equation*}
P( Y=e_{1}|X=x) =\pi ( x) :=\frac{1}{2}+\delta ( \Vert x\Vert ) \text{ and }P( Y=\mathbf{0}_{m}|X=x) =1-\pi ( x) ~.
\end{equation*}

Since $X\sim \mathrm{Unif}(B_{\tilde{r}})$ and $x_{0}=\mathbf{0} _{d}$, Assumption~\ref{ass:g-regular} holds with $C_{g}=0$, $r_{0}=\tilde{r}$, and $C_{x_{0}}=\pi ^{d/2}/\Gamma ( d/2+1) $.
Also, $B_{r}\cap \mathcal{X}=B_{r}$ for all $r\in ( 0, \tilde{r}) $. Next, we show that Assumption \ref{ass:QMD} holds with $\dot{\ell}_{x_{0}}(y):= \mathbf{0}_{d}$ and $\nu $ equal to the counting measure on $\{\mathbf{0} _{m},e_{1}\}$. First, note that $\dot{\ell}_{x_{0}}(y)=\mathbf{0}_{d}$ implies $\int \Vert \dot{\ell}_{x_{0}}(y)\Vert ^{2}p_{x_{0}}(y)\nu (dy)=0<\infty $. Second, for any $\Vert t\Vert \in (0,\exp (
-2) )$, we have 
\begin{align}
& \int ( \sqrt{p_{x_{0}+t}(y)}-\sqrt{p_{x_{0}}(y)}-\frac{1}{2}t^{\top }\dot{\ell}_{x_{0}}(y)\sqrt{p_{x_{0}}(y)}) ^{2}\nu (dy)  \notag \\
& \overset{(1)}{=}( \sqrt{\tfrac{1}{2}+\delta ( \Vert t\Vert ) }-\sqrt{\tfrac{1}{2}}) ^{2}+( \sqrt{\tfrac{1}{2}-\delta ( \Vert t\Vert ) }-\sqrt{\tfrac{1}{2}}) ^{2}  \notag \\
& \overset{(2)}{=}
2-\sqrt{1+\tfrac{2\Vert t\Vert }{( 1-\ln \Vert t\Vert ) }}-\sqrt{1-\tfrac{2\Vert t\Vert }{( 1-\ln \Vert t\Vert ) }}~,
\label{eq:no-polynomial-example3}
\end{align}%
where (1) holds by $\dot{\ell}_{x_{0}}(y)=\mathbf{0}_{d}$, $p_{x_{0}+t}(e_{1})=1/2+\delta ( \Vert x_{0}+t\Vert ) $, $p_{x_{0}+t}(\mathbf{0}_{m})=1/2-\delta ( \Vert x_{0}+t\Vert ) $, and $p_{x_{0}}(y)=1/2$ for $y\in \{\mathbf{0}_{m},e_{1}\}$, (2) by $\delta ( u) :=u/( 1-\ln u) $ for $u\in (
0,\exp ( -2) ) $. By a second-order Taylor expansion of $h( u) =2-\sqrt{1+2u}-\sqrt{1-2u}$ centered at $u=0$, and that $h( 0) =h^{\prime }( 0) =0$, 
\begin{equation*}
h( u) =( ( 1+2\tilde{u}) ^{-3/2}+( 1-2\tilde{ u}) ^{-3/2}) \frac{u^{2}}{2}
\end{equation*}
with $\tilde{u}\in [0,u]$. Next, note that $\Vert t\Vert /( 1-\ln \Vert t\Vert ) \in ( 0,\exp ( -2) /3) $ for $\Vert t\Vert \in (0,\exp ( -2) )$, and $h^{\prime \prime }( u) \in ( 2,2.1) $ for $u\in ( 0,\exp ( -2) /3) $. By plugging the results on \eqref{eq:no-polynomial-example3}, we conclude that 
\begin{equation*}
\int ( \sqrt{p_{x_{0}+t}(y)}-\sqrt{p_{x_{0}}(y)}-\frac{1}{2}t^{\top } \dot{\ell}_{x_{0}}(y)\sqrt{p_{x_{0}}(y)}) ^{2}\nu (dy)=O\Big( \tfrac{\Vert t\Vert ^{2}}{( 1-\ln \Vert t\Vert ) ^{2}}\Big)~.
\end{equation*}
The right-hand side expression of the previous equation is $o( \Vert t\Vert ^{2})$ as  $\Vert t\Vert \downarrow 0$, which verifies Assumption \ref{ass:QMD}. We then conclude that $Q\in \mathbf{Q}^{o}( \tilde{r}) $.

Next, let $A:=\{ y\in \mathbf R^{m}:y=e_{1}\} $. Consider the following argument for any $r\in ( 0,\min\{\tilde r,\exp(-2)\}) $.
\begin{align}
\sqrt{2}\mathrm{H}( P_{r},P)  \geq \mathrm{TV}(P_{r},P)&\geq \vert P_{r}( A) -P( A) \vert   \notag \\
&\overset{(1)}{=}{\int_{\Vert x\Vert \leq r}\delta ( \Vert x\Vert ) dx}/{\mathrm{Vol}(B_{r})} \notag \\
&\overset{(2)}{=}\tfrac{d}{r^{d}}\int_{0}^{r}{v^{d}}/{(1-\ln v)}dv \notag \\
&\overset{(3)}{\geq }\tfrac{d}{r^{d}}\int_{r/2}^{r}{v^{d}}/{(1-\ln v)}dv  \notag \\
&\geq \tfrac{d}{( 1-\ln ( r/2) ) r^{d}} \int_{r/2}^{r}v^{d}dv  \notag \\
&= \tfrac{d( 1-2^{-d-1})}{d+1 } \tfrac{r}{ 1-\ln ( r/2) }
~, \label{eq:no-polynomial-example4}
\end{align}
where (1) holds because $P(Y=e_{1}\mid X=x)=1/2+\delta(\Vert x\Vert)\geq 1/2$ for all $x$, $P(Y=e_{1}\mid X=x_{0})=1/2$ since $\delta(0)=0$, $X\sim \mathrm{Unif}(B_{\tilde r})$, and $r<\tilde{r}$, (2) by change of variables to polar coordinates, and (3) by $v^{d}/( 1-\ln v) \geq 0$ for $v<r<\exp(-2)<1$. Note that the right-hand side expression in \eqref{eq:no-polynomial-example4} exceeds $C r^{(1+\varepsilon)}$ for any $C \in (0,\infty) $ when $r \in (0,\bar{r})$ becomes sufficiently close to zero. Therefore, \eqref{eq:no-polynomial-example2} must fail for all $C<\infty $, which concludes the proof.
\end{proof}

\begin{lemma}\label{lem:QMD-TV-or}
Assume that $x_0$ is interior of $\mathcal{X}$, in the sense that for $r$ small enough $B_r \cap \mathcal X = B_r$. Then, under Assumptions \ref{ass:g-regular}--\ref{ass:QMD},
\[
TV( P_{r},P) =o( r)~ .
\]
\end{lemma}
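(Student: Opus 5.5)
The idea is that at an interior point the first-order term of the QMD expansion cancels by symmetry when we average over the ball, so only the $o(\|t\|)$ remainder survives.

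Recall $P_r=\int P_x\,\mu_r(dx)$ with $\mu_r$ as in \eqref{eq:thm3_5}. We write
\[
2\,\mathrm{TV}(P_r,P)=\int |h_r(y)-p_{x_0}(y)|\,\nu(dy),\qquad h_r-p_{x_0}=\int (p_x-p_{x_0})\,\mu_r(dx).
\]
Write $s_x:=\sqrt{p_x}$ and $s_0:=\sqrt{p_{x_0}}$. Then $p_x-p_{x_0}=(s_x-s_0)(s_x+s_0)$. Put $\dot s(y):=\tfrac12\dot\ell_{x_0}(y)s_0(y)$ and $R_x:=s_x-s_0-(x-x_0)^\top\dot s$. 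Assumption~\ref{ass:QMD} gives $\|R_x\|_{L_2(\nu)}=o(\|x-x_0\|)$. We decompose
\[
p_x-p_{x_0}=2s_0\,(x-x_0)^\top\dot s+2s_0R_x+(s_x-s_0)^2 .
\]
The third term has $L_1(\nu)$ norm $D(x)=O(\|x-x_0\|^2)$ by \eqref{eq:thm3_3}. The second has $L_1$ norm at most $2\|R_x\|_2=o(\|x-x_0\|)$ by Cauchy--Schwarz, since $\|s_0\|_2=1$.

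Next we integrate against $\mu_r$ and use Minkowski's integral inequality to move $\int|\cdot|\,d\nu$ inside $\mu_r$ for the last two terms. Their contributions are then at most $\sup_{\|x-x_0\|<r}2\|R_x\|_2+\int D\,d\mu_r=o(r)+O(r^2)=o(r)$. For the first term, the contribution is
\[
\int |2s_0\dot s^\top|\,d\nu\cdot\Big\|\int (x-x_0)\,\mu_r(dx)\Big\| ,
\]
where the prefactor is bounded by $\big(\int\|\dot\ell_{x_0}\|^2p_{x_0}\,d\nu\big)^{1/2}<\infty$.

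The key step is bounding the mean displacement $\int(x-x_0)\,\mu_r(dx)$. Because $B_r\cap\mathcal X=B_r$ for small $r$, we can write $\int(x-x_0)g(x)\,dx=\int_{B_r}(x-x_0)\big(g(x)-g(x_0)\big)\,dx$, since $\int_{B_r}(x-x_0)\,dx=0$ by symmetry of the ball. By the Lipschitz bound in Assumption~\ref{ass:g-regular}(i), this is $O(r\cdot r\cdot r^d)$. Dividing by $F(r)\ge C_Lr^d$ from Lemma~\ref{lem:Fr} gives $\|\int(x-x_0)\,\mu_r(dx)\|=O(r^2)$. Hence the first-order term contributes $O(r^2)$, and altogether $\mathrm{TV}(P_r,P)=o(r)$.

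The main obstacle is the uniformity in the remainder step. QMD only gives $\|R_x\|_2=o(\|x-x_0\|)$ as $x\to x_0$, but that is exactly what we need: the $\sup$ over the shrinking ball is $o(r)$ by the definition of little-$o$. A second point to check is measurability of $(x,y)\mapsto p_x(y)$, which justifies the Fubini/Minkowski interchange; it holds because $f$ is a joint density. This argument also shows why the boundary case fails: there, the symmetry cancellation of the linear term is lost.
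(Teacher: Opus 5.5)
Your proposal is correct and follows essentially the same route as the paper. Both arguments split $p_{x_0+t}-p_{x_0}$ into the linear score term $t^\top\dot\ell_{x_0}p_{x_0}$ plus QMD remainders, kill the linear term through $\|\int (x-x_0)\,\mu_r(dx)\|=O(r^2)$ (ball symmetry, Lipschitz $g$, and $F(r)\ge C_L r^d$), and show the remainders are $o(r)$. The differences are cosmetic: you bound the $L_1(\nu)$ norm directly via Minkowski's inequality rather than taking a supremum over $|\phi|_\infty\le 1$, and you keep $(s_x-s_0)^2$ grouped and control it by $D(x)=O(\|x-x_0\|^2)$ instead of expanding it as the paper does in $R_\phi(t)$.
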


\begin{proof}
First, consider the following argument:
\begin{align}
TV( P_{r},P) & = \sup_{\vert \phi \vert _{\infty }\leq 1}\Big\vert \int_{y}\phi ( y) ( P_{r}( dy)-P( dy) ) \Big\vert/2   \nonumber \\
& \overset{(1)}{=}\sup_{\vert \phi \vert _{\infty }\leq 1}\Big\vert \int_{y}\phi ( y) \Big[ \int_{t}p_{x_{0}+t}( y) \mu _{r}( dt)  -p_{x_{0}}( y)\Big] \nu ( dy) \Big\vert  /2
\nonumber \\
& \overset{(2)}{=} \sup_{\vert \phi \vert _{\infty }\leq 1}\Big\vert \int_{t}\Big( \int_{y}\phi ( y) p_{x_{0}+t}( y) \nu ( dy) \Big) \mu _{r}( dt) -\int_{y}\phi ( y) p_{x_{0}}( y) \nu ( dy) \Big\vert /2 \nonumber \\
& \overset{(3)}{=} \sup_{\vert \phi \vert _{\infty }\leq 1}\vert E[ m_{\phi }( T) ] -m_{\phi }( 0) \vert/2 ~,\label{eq:deb_proof_1}
\end{align}
where (1) holds by $P$ having density $p_{x_{0}}( y) $ and $P_{r}$ having density $\int_{t}p_{x_{0}+t}( y) \mu _{r}( t) dt$ with 
\begin{equation}
\mu _{r}( t) :=\frac{g( x_{0}+t) I\{ \Vert t\Vert <r\} }{F( r) }~,\label{eq:deb_proof_2}
\end{equation}
(2) by Fubini's Theorem,
and (3) by defining the function 
\begin{equation*}
    m_{\phi }( t) :=\int_{y}\phi ( y) p_{x_{0}+t}( y) \nu ( dy)
\end{equation*}
and the random variable $T$ with density $\mu _{r}$.

Next, note that
\begin{align}
m_{\phi }( t) -m_{\phi }( 0) 
&{=}\int_{y}\phi (y) ( p_{x_{0}+t}( y) -p_{x_{0}}( y) )\nu ( dy) \notag \\
&{=}\int_{y}\phi ( y) \Big( ( \sqrt{p_{x_{0}+t}( y) }-\sqrt{p_{x_{0}}( y) }) ^{2}+ 2( \sqrt{p_{x_{0}+t}( y) }-\sqrt{p_{x_{0}}( y) }
) \sqrt{p_{x_{0}}( y) } \Big) \nu ( dy) \notag \\
&\overset{(1)}{=} t^{\prime }a_{\phi }+R_{\phi }( t)~,
\label{eq:deb_proof_3}
\end{align}
where (1) holds by defining
\begin{align*}
a_{\phi }& :=\int_{y}\phi ( y) \dot{\ell }_{x_{0}}( y) p_{x_{0}}( y) \nu ( dy)  \\
R_{\phi }( t) & :=\int_{y}\phi ( y) \Big( ( u_{t}( y) ) ^{2}+\frac{1}{4}( t^{\prime }\dot{\ell } _{x_{0}}( y) ) ^{2}p_{x_{0}}( y)  +u_{t}( y) ( t^{\prime }\dot{\ell }_{x_{0}}( y) +2) \sqrt{p_{x_{0}}( y) } \Big) \nu ( dy) 
\end{align*}
with 
\begin{equation*}
    u_{t}( y) :=\sqrt{p_{x_{0}+t}( y) }-\sqrt{p_{x_{0}}( y) }-\frac{1}{2}t^{\prime }\dot{\ell }_{x_{0}}(y) \sqrt{p_{x_{0}}( y) }~.
\end{equation*}

By \eqref{eq:deb_proof_2} and \eqref{eq:deb_proof_3}, we get
\begin{align}
TV( P_{r},P) & =\sup_{\vert \phi \vert _{\infty }\leq 1}\vert E[ T] ^{\prime }a_{\phi }+E[ R_{\phi }( T) ] \vert  /2  \nonumber \\ 
& \leq \Vert E[ T] \Vert~ \sup\nolimits_{\vert \phi \vert _{\infty }\leq 1}\Vert a_{\phi }\Vert/2 ~+~E\big[ \sup\nolimits_{\vert \phi \vert _{\infty }\leq 1}\vert R_{\phi }( T) \vert \big] /2~.  \label{eq:deb_proof_4}
\end{align}
The desired result is a corollary of \eqref{eq:deb_proof_4} and the next results:
\begin{align}
    \Vert E[ T] \Vert &= O( r^{2})\label{eq:deb_proof_5}\\
    \sup\nolimits_{\vert \phi \vert _{\infty }\leq 1}\Vert a_{\phi }\Vert &= O(1)\label{eq:deb_proof_6}\\
   E\big[ \sup\nolimits_{\vert \phi \vert _{\infty }\leq 1}\vert R_{\phi }( T) \vert \big]&= o( r)~.\label{eq:deb_proof_7}
\end{align}

To show \eqref{eq:deb_proof_5}, consider the following argument for any $r \in (0,\tilde r)$,
\begin{align}
\Vert E[ T]\Vert  & \overset{(1)}{=}\Big\Vert  \int_{t}t\frac{g( x_{0}+t) I\{ \Vert t\Vert <r\} }{F( r) }dt  \Big\Vert \nonumber\\
&\overset{(2)}{=}\frac{1}{F( r) } \Big\Vert \int_{\{ \Vert t\Vert <r\} }t( g( x_{0}+t) -g( x_{0}) ) dt \Big\Vert  \nonumber \\
&\overset{(3)}{\leq }\frac{C_{g}}{C_{L}r^{d}}\int_{\{ \Vert t\Vert <r\} }\Vert t\Vert ^{2}dt  \nonumber \\
&\overset{(4)}{=}O( r^{2}) ~,
\end{align}
as desired, where (1) holds by \eqref{eq:deb_proof_2} and (2) by $r<\tilde r$, which implies $B_r\cap \mathcal X = B_r$, and so 
\begin{equation*}
    \int_{\{ \Vert t\Vert <r\} \cap \{ x_{0}+t\in \mathcal{X}\} }t~dt=\int_{\{ \Vert t\Vert <r\} }t~dt=0~,
\end{equation*}
(3) by Assumption \ref{ass:g-regular} and Lemma \ref{lem:Fr}, and (4) by $ \int_{\{ \Vert t\Vert <r\} }\Vert t\Vert ^{2}dt\leq O( r ^{d+2}) $.

To show \eqref{eq:deb_proof_6}, consider the following derivation:
\begin{equation}
\sup_{\vert \phi \vert _{\infty }\leq 1}\Vert a_{\phi }\Vert \overset{(1)}{\leq }\int_{y}\Vert \dot{\ell }_{x_{0}}( y) \Vert p_{x_{0}}( y) \nu ( dy) \overset{(2)}{\leq }\sqrt{I( x_{0}) }~,
\end{equation}
where (1) holds by $\vert \phi \vert _{\infty }\leq 1$, and (2) by the Cauchy–Schwarz inequality and Assumption \ref{ass:QMD}, which implies $I( x_{0}) =:\int_{y}\Vert \dot{\ell }_{x_{0}}( y) \Vert ^{2}p_{x_{0}}( y) \nu ( dy) <\infty $.

Finally, we show \eqref{eq:deb_proof_7}. To this end, consider the next argument for any $t \in \mathbb{R}^d$ with $\Vert t\Vert \leq r$,
\begin{align}
\sup_{\vert \phi \vert _{\infty }\leq 1}\vert R_{\phi }( t) \vert  
&\overset{(1)}{\leq} 
\int_{y}( u_{t}( y) ) ^{2}\nu ( dy) +\frac{\Vert t\Vert ^{2}}{4}\int_{y}\Vert \dot{\ell }_{x_{0}}(y) \Vert ^{2}p_{x_{0}}( y) \nu ( dy) \notag \\
    &\quad + \Vert t\Vert \int_{y}\vert u_{t}( y) \vert\Vert \dot{\ell }_{x_{0}}( y) \Vert \sqrt{p_{x_{0}}(y) }\nu ( dy)  
+2\int_{y}\vert u_{t}( y) \vert \sqrt{p_{x_{0}}(y) }\nu ( dy)  \notag \\
&\overset{(2)}{\leq } 
\int_{y}( u_{t}( y) ) ^{2}\nu ( dy) +\frac{ \Vert t\Vert ^{2}}{4}\int_{y}\Vert \dot{\ell }_{x_{0}}( y) \Vert ^{2}p_{x_{0}}( y) \nu ( dy) \notag \\ 
    &\quad + \Big( \int_{y}( u_{t}( y) ) ^{2}\nu ( dy) \Big) ^{1/2}\Big(\Vert t\Vert \Big( \int_{y}\Vert \dot{\ell }_{x_{0}}( y) \Vert ^{2}p_{x_{0}}( y) \nu ( dy) \Big) ^{1/2}+2\Big)  \notag \\
&\overset{(3)}{=}o( \Vert t\Vert ) \text{ as }\Vert t \Vert \to 0~,\label{eq:deb_proof_8}
\end{align}
where (1) holds by $\vert \phi \vert _{\infty }\leq 1$, (2) by the Cauchy–Schwarz inequality, and (3) by Assumption \ref{ass:QMD}, which implies that $ ( \int_{y}( u_{t}( y) ) ^{2}\nu ( dy) ) ^{1/2}=o( \Vert t\Vert ) $ as $\Vert t \Vert \to 0$, and $\int_{y}\Vert \dot{\ell }_{x_{0}}( y) \Vert ^{2}p_{x_{0}}( y) \nu ( dy) <\infty $. Since $T$ is supported on $\{ \Vert t\Vert \leq r\}$, \eqref{eq:deb_proof_8} implies \eqref{eq:deb_proof_7}, concluding the proof.
\end{proof}

\begin{lemma}\label{lem:ordered_uniform_power}
Let $U_{( k:n) }$ be the $k$-th order statistic from an i.i.d.\ sample $U_{1},\dots ,U_{n}\sim \mathrm{U}(0,1)$. For any real $m>0$,
\begin{equation*}
E[ U_{( k:n) }^{m}] =O( ({k}/{n}) ^{m}) ~.
\end{equation*}
\end{lemma}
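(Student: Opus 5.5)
The plan is to use the Beta representation: $U_{(k:n)}\sim\mathrm{Beta}(k,n-k+1)$. For any real $m>0$ this gives the moment formula
\[
E\big[U_{(k:n)}^{m}\big]=\frac{\Gamma(k+m)\,\Gamma(n+1)}{\Gamma(k)\,\Gamma(n+1+m)}~.
\]
The claim then reduces to showing that this ratio of Gamma functions is bounded by $C_m (k/n)^m$, with $C_m$ depending only on $m$ and not on $k\in\{1,\dots,n\}$ or $n\in\mathbf N$. Uniformity in both $k$ and $n$ matters, because the proof of Theorem~\ref{thm:high-level} applies the bound uniformly.

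\textbf{Case $m\le 1$.} Here we avoid Gamma asymptotics altogether. The map $u\mapsto u^{m}$ is concave, so Jensen's inequality gives
\[
E\big[U_{(k:n)}^{m}\big]\le \big(E[U_{(k:n)}]\big)^{m}=\big(k/(n+1)\big)^{m}\le (k/n)^{m}~.
\]

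\textbf{Case $m>1$.} We handle the two Gamma ratios separately, using Gautschi/Wendel-type inequalities.

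For the numerator, we use $\Gamma(k+m)/\Gamma(k)\le C_m k^{m}$ for all $k\ge1$. One route is to write $m=j+s$ with $j$ an integer and $s\in[0,1)$. Then
\[
\frac{\Gamma(k+m)}{\Gamma(k)}=\Big[\prod_{i=0}^{j-1}(k+s+i)\Big]\cdot\frac{\Gamma(k+s)}{\Gamma(k)}~.
\]
Wendel's inequality gives $\Gamma(k+s)/\Gamma(k)\le k^{s}$, and each factor $k+s+i\le (1+m)k$ because $k\ge1$.

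For the denominator, we use $\Gamma(n+1)/\Gamma(n+1+m)\le c_m n^{-m}$. The same decomposition works: write $\Gamma(n+1+m)=\Gamma(n+1+s)\prod_{i=0}^{j-1}(n+1+s+i)$. Wendel's lower bound gives $\Gamma(n+1+s)/\Gamma(n+1)\ge (n+1)^{s}\big(\tfrac{n+1}{n+1+s}\big)^{1-s}\ge \tfrac12 n^{s}$. The remaining product is at least $n^{j}$.

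Multiplying the two bounds gives $E[U_{(k:n)}^m]\le C_m (k/n)^m$ for all $1\le k\le n$.

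An alternative route for integer $m$ is the exact formula $\prod_{i=0}^{m-1}\frac{k+i}{n+1+i}\le\prod_{i}\frac{(1+i)k}{n}$. Non-integer $m$ can then be handled by Lyapunov's inequality, $E[U^m]\le (E[U^{\lceil m\rceil}])^{m/\lceil m\rceil}$, which avoids Gamma inequalities entirely. I would actually present this route, since it is the cleanest.

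The main obstacle is keeping the constant uniform over small $k$ (e.g.\ $k=1$), where asymptotic Stirling equivalents are not valid. Both the Lyapunov/integer-moment route and the explicit product bounds above avoid this, because they only use $k\ge1$ through $k+i\le(1+i)k$.
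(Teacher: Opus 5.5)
Your proof is correct. It starts from the same Beta moment identity as the paper but bounds the Gamma ratio differently.

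The paper shows that $\Gamma(t+m)/(t^m\Gamma(t))$ lies in $[C_1(m),C_2(m)]$ for all $t\in\mathbf N$. It gets this by combining the asymptotic limit $\Gamma(t+m)/(t^m\Gamma(t))\to1$ with a finite maximum and minimum over small $t$, and then applies the bound with $t=k$ in the numerator and $t=n+1$ in the denominator. The resulting constant $C_2(m)/C_1(m)$ is not explicit. The two-sided bound, however, also shows that the order is attained, i.e.\ $E[U_{(k:n)}^m]\asymp (k/(n+1))^m$.

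Your preferred route uses the exact product $\prod_{i=0}^{M-1}(k+i)/(n+1+i)\le M!\,(k/n)^M$ for the integer $M=\lceil m\rceil$, followed by Lyapunov's inequality. This gives the explicit constant $(M!)^{m/M}$, uniform in $1\le k\le n$, and avoids Gamma asymptotics entirely. It also subsumes your Jensen case $m\le1$, which is Lyapunov with $M=1$. It yields only the upper bound, but that is all Theorem~\ref{thm:high-level} uses.

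Your Wendel-based alternative is also valid. It amounts to an explicit, non-asymptotic version of the paper's two-sided Gamma-ratio bound.
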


\begin{proof}
For any $m>0$, $k\in \{1,\dots,n\}$, and $n\in \mathbf{N}$, consider the following derivation:
\begin{equation}\label{eq:ordered_1}
    E[ U_{(k:n)}^{m}] \overset{(1)}{=}\frac{B( m+k,n-k+1) }{B( k,n-k+1) }\overset{(2)}{=}\frac{\Gamma(m+k) /\Gamma (k) }{\Gamma( m+n+1)/\Gamma( n+1) }~,  
\end{equation}
where (1) holds by \citet[p.~10-11]{david/nagaraja:2003}, \citet[6.2.1, p.~257]{abramowitz/stegun:1964}, where $B( a,b)
=\int_{0}^{1}t^{a-1}( 1-t) ^{b-1}dt:( 0,\infty ) ^{2}\to ( 0,\infty ) $ is the Beta function, and (2) by \citet[6.2.2, p.~257]{abramowitz/stegun:1964}, which gives $B( a,b) =\Gamma ( a) \Gamma ( b) /\Gamma ( a+b) $, where $\Gamma ( a) =\int_{0}^{\infty }t^{a-1}\exp ( -t) dt:( 0,\infty ) \to ( 0,\infty ) $ is the Gamma function.

Next, we show that, for all $m>0$, we can find constants $C_{1}(m),C_{2}( m)$ with $0<C_{1}( m) <C_{2}( m) <\infty $ such that for all $t\in \mathbf{N}$, 
\begin{equation}\label{eq:ordered_2}
    0<C_{1}( m) \leq \frac{\Gamma ( m+t) }{t^{m}\Gamma ( t) }\leq C_{2}( m) \leq \infty ~.
\end{equation}

By definition, $\Gamma :( 0,\infty ) \to \mathbf{R}$ is continuous and positive. Therefore, for all $t\in \mathbf{N}$, $t^{-m}\Gamma
( m+t) /\Gamma ( t) $ is positive. Also, \citet[6.1.46, p.~257]{abramowitz/stegun:1964} implies that $\lim_{t\to \infty }t^{-m}\Gamma ( m+t) /\Gamma ( t) =1$ for any $m\in \mathbf{R} $. Therefore, $\exists t_{1}\in \mathbf{N}$ such that for all $t>t_{1}$, $t^{-m}\Gamma ( m+t) /\Gamma ( t) \in (1/2,2)$. Then, for all $t\in \mathbf{N}$, 
\begin{align*}
    \frac{\Gamma ( m+t) }{t^{m}\Gamma ( t) } 
    \leq \max \left\lbrace \max_{t\leq t_{1}}\frac{\Gamma ( m+t) }{t^{m}\Gamma ( t) },\max_{t>t_{1}}\frac{\Gamma ( m+t) }{t^{m}\Gamma ( t) }\right\rbrace  
    \leq \max \left\lbrace  \max_{t\leq t_{1}}\frac{\Gamma ( m+t) }{t^{m}\Gamma ( t) },2 \right\rbrace := C_{2}( m) < \infty
\end{align*}
and
\begin{align*}
    \frac{\Gamma ( m+t) }{t^{m}\Gamma ( t) } 
    \geq \min \left\lbrace \min_{t\leq t_{1}}\frac{\Gamma ( m+t) }{t^{m}\Gamma ( t) },\min_{t>t_{1}}\frac{\Gamma ( m+t) }{t^{m}\Gamma ( t) }\right\rbrace 
    \geq \min \left\lbrace \min_{t\leq t_{1}}\frac{\Gamma ( m+t) }{ t^{m}\Gamma ( t) }, \frac12 \right\rbrace := C_{1}( m) >0~,
\end{align*}
as desired in \eqref{eq:ordered_2}.

To conclude the proof, for any $k\in\{1,\dots,n\}$ and $n\in\mathbf N$, by \eqref{eq:ordered_1},
\[
E[U_{(k:n)}^m]
=\frac{\Gamma(m+k)/\Gamma(k)}{\Gamma(m+n+1)/\Gamma(n+1)}~.
\]
Applying \eqref{eq:ordered_2} with $t=k$ in the numerator and with $t=n+1$ in the denominator gives
\[
E[U_{(k:n)}^m] \le \frac{C_2(m)k^m}{C_1(m)(n+1)^m} =
\frac{C_2(m)}{C_1(m)}\Big(\frac{k}{n+1}\Big)^m \le \frac{C_2(m)}{C_1(m)}\Big(\frac{k}{n}\Big)^m~.
\]
Hence $E[U_{(k:n)}^m]=O((k/n)^m)$.
\end{proof}

\begin{lemma}\label{lem:FHR-Lp}
Suppose $g(x_0)>0$ and Assumption \ref{ass:book} holds with additional integrability condition: $\zeta_2\in L^p(p_{x_0}\cdot\nu)$. Then, for $p\geq1,$
\begin{align*}
    \mathrm{H}(P_r, P) = O(r^{\min\{p,2\}})\quad \text{ and }\quad\mathrm{TV}(P_r, P) = O(r^2)~.
\end{align*}
\end{lemma}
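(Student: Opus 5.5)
The plan is a direct computation of the mixture density $h_r$ under the expansion \eqref{eq:book-expansion}. Afterwards, I bound $\mathrm{TV}$ and $\mathrm{H}$ by integrating a pointwise relative error against $p_{x_0}\,\nu$.

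First, the set-up. By Assumption~\ref{ass:book}, $x_0$ is interior, so for $r\le t_0$ small we have $B_r\subset\mathcal X$. The $O(\cdot)$ in \eqref{eq:book-expansion} is uniform in $(t,y)$, so there is a constant $M<\infty$ with
\[
f(x_0+t,y)=f(x_0,y)\{1+t^\top\zeta_1(y)+\rho(t,y)\},\qquad |\rho(t,y)|\le M\|t\|^2\zeta_2(y).
\]
On $\mathcal N=\{y:f(x_0,y)=0\}$ this gives $f(x_0+t,y)=0$, so $h_r=p_{x_0}=0$ there and such $y$ contribute nothing. I therefore work only on $\{f(x_0,y)>0\}$.

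Second, the numerator of $h_r$. Integrating over $t\in B_r-x_0$ and using the symmetry $\int_{\|t\|<r}t\,dt=0$ kills the linear term:
\[
\int_{B_r}f(x,y)\,dx=f(x_0,y)\,\mathrm{Vol}(B_r)\{1+\bar\rho_r(y)\},\qquad |\bar\rho_r(y)|\le Mr^2\zeta_2(y).
\]

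Third, the denominator $F(r)=\int_{B_r}g$. I integrate the same identity in $y$; this is justified because $\zeta_1,\zeta_2\in L^2(p_{x_0}\nu)\subset L^1(p_{x_0}\nu)$. This gives $F(r)=g(x_0)\mathrm{Vol}(B_r)\{1+c_r\}$ with $|c_r|\le Mr^2\int\zeta_2p_{x_0}\,d\nu=O(r^2)$. Dividing, for $r$ small,
\[
h_r(y)=p_{x_0}(y)\{1+u_r(y)\},\qquad u_r(y)=\frac{\bar\rho_r(y)-c_r}{1+c_r},\qquad |u_r(y)|\le C r^2\{1+\zeta_2(y)\},
\]
and automatically $u_r\ge -1$ because $h_r\ge0$.

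The TV bound is then immediate:
\[
\mathrm{TV}(P_r,P)\le\int p_{x_0}|u_r|\,d\nu\le Cr^2\int(1+\zeta_2)p_{x_0}\,d\nu=O(r^2).
\]

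For Hellinger, note
\[
2\mathrm{H}^2(P_r,P)=\int p_{x_0}\big(\sqrt{1+u_r}-1\big)^2d\nu.
\]
The elementary bound $(\sqrt{1+u}-1)^2\le\min\{u^2,|u|\}$ for $u\ge-1$ gives $(\sqrt{1+u}-1)^2\le|u|^q$ for any $q\in[1,2]$. Take $q=\min\{p,2\}$. Then
\[
\mathrm{H}^2(P_r,P)\le C r^{2q}\int(1+\zeta_2)^q p_{x_0}\,d\nu<\infty,
\]
where the integral is finite because $\zeta_2\in L^p\cap L^2$ and $q\le\max\{p,2\}$. Hence $\mathrm{H}(P_r,P)=O(r^{q})=O(r^{\min\{p,2\}})$.

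The main obstacle I expect is bookkeeping rather than ideas. One point is making the uniform $O(\cdot)$ in \eqref{eq:book-expansion} into the explicit constant $M$. The other is controlling the normalizing constant $F(r)$ so that the relative error $u_r$ is bounded by a single integrable envelope $r^2(1+\zeta_2)$; this requires justifying Fubini with the envelope $(1+\|\zeta_1\|\,\|t\|+M\|t\|^2\zeta_2)f(x_0,\cdot)$. The symmetry cancellation of the linear term, which relies on $x_0$ being interior, is what delivers $r^2$ rather than $r$.
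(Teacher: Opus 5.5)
Your proposal is correct and follows essentially the same route as the paper. The steps match: symmetric integration over $B_r$ to kill the linear term, the representation $h_r=p_{x_0}(1+u_r)$ with $|u_r|\le Cr^2(1+\zeta_2)$ (the paper's $\Delta_r$ with envelope $\zeta_2+\bar\zeta_2$), a direct $L^1$ bound for TV, and the inequality $(\sqrt{1+s}-1)^2\le\min\{s^2,|s|\}$ for Hellinger. The only difference is cosmetic: you use $\min\{s^2,|s|\}\le|s|^q$ with $q=\min\{p,2\}$ in one line, whereas the paper truncates at $\{u\le r^{-2}\}$ and treats $1\le p\le 2$ and $p\ge 2$ separately, reaching the same bound.
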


\begin{proof}
By Assumption \ref{ass:book}, there exist $\tilde\zeta_{2,v}$ and $\varepsilon_0>0, C_1>0$ uniformly for all $v\in\mathbf{R}^d$ with $\|v\|\le \varepsilon_0$ and all $y\in\mathbf{R}^m$,
\[
    f(x_0+v,y)=f(x_0,y)\big[ 1 + \zeta_1(y)^{\top}v + \|v\|^2\tilde\zeta_{2,v}(y)\big],
    \qquad |\tilde\zeta_{2,v} (y)| \leq C_1\zeta_2(y)~.
\]
Integrating with respect to $y$ gives 
\[
    g(x_0+v) = g(x_0)\big[ 1 + \bar\zeta_1^{\top}v + \bar\zeta_{2,v}\|v\|^2\big]~,
\]
where 
\[
    \bar\zeta_1 := \int \zeta_1(y) p_{x_0}(y) \nu(dy), \qquad
    \bar\zeta_{2,v} := \int \tilde\zeta_{2,v}(y) p_{x_0}(y) \nu(dy)~.
\]
Since $ \zeta_2(y) \in L^p(p_{x_0}\cdot\nu)$, $\bar\zeta_{2,v}$ is bounded uniformly for $\|v\|\leq \varepsilon_0$:
$$|\bar\zeta_{2,v}|\leq \int |\tilde\zeta_{2,v}(y)| p_{x_0}(y) \nu(dy)~\leq C_1 \int \zeta_2(y)p_{x_0}(y)\nu(dy)=: C_1\bar\zeta_2 <\infty~.$$

Choose $r \in (0, \varepsilon_0). $ To obtain $h_r(y)$, we integrate the joint density over $B_r$. Let $B_r(0)$ be a ball of radius $r$ around the origin. Since $x_0\in\mathrm{int}(\mathcal X)$, by symmetry, the numerator is 
\begin{gather*}
\int_{B_r(0)} f(x_0+v,y)\,dv
= \mathrm{Vol}(B_{r})f(x_0,y)[1+A_r(y)],
 ~~\text{where}~ A_r(y):=\frac{1}{\mathrm{Vol}(B_{r})}\int_{B_r(0)} \|v\|^2 \tilde\zeta_{2,v}(y)\,dv~.
\end{gather*}
On $B_r(0)$, $\|v\|^2 \leq r^2$, so $A_r(y)$ is bounded from above:
\begin{align}
    |A_r(y)|\le \frac{1}{\mathrm{Vol}(B_{r})}\int_{B_r(0)} \|v\|^2 |\tilde\zeta_{2,v}(y)|\,dv 
    \le  \frac{1}{\mathrm{Vol}(B_{r})}\int_{B_r(0)} r^2 C_1 \zeta_{2}(y)\,dv = C_1r^2\,\zeta_2(y)~.
\end{align}
To get the denominator, integrate $g(x_0+v)$ over $B_r(0)$. Then, by Tonelli's theorem, we have
\begin{gather*}
\int_{B_r(0)} g(x_0+v)\,dv 
= \int_{B_r(0)}\int f(x_0+v, y)\,\nu(dy)\,dv 
=\mathrm{Vol}(B_{r})\int f(x_0,y)  [1+ A_r(y)]\,\nu(dy)~.
\end{gather*}
Therefore, we have
\begin{gather*}
\int_{B_r(0)} g(x_0+v)\,dv= \mathrm{Vol}(B_{r}) g(x_0)[1+\bar{A}_r] \quad\text{where}\quad \bar{A}_r:=\frac{1}{g(x_0)}\int f(x_0, y) A_r(y)\,\nu(dy)~,
\end{gather*}
Using the upper bound $|A_r(y)|\leq C_1 r^2 \zeta_2(y)$, we have
\begin{align*}
    |\bar A_r|\le \frac{1}{g(x_0)} \int f(x_0, y) |A_r(y)|\,\nu(dy) \le C_1r^2 \int p_{x_0}(y) \zeta_2(y) \,\nu(dy) \leq C_1r^2\bar\zeta_2~.
\end{align*}
Combining the denominator and the numerator, we get
\[
h_r(y)=p_{x_0}(y)\,\frac{1+A_r(y)}{1+\bar A_r}~.\] 
Now for simplicity, define \[
\Delta_r(y):=\frac{h_r(y)}{p_{x_0}(y)}-1=\frac{A_r(y)-\bar A_r}{1+\bar A_r}~.
\]
From the upper bound of $|\bar A_r|$, there exists $\varepsilon_1\in(0, \varepsilon_0)$ such that, for any $r\in(0,\varepsilon_1)$, $|\bar A_r|\le \tfrac12$, so
\begin{align}\label{eq:book-Lp1}
    |\Delta_r(y)|
\;\le\;2\big(|A_r(y)|+|\bar A_r|\big)
\;\le\; 2C_1\,r^2\,(\zeta_2(y)+\bar\zeta_2) =:  C_2\,r^2 u(y)
\end{align}
where $C_2:= 2C_1>0$ and $u(y):=\zeta_2(y)+\bar\zeta_2$. Then, the $L^p(p_{x_0} \cdot \nu)$-integrability condition of $\zeta_2$ implies that $u \in L^p(p_{x_0} \cdot \nu)$, i.e., 
\begin{align}\label{eq:book-Lp0}
    \|u\|^p_{L^p(p_{x_0} \cdot \nu)}:=\int |\zeta_2(y)+\bar\zeta_2|^p ~p_{x_0}(y) ~\nu(dy) \leq 2^{p-1} \int (|\zeta_2(y)|^p+|\bar\zeta_2|^p ) \,p_{x_0}(y) \,\nu(dy) < \infty~,
\end{align}
as $(\tfrac{a+b}{2})^{p} \leq \tfrac12(a^p + b^p)$ holds by convexity for any $p\geq 1$ and $a,b\geq0.$

\textbf{Rate of $\mathrm{H}(P_r, P)$.} Under our assumptions, $h_0 (y):=\lim_{r\to0} h_r(y) = p_{x_0}(y)$ for all $y\in\mathcal Y$. We use the inequality: for $s>-1$, $(\sqrt{1+s}-1)^2\le \min\{s^2,|s|\}$, which gives
\begin{align}\label{eq:book-Lp2}
    \left(\sqrt{h_r(y)} - \sqrt{h_0(y)}\right)^2= \left(\sqrt{1+\Delta_r(y)}-1\right)^2h_0(y)
\leq\min\{\Delta_r^2(y),|\Delta_r(y)|\}~h_0(y)~.
\end{align}

\textit{Case \(1\le p\le 2\).} For simplicity, let $\tau_r:=r^{-2}.$
Define $\mathcal{A}_r:=\{y\in\mathcal Y: u(y)\le \tau_{r}\}$ and $\mathcal{A}_r^c:=\mathcal Y\setminus \mathcal{A}_r$.
By \eqref{eq:book-Lp2}, we have
\begin{align*}
\mathrm{H}(P_r, P)^2 
&\overset{(1)}{\leq}  \frac12\int \min\{\Delta_r^2(y),|\Delta_r(y)|\} ~h_0(y)~\nu(dy)\\
&\overset{(2)}{\leq}  \frac12\int_{\mathcal{A}_r}\Delta^2_r(y) ~h_0(y) ~\nu(dy) +\frac12\int_{\mathcal{A}^c_r} |\Delta_r(y)| ~h_0(y)~\nu(dy) \\
&\overset{(3)}{\leq} \frac{C^2_2 r^4}{2} \int_{\mathcal{A}_r}u(y)^2 ~h_0(y)~\nu(dy) 
+ \frac{C_2 r^2}{2}\int_{\mathcal{A}^c_r} u(y)~h_0(y)~\nu(dy)\\
&\overset{(4)}{\leq} \frac{C^2_2 r^4}{2} \int_{\mathcal{A}_r}\tau_r^{\,2-p} u^p(y) ~h_0(y)~\nu(dy) + \frac{C_2 r^2}{2}\int_{\mathcal{A}^c_r}   \tau_r^{\,1-p} u^p(y)~h_0(y)~\nu(dy)\\
&\overset{(5)}{\leq} \frac{C^2_2 r^{2p}}{2}  \int  u^p(y)~ h_0(y)~\nu(dy) 
+ \frac{C_2 r^{2p}}{2} \int u^p(y) ~h_0(y)  ~\nu(dy)
=r^{2p} \frac{C^2_2 + C_2}{2} \|u\|^p_{L^p(p_{x_0}\cdot\nu)}~,
\end{align*}
where (1) follows from \eqref{eq:book-Lp2}, we have (2) by splitting $\mathcal Y$ into $\mathcal{A}_r$ and $\mathcal{A}^c_r$ and by the definition of minimum, (3) follows from \eqref{eq:book-Lp1}, $(4)$ follows from 
\begin{align*}
    &\frac{u(y)}{\tau_r} \leq 1 ~\to~  \left(\frac{u(y)}{\tau_r}\right)^{2-p} \leq 1  ~\to~ u(y)^2\le \tau_r^{\,2-p} u^p(y) \quad \text{ for any }y\in \mathcal{A}_r~,\\
    &\frac{u(y)}{\tau_r} > 1 ~\to~ \left(\frac{u(y)}{\tau_r}\right)^{1-p} \leq 1 ~\to~ u(y)\le \tau_r^{\,1-p} u^p(y) \quad \text{ for any }y\in \mathcal{A}^c_r~,
\end{align*}
and lastly (5) holds by replacing $\mathcal{A}_r$ and $\mathcal{A}^c_r$ with $\mathcal Y$ respectively and by substituting $\tau_r=r^{-2}$. By \eqref{eq:book-Lp0}, $\mathrm{H}(P_r, P) = O(r^{p}).$

\textit{Case \(p\ge 2\).}
Similar as before, by using \eqref{eq:book-Lp2} and $(\sqrt{1+s}-1)^2\le s^2$, we have
\begin{align*}
\mathrm{H}(P_r, P)^2 
&\leq  \frac12\int \Delta_r^2(y)~h_0(y) ~\nu(dy) 
\leq \frac{C^2_2 r^4}{2} \int  u(y)^2~h_0(y) ~\nu(dy) =\frac{r^4 C^2_2}{2} \|u\|^2_{L^2(p_{x_0}\cdot\nu)}~,
\end{align*}
where the second inequality follows from \eqref{eq:book-Lp1}. Since $p\ge2$ and $u \in L^p(p_{x_0}\cdot\nu)$ as shown in \eqref{eq:book-Lp0}, the Lyapunov's inequality gives $u \in L^2(p_{x_0}\cdot\nu)$ as well. Thus, $\mathrm{H}(P_r, P)=O(r^2)$.
Combining the two cases, we conclude $\mathrm{H}(P_r, P)=O\big(r^{\min\{p,2\}}\big).$

\textbf{Rate of $\mathrm{TV}(P_r, P)$.} The total variation distance between $P_r$ and $P$ is
\begin{align*}
    \operatorname{TV}(P_r,P)
    &= \tfrac12\int |h_r(y)-h_0(y)|\,\nu(dy)
    = \tfrac12\int |\Delta_r(y)|h_0(y)\,\nu(dy)\\
    &\leq \tfrac{C_2}{2} r^2 \int u(y)h_0(y)\,\nu(dy)=\tfrac{C_2}{2} r^2 \|u\|_{L^1(p_{x_0}\cdot\nu)}~.
\end{align*}
Since $\|u\|_{L^1(p_{x_0}\cdot\nu)}<\infty$, we have $\operatorname{TV}(P_r,P) = O(r^2)$ as $r\to0.$
\end{proof}

\begin{lemma}\label{lem:book-implies-QMD}
	Assumption \ref{ass:book} implies Assumption \ref{ass:g-regular} with $x_0\in {\rm int}(\mathcal X)$ and Assumption \ref{ass:QMD}. The converse is not true. 
\end{lemma}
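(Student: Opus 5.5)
The lemma has three parts, and I would handle them in order: the implication to Assumption~\ref{ass:g-regular} (with $x_0$ interior), the implication to Assumption~\ref{ass:QMD}, and a counterexample for the converse. Throughout, write the expansion \eqref{eq:book-expansion} as $f(x_0+t,y)=f(x_0,y)\{1+t^\top\zeta_1(y)+\|t\|^2\tilde\zeta_{2,t}(y)\}$ with $|\tilde\zeta_{2,t}|\le C_1\zeta_2$ for $\|t\|\le t_0$, exactly as in the proof of Lemma~\ref{lem:FHR-Lp}.

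\textbf{Assumption~\ref{ass:g-regular}.} Integrate the expansion over $y$, as in Lemma~\ref{lem:FHR-Lp}. This gives $g(x_0+t)=g(x_0)\{1+\bar\zeta_1^\top t+\bar\zeta_{2,t}\|t\|^2\}$. The constants are finite because Cauchy--Schwarz and \eqref{eq:book-L2integrability} give $\int\|\zeta_1\|p_{x_0}\,d\nu<\infty$ and $\int\zeta_2 p_{x_0}\,d\nu<\infty$. It follows that $|g(x_0+t)-g(x_0)|\le g(x_0)(\|\bar\zeta_1\|+C_1\bar\zeta_2 t_0)\|t\|$, which is part (i), using $g(x_0)>0$ as in Theorem~\ref{thm:FHR-352}. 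For interiority, the expansion is required uniformly on a full ball $\|t\|\le t_0$. Positivity of $g$ near $x_0$ then gives $B_r\subset\mathcal X$ for small $r$, so part (ii) holds with $C_{x_0}=\mathrm{Vol}(B_1)$.

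\textbf{Assumption~\ref{ass:QMD}.} Take $\dot\ell_{x_0}=\zeta_1$; its square-integrability is given by \eqref{eq:book-L2integrability}. Since $g(x_0+t)=g(x_0)(1+\bar\zeta_1^\top t+O(\|t\|^2))$, we can write
\[
p_{x_0+t}/p_{x_0}=1+s_t(y),\qquad s_t(y)=t^\top(\zeta_1-\bar\zeta_1)+\|t\|^2\rho_t(y),
\]
with $|\rho_t|\le C(\zeta_2+\|\zeta_1\|+1)$ for small $t$. The correct score is therefore $\zeta_1-\bar\zeta_1$, and I would use that rather than $\zeta_1$. Note that $\bar\zeta_1=0$ does not follow automatically, since $g$ need not be stationary at $x_0$. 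On $\{p_{x_0}=0\}$ the expansion forces $p_{x_0+t}=0$, so that set contributes nothing. On $\{p_{x_0}>0\}$ the QMD integrand equals $p_{x_0}(y)\,\big(\sqrt{1+s_t}-1-s_t/2+\|t\|^2\rho_t/2\big)^2$. I would bound $(\sqrt{1+s}-1-s/2)^2\le \min\{s^4,\,s^2\}$ up to a constant. The pointwise ratio over $\|t\|^2$ tends to zero, and it is dominated by $C\big(\|\zeta_1\|^2+\|\bar\zeta_1\|^2+\|t\|^2(\zeta_2+\|\zeta_1\|+1)^2\big)$. Dominated convergence then gives $o(\|t\|^2)$.

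\textbf{Converse.} Here I would reuse the boundary construction of Lemma~\ref{lem:QMD-sharpness}. It satisfies Assumptions~\ref{ass:g-regular} and \ref{ass:QMD} with $x_0$ a boundary point, so Assumption~\ref{ass:book} fails. Alternatively, for an interior example, take $Y\mid X=x\sim\mathrm{Unif}[0,1+x^2]$. Its support varies with $x$, which violates the zero-set invariance, yet it satisfies QMD because the mass outside $[0,1]$ is $O(x^2)$ and hence $o(\|t\|^2)$ in the sense required after squaring roots. The boundary example is cleanest.

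\textbf{Main obstacle.} The main obstacle is the domination step in the QMD part. The remainder $\rho_t$ mixes $\zeta_2$ with cross terms from $\bar\zeta_1^\top t$, and I must ensure every piece is square-integrable under $p_{x_0}$, which uses only the $L^2$ conditions in \eqref{eq:book-L2integrability}.
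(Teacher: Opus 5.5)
Your forward implications are sound and follow the paper's route closely. The paper also integrates the expansion to get $g(x_0+v)=g(x_0)[1+\bar\zeta_1^\top v+\bar\zeta_{2,v}\|v\|^2]$ and uses the centered score $\zeta_1-\bar\zeta_1$, so drop your initial ``take $\dot\ell_{x_0}=\zeta_1$''. It also relies on the same inequality $|\sqrt{1+u}-1-u/2|\le C|u|\min\{|u|,1\}$. Your single dominated-convergence step on the combined ratio $p_{x_0+t}/p_{x_0}=1+s_t$ replaces the paper's $T_{1,v}/T_{2,v}$ and $E_1/E_2$ bookkeeping, and it is valid.

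The gap is in the converse.

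\emph{The interior example fails QMD.} Your example $Y\mid X=x\sim\mathrm{Unif}[0,1+x^2]$ violates Assumption~\ref{ass:QMD}. On $\{p_{x_0}=0\}$ the QMD integrand is exactly $p_{x_0+t}(y)$, whatever score you pick, so there is no gain ``after squaring roots''. The residual is therefore at least $\int_1^{1+t^2}(1+t^2)^{-1}\,dy=t^2/(1+t^2)$. That is of exact order $t^2$, not $o(t^2)$ (compare the display in Section~\ref{sec:on-assumptions}).

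\emph{The boundary example only proves a trivial statement.} The construction of Lemma~\ref{lem:QMD-sharpness} shows that Assumptions~\ref{ass:g-regular}--\ref{ass:QMD} \emph{without} interiority fail to imply Assumption~\ref{ass:book}. That is already immediate from your own forward step, since Assumption~\ref{ass:book} forces $x_0\in\mathrm{int}(\mathcal X)$. The converse the lemma asserts is that Assumption~\ref{ass:g-regular} with $x_0$ interior, plus Assumption~\ref{ass:QMD}, does not imply Assumption~\ref{ass:book}. That needs an interior model in which new $y$-values enter the support with mass $o(\|t\|^2)$, and you have none that works.

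\emph{How the paper does it.} It takes $d=1$, $x_0=0$, $Y\in\{0,1\}$ with counting measure, $p_x(1)=|x|^3$, $p_x(0)=1-|x|^3$, and zero score. The QMD residual is $O(x^6)+|x|^3=o(x^2)$, while $f(x_0,1)=0<f(x,1)$ breaks the zero-set invariance that Assumption~\ref{ass:book} forces. Your uniform example is repaired the same way: use $\mathrm{Unif}[0,1+|x|^3]$ (any exponent above $2$ works), with, say, $X\sim\mathrm{Unif}[-1,1]$ so that Assumption~\ref{ass:g-regular} holds at the interior point $0$.
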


\begin{proof}
By Assumption \ref{ass:book}, there exist $\tilde\zeta_{2,v}$ and $C_0, \varepsilon_0>0$ uniformly for all $v\in\mathbf{R}^d$ with $\|v\|\le \varepsilon_0$ and all $y\in\mathbf{R}^m$,
\begin{equation}\label{eq:starting-expansion}
    f(x_0+v,y) =f(x_0,y)\big[ 1 + \zeta_1(y)^{\top}v+\|v\|^2\tilde\zeta_{2,v}(y)\big],
    \qquad |\tilde\zeta_{2,v} (y)| \leq C_0\zeta_2(y)~.    
\end{equation}
Integrating with respect to $y$ gives 
\begin{equation}\label{eq:g-FHR}
    g(x_0+v) = g(x_0)\big[ 1 + \bar\zeta_1^{\top}v + \bar\zeta_{2,v}\|v\|^2\big]
\end{equation}
where 
\[
    \bar\zeta_1 := \int \zeta_1(y) p_{x_0}(y) \nu(dy),\qquad
    \bar\zeta_{2,v} := \int \tilde\zeta_{2,v}(y)\,p_{x_0}(y)\,\nu(dy)~.
\]

To see Assumption \ref{ass:g-regular} holds, note that since $x_0\in {\rm int}(\mathcal X)$, $B_r \subseteq \mathcal X$ for all small $r$, and so $\mathrm{Vol}(B_r \cap \mathcal X)=\mathrm{Vol}(B_r)$ is of order $r^d$. In addition, since $ \zeta_2(y) \in L^2(p_{x_0}\cdot\nu)$, $\bar\zeta_{2,v}$ is bounded uniformly for $\|v\|\leq \varepsilon_0$:
$$|\bar\zeta_{2,v}|\leq C_0\int \zeta_2(y)p_{x_0}(y)\nu(dy)=: \bar\zeta_2 <\infty~.$$
Let $\varepsilon_1:=\min\{1,\varepsilon_0\}$. Then, for all $v$ with $\|v\|\leq \varepsilon_1$
\begin{align*}
    |g(x_0+v) - g(x_0)| 
    \overset{(a)}{=} |g(x_0) \bar\zeta_1^\top v + g(x_0)\bar \zeta_{2,v} \|v\|^2| 
    \overset{(b)}{\leq} g(x_0) \{\|\bar\zeta_1\| \cdot \|v\| + |\bar\zeta_{2,v}| \|v\|^2\}
    \overset{(c)}{\leq} C_g \|v\|
\end{align*}
where (a) follows from \eqref{eq:g-FHR}, (b) holds by Cauchy-Schwarz and the triangular inequality and (c) holds because $\|v\|^2 \leq \|v\|$ and by setting $C_g:= g(x_0) (\|\bar\zeta_1\|+\bar\zeta_2).$

Next to show that Assumption \ref{ass:QMD} holds, let
\[
    \dot\ell_{x_0}(y) := \zeta_1(y)-\bar\zeta_1, \qquad
    a_{v}(y) := \zeta_1(y)^{\top}v + \tilde\zeta_{2,v}(y)\|v\|^2,\qquad
    b_v:=\sqrt{\frac{1}{1 + \bar\zeta_1^{\top}v + \bar\zeta_{2,v}\|v\|^2}}~.
\]
Combining \eqref{eq:starting-expansion} and \eqref{eq:g-FHR}, we obtain $\sqrt{p_{x_0+v}(y)}=\sqrt{p_{x_0}(y)}\,b_v\,\sqrt{1+a_{v}(y)}$ holds. Define
\[
    \Delta_v(y) := \frac{\sqrt{p_{x_0+v}(y)}-\sqrt{p_{x_0}(y)}}{\|v\|}
    -\frac12 \dot\ell_{x_0}(y)^{\top}\frac{v}{\|v\|}\sqrt{p_{x_0}(y)}
     = \sqrt{p_{x_0}(y)}\{ T_{1,v}(y) + T_{2,v}(y) \}~,
\]
where
\begin{align}
    T_{1,v}(y) &:=\frac{b_v}{\|v\|}\Big(\sqrt{ 1 + a_{v}(y)} - 1 - \tfrac12 a_{v}(y)\Big) \label{eq:T1v}\\
    T_{2,v}(y) &:=\frac{b_v-1}{\|v\|} + \frac12\bar\zeta_1^{\top}\frac{v}{\|v\|}
     + \frac{b_v}{2\|v\|}a_{v}(y) - \frac12\zeta_1(y)^{\top}\frac{v}{\|v\|}~.\label{eq:T2v}
\end{align}
Below, we show that the integrals of the squares of $\sqrt{p_{x_0}}~T_{2,v}$ and $\sqrt{p_{x_0}}~T_{1,v}$ are $o(1)$.

\textbf{Term $T_{2,v}$.} Applying the Taylor expansion $(1+t)^{-\tfrac12}=1-\tfrac12t+O(t^2)$ around 0 to $b_v$ gives
\[
    b_v = 1 - \tfrac12(\bar\zeta_1^{\top}v + \bar\zeta_{2,v}\|v\|^2) + O((\bar\zeta_1^{\top}v + \bar\zeta_{2,v}\|v\|^2)^2) = 1 - \tfrac12(\bar\zeta_1^{\top}v + \bar\zeta_{2,v}\|v\|^2) + O(\|v\|^2)~.
\]
The second equality holds because, for some constants $\bar C, \bar\varepsilon>0$ uniformly for $\|v\|< \min(\varepsilon_1, \bar\varepsilon)$,
\begin{equation*}
    O((\bar\zeta_1^{\top}v + \bar\zeta_{2,v}\|v\|^2)^2) \leq 2\bar C((\bar\zeta_1^{\top}v)^2 + (\bar\zeta_{2,v}\|v\|^2)^2) \leq 2 \bar C(\|\bar\zeta_1\|^2 \|v\|^2 + |\bar\zeta_{2,v}|^2\|v\|^4)=O(\|v\|^2)
\end{equation*}
where the two inequalities follow from $(a+b)^2 \leq 2(a^2+b^2)$ and the Cauchy-Schwarz inequality, and the last equality holds because $\|\zeta_1\|^2<\infty$ and $|\bar\zeta_{2,v}|^2< \bar\zeta_2^2$ uniformly for $\|v\|< \min(\varepsilon_1, \bar\varepsilon)$. Thus, there exist $C_1, C_2>0$, and $\varepsilon_2\in(0,\varepsilon_1)$ such that, uniformly for all $v$ with $\|v\| <\varepsilon_2$
\begin{align}
    &\left|\frac{b_v-1}{\|v\|} + \frac12\bar\zeta_1^{\top}\frac{v}{\|v\|}\right|
    \leq \tfrac12|\bar\zeta_{2,v}| \|v\| + O(\|v\|)\leq C_1 \|v\|~, \label{eq:lem1_bv_bound1}\\
    &|b_v -1| \leq \tfrac12(\|\bar\zeta_1\| \|v\| + |\bar\zeta_{2,v}|\|v\|^2) + O(\|v\|^2) \leq C_2 \|v\|~, \label{eq:lem1_bv_bound2}\\
    &b_v^2 \leq \max\{(1+C_2 \varepsilon_2)^2, (1-C_2 \varepsilon_2)^2\} =: \bar{b}\label{eq:lem1_bv_bound3}~.
\end{align}
The term $T_{2,v}(y)^2$ is bounded uniformly for all $v$ with $\|v\| <\varepsilon_2$ as follows: 
\begin{align*}
    T_{2,v}(y) ^2
    &\leq 2\left(\frac{b_v-1}{\|v\|} + \frac12 \bar\zeta_1^{\top}\frac{v}{\|v\|}\right)^2 
    + 2\left(\frac{b_v}{2\|v\|}a_{v}(y) - \frac12 \zeta_1(y)^{\top}\frac{v}{\|v\|}\right)^2\\
    & \leq 2C_1^2\|v\|^2 + 2\left( \frac{(b_v-1)}{2\|v\|} \zeta_1(y)^{\top}v + \frac{b_v \|v\|}{2}\tilde\zeta_{2,v}(y) \right)^2\\
    & \leq 2C_1^2\|v\|^2 + (b_v-1)^2 \|\zeta_1(y)\|^2 + b_v^2\|v\|^2\tilde\zeta_{2,v}(y)^2\\
    & \leq C_3\|v\|^2 \left\{1 + \|\zeta_1(y)\|^2 + C_0^2\zeta_{2}(y)^2\right\}
\end{align*}
where the first inequality holds by $(a+b)^2\le 2a^2+2b^2$, the second inequality holds by \eqref{eq:lem1_bv_bound1} and by plugging in $a_v(y)$, the third inequality holds by $(a+b)^2\le 2a^2+2b^2$ and the Cauchy-Schwarz inequality, and the final inequality follows from \eqref{eq:lem1_bv_bound2}, \eqref{eq:lem1_bv_bound3}, setting $C_3:= \max\{2C_1^2, C_2^2,\bar{b}\}$, and using $\tilde\zeta_{2,v}(y)^2\leq C_0^2\zeta_2(y)^2$.
Therefore, uniformly for all $v$ with $\|v\| <\varepsilon_2$
\begin{align*}
    \int T_{2,v}(y)^2 p_{x_0}(y) \,\nu(dy) \leq C_3\left\{1 + \int\Big(\|\zeta_1(y)\|^2 + C_0^2\zeta_{2}(y)^2\Big) p_{x_0}(y)\nu(dy)\right\} \|v\|^2 =:C_4 \|v\|^2~.
\end{align*}

\textbf{Term $T_{1,v}$.} For $u \ge -1$ the following inequality holds:
\begin{align}\label{eq:bound-expansion-sqrt1+u}
        \Big|\sqrt{1+u} - 1 - \tfrac12 u\Big| \le C_5|u|\min\{|u|,1\}
\end{align}
for some constant $C_5>0$. We apply this with $u=a_{v}(y)$ since $a_v(y)=f(x_0+v,y)/f(x_0,y)-1>-1$ for all $y$ such that $p_{x_0}(y)>0$. Partition $\mathcal{Y}$ into
\[
    E_{1} := \big\{y\in\mathcal{Y}:|a_{v}(y)|\le \delta_{v}\big\},
    \qquad
    E_{2} :=\big\{y\in\mathcal{Y}: |a_{v}(y)|> \delta_{v}\big\},
    \qquad \text{with} \qquad 
    \delta_{v} :=\sqrt{\|v\|}~.
\]
On $E_{1}$, by applying \eqref{eq:bound-expansion-sqrt1+u} and $\min\{|a_{v}(y)|,1\}\leq |a_{v}(y)|$
\[
T_{1,v}(y)^{2} = \frac{b_{v}^{2}}{\|v\|^{2}} \Big|\sqrt{ 1 + a_{v}(y)} - 1 - \tfrac12 a_{v}(y)\Big|^{2}
      \le \frac{b_{v}^{2} C^{2}_5}{\|v\|^{2}}|a_{v}(y)|^{4} \leq C^{2}_5b_{v}^{2}\delta_{v}^{2}
      \Big|\frac{a_{v}(y)}{\|v\|}\Big|^{2}~.
\]
By the triangle inequality and Cauchy-Schwarz, 
\begin{align}\label{eq:lem1-av-bound}
    \left|\frac{a_{v}(y)}{\|v\|}\right| = \left| \zeta_{1}(y)^{\top}\frac{v}{\|v\|}
      + \tilde\zeta_{2,v}(y) \|v\|\right| 
      \leq  \|\zeta_1(y)\| + |\tilde\zeta_{2,v}(y)| \|v\| \leq \|\zeta_1(y)\| +C_0 |\zeta_{2}(y)| \|v\|~,
\end{align}
uniformly for all $v$ with $\|v\| <\varepsilon_2$. Then for all $v$ with $\|v\| <\varepsilon_2$  
\begin{align*}
    \int_{E_{1}} T_{1,v}(y)^{2} p_{x_0}(y) \nu(dy)\leq C_5^2 b_v^2 \|v\| \int 2\big(\|\zeta_1(y)\|^2 + C_0^2 |\zeta_2(y)|^2 \|v\|^2\big) p_{x_0}(y) \nu(dy)  \leq C_6 \|v\| + C_7 \|v\|^3
\end{align*}
where we use by $(a+b)^2\leq 2a^2+2b^2$ for the first inequality and the second inequality follows from \eqref{eq:lem1_bv_bound3} and setting $C_6 := 2C^2_5 \bar{b} \int \|\zeta_1(y)\|^2p_{x_0}(y)\,\nu(dy) $ and $C_7 := 2C^2_5 \bar{b} C_0^2 \int |\zeta_2(y)|^2p_{x_0}(y)\,\nu(dy)$.

Since $\min\{|a_{v}(y)|,1\}\le 1$, we have, for all $v$ with $\|v\| <\varepsilon_2$,
\[
    T_{1,v}(y)^{2} \le C^{2}_5 b_{v}^{2} \Big|\frac{a_{v}(y)}{\|v\|}\Big|^{2}  \leq 2C^{2}_5 \bar{b}\{\|\zeta_1(y)\|^2 + C_0^2|\zeta_{2}(y)|^2 \|v\|^2 \}~,
\]
where the second inequality holds by applying $(a+b)^2\le 2a^2+2b^2$ to \eqref{eq:lem1-av-bound} and by using $b_v^2\leq \bar{b}$ in \eqref{eq:lem1_bv_bound3}. Integrating this on $E_2$ yields that uniformly for all $v$ with $\|v\| <\varepsilon_2$,
\[
    \int_{E_{2}} T_{1,v}(y)^{2} p_{x_0}(y) \,\nu(dy) \le 2C^{2}_5 \bar{b}
    \int_{E_{2}}\|\zeta_1(y)\|^2 p_{x_0}(y) \,\nu(dy) + C_8 \|v\|^2~,
\]
where $C_8 := 2C^{2}_5 \bar{b} C_0^2 \int |\zeta_2(y)|^2 p_{x_0}(y) \, \nu(dy)$. We finally claim that 
\begin{align}\label{eq:lem1-dct}
    \int_{E_{2}}\|\zeta_1(y)\|^2 p_{x_0}(y)\, \nu(dy)=o(1)~.
\end{align}
To see this, fix $y$ with $p_{x_0}(y)>0$. It holds from \eqref{eq:lem1-av-bound} that $|a_v(y)|/\delta_v \leq \|\zeta_1(y)\|\|v\|^{1/2} + C_0|\zeta_2(y)| \|v\|^{3/2} \to 0$ as $\|v\|\to 0$. Therefore, $1[y \in E_2] \to 0$. By dominated convergence, \eqref{eq:lem1-dct} holds where we used the integrability of $\|\zeta_1\|^2$. Therefore, for an arbitrary $\eta>0$, there exists $\varepsilon_3 \in (0, \varepsilon_2)$ such that uniformly for all $v$ with $\|v\| <\varepsilon_3$,
\[
    \int_{E_{2}} T_{1,v}(y)^{2} p_{x_0}(y)\nu(dy) \le \eta + C_8 \|v\|^2~.
\]
Combining the two regions, we have, uniformly for all $v$ with $\|v\| <\varepsilon_3$,
\[
      \int T_{1,v}(y)^{2} p_{x_0}(y)\nu(dy) \leq C_6 \|v\| + C_7 \|v\|^3 + C_8 \|v\|^2 + \eta~.
\]

To summarize, it holds that, uniformly for all $v$ with $\|v\| <\varepsilon_2$,
\begin{align*}
    \int \Delta_{v}(y)^2 \nu(dy) \leq 2\int T_{1,v}^2(y)p_{x_0}(y)\nu(dy) + 2\int T_{2,v}^2(y)p_{x_0}(y)\nu(dy)\leq 2C_6 \|v\| + C_9 \|v\|^2 + 2C_7\|v\|^3 + 2\eta~.
\end{align*}
where $C_9 := 2(C_4 + C_8)$. Letting $\eta\to 0$, the upper bound converges to 0 as $\|v\|\to 0$ and so quadratic mean differentiability holds. Since
\begin{align*}
    \int \|\dot\ell_{x_0}(y)\|^2 p_{x_0}(y) \nu(dy) \leq 2 \int \|\zeta_1(y)\|^2 p_{x_0}(y) \nu(dy) +  2\bar\zeta_1^2 <\infty~,
\end{align*}
the integrability condition also holds. Therefore, Assumption \ref{ass:book} implies \ref{ass:QMD}.

Finally, unlike Assumption~\ref{ass:QMD}, Assumption~\ref{ass:book} imposes a support restriction. If $f(x_0,y_\ast)=0$ for some $y_\ast$, then $f(x_0+\varepsilon,y_\ast)= 0$ for all $\|\varepsilon\|\le\varepsilon_0$. Thus, the converse implication does not hold in general. To provide a simple example, let $d=1$, $x_0=0$, and $y\in\{0,1\}$, and let $\nu$ be the counting measure on $\{0,1\}$. For $x\in\mathbf R$ with $|x|$ small, set
\[
    p_x(1) := |x|^3, \qquad p_x(0) := 1-|x|^3~.
\]
This is a valid pmf (nonnegative and summing to one) for small $|x|$. To see that Assumption \ref{ass:QMD} is satisfied, take $\dot\ell_{x_0}(y):= 0\in\mathbf R$. Then, writing the $L^2(\nu)$ expression as a sum,
\[
    \sum_{y\in\{0,1\}} \left[\frac{\sqrt{p_x(y)} - \sqrt{p_{x_0}(y)}}{|x|} - \tfrac12 \dot\ell_{x_0}(y)\sqrt{p_{x_0}(y)}\right]^2 = \left(\frac{\sqrt{1-|x|^3}-1}{|x|}\right)^2 + \left(\frac{|x|^{3/2}}{|x|}\right)^2~.
\]
Using $\sqrt{1-t} = 1 - \tfrac12 t + O(t^2)$ as $t\to 0$ with $t=|x|^3$,
\[
    \left(\frac{\sqrt{1-|x|^3}-1}{|x|}\right)^2 = \left(\frac{-\tfrac12|x|^3 + O(|x|^6)}{|x|}\right)^2 = O(|x|^4)\to 0~, \qquad \left(\frac{|x|^{3/2}}{|x|}\right)^2 = |x|\to 0~.
\]
Hence the QMD condition holds. Moreover, $\int \|\dot\ell_{x_0}(y)\|^2\,p_{x_0}(y)\,\nu(dy)=0<\infty$.
However, at $y=1$ we have $f(x_0,1) = p_{x_0}(1)=0$, but for every small $x\neq 0$, $f(x,1) = p_x(1)=|x|^3>0$, which is a violation of Assumption \ref{ass:book}. 
\end{proof}

\section[\hspace{2mm}Results under Alternative Assumptions]{Results under Alternative Assumptions}\label{supp:results-Holder}

The main results of the paper are derived under differentiability in quadratic mean (QMD); see Assumption~\ref{ass:QMD}. In this appendix, we study an alternative, perhaps more classical differentiability condition: a Taylor expansion of the joint density $f(\cdot,y)$ around $x_0$ with a H\"older remainder (Assumption~\ref{ass:smoothness-holder}), which allows for a wide range of smoothness levels and remainder controls. This condition may appear similar in spirit to the multiplicative expansion in Assumption~\ref{ass:book}, since the latter is closely related to the case $(\kappa_{\rm s},\kappa_{\rm r})=(1,1)$. We will show, however, that Assumption~\ref{ass:book} is much stronger: beyond smoothness and integrability, it imposes a restrictive multiplicative structure (and associated local support restrictions) that is not required under QMD or Assumption~\ref{ass:smoothness-holder}. We begin by stating the smoothness with H\"older reminder condition that underpins the results in this section.

\begin{assumption}[Smoothness with H\"older reminder at $x_0$]\label{ass:smoothness-holder}
    Fix $\kappa_{\rm s}\in\mathbf N_0$ and $\kappa_{\rm r}\in(0,1]$. The model class $\mathbf Q$ consists of all laws $Q$ such that there exists $\tilde r>0$ for which, for every $y\in\mathbf R^m$ and every $t$ with $\|t\|\le \tilde r$ and $x_0+t\in\mathcal X$,
    \begin{equation}
        f(x_0+t,y) = \sum_{\|\beta\|_1 \le \kappa_{\rm s}} \frac{\partial_x^\beta f(x_0,y)}{\beta!}\,t^\beta
          + \zeta^{\rm rem}_{\kappa_{\rm s}+\kappa_{\rm r}}(t,y),
        \qquad |\zeta^{\rm rem}_{\kappa_{\rm s}+\kappa_{\rm r}}(t,y)| \le M(y)\,\|t\|^{\kappa_{\rm s}+\kappa_{\rm r}}~.
    \end{equation}
    Here $\beta=(\beta_1,\dots,\beta_d)$ is a multi-index,
    \[
        \partial_x^\beta f(x_0,y) := \frac{\partial^{\|\beta\|_1} f}{\partial x_1^{\beta_1}\cdots\partial x_d^{\beta_d}}(x_0,y),
        \qquad
        t^\beta := \prod_{j=1}^d t_j^{\beta_j},
        \qquad
        \beta! := \prod_{j=1}^d \beta_j!~.
    \]
    Moreover,
    \begin{equation}\label{eq:M-L1integral}
        \int M(y)\,\nu(dy)<\infty,
    \qquad
    \int \bigl|\partial_x^\beta f(x_0,y)\bigr| \nu(dy)<\infty
    \quad\text{for all}\quad \|\beta\|_1\le \kappa_{\rm s}~.
    \end{equation}
\end{assumption}

Assumption \ref{ass:smoothness-holder} has several ingredients. First, it requires that the joint density $f(\cdot,y)$ admits a Taylor series expansion of order $\kappa_{\rm s}$ around $x_0$, with a remainder that is controlled by H\"older continuity of order $\kappa_{\rm r}$. Second, the bound on the remainder term through the envelope $M(y)$ allows the quality of approximation to depend on $y$, but ensures uniform control in $t$ once $M(y)$ is integrable. Finally, the integrability conditions guarantee that both the derivatives and the envelope are well behaved relative to the dominating measure $\nu$, so that expansions can be integrated and used in distributional approximations.  
Together, these conditions provide a flexible smoothness requirement that interpolates between differentiability of various orders and H\"older continuity. 

There are several features of Assumption~\ref{ass:smoothness-holder} worth highlighting.

\medskip
\noindent\emph{Monotonicity in $\kappa_{\rm s}$.}
For any fixed $\kappa_{\rm r}\in(0,1]$, Assumption~\ref{ass:smoothness-holder} becomes stronger as $\kappa_{\rm s}$ increases. In particular, if it holds with $(\kappa_{\rm s}+1,\kappa_{\rm r})$, then it automatically holds with $(\kappa_{\rm s},\kappa_{\rm r})$.

\medskip
\noindent\emph{The case $\kappa_{\rm s}=0$.}
When $\kappa_{\rm s}=0$, the condition reduces to H\"older continuity,
\[
  |f(x,y)-f(x_0,y)| \;\le\; M(y)\,\|x-x_0\|^{\kappa_{\rm r}}~,
\]
with $\kappa_{\rm r}\in(0,1]$. In particular, $\kappa_{\rm r}=1$ yields (local) Lipschitz continuity.

\medskip
\noindent\emph{The case $\kappa_{\rm s}=1$.}
When $\kappa_{\rm s}=1$, Assumption~\ref{ass:smoothness-holder} takes the form
\[
    \big|\,f(x_0+t,y)- f(x_0,y) - \nabla_x f(x_0,y)^\top t \big|
    \;\le\; M(y)\,\|t\|^{1+\kappa_{\rm r}}~,
\]
i.e., pointwise differentiability of $f$ in $x$ at $x_0$ with a H\"older remainder of order $\kappa_{\rm r}$. Heuristically, while Assumption~\ref{ass:QMD} imposes differentiability in quadratic mean of $x\mapsto \sqrt{p_x}$ at $x_0$, Assumption~\ref{ass:smoothness-holder} requires pointwise differentiability of the \emph{joint} density. Neither assumption implies the other.

The main result of this section is as follows. 

\begin{theorem}\label{thm:Holder-TV-H}
Let Assumptions \ref{ass:g-regular} and \ref{ass:smoothness-holder} hold. Then,
\begin{align*}
\mathrm{TV}(P_{r},P) =O( r^{\min \{ 1,\kappa _{s}+\kappa _{r}\} }) \qquad\text{and}\qquad \mathrm{H}(P_{r},P) =O( r^{\min \{ 1/2,(\kappa _{s}+\kappa _{r})/2\} })~.
\end{align*}
Moreover, if $x_0$ is an interior point of $\mathcal X$,
\begin{align*}
\mathrm{TV}(P_{r},P) =O( r^{\min \{ 2,\kappa _{s}+\kappa _{r}\} }) \qquad\text{and}\qquad \mathrm{H}(P_{r},P) =O( r^{\min \{ 1,(\kappa _{s}+\kappa _{r})/2\} }) ~.
\end{align*}
\end{theorem}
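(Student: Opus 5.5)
The plan is to work directly with the densities. Write $h_r(y)=\int f(x_0+t,y)\,\mathbf 1\{x_0+t\in\mathcal X,\ \|t\|<r\}\,dt/F(r)$ and $p_{x_0}(y)=f(x_0,y)/g(x_0)$, where $F(r)=\int_{B_r}g$. Lemma~\ref{lem:Fr} gives $C_Lr^d\le F(r)\le C_Hr^d$ for small $r$. Set $V_r:=\mathrm{Vol}(B_r\cap\mathcal X)$ and $\kappa:=\kappa_{\rm s}+\kappa_{\rm r}$. The first step is the decomposition
\[
h_r(y)-p_{x_0}(y)=\frac{\int_{B_r\cap\mathcal X}\big(f(x_0+t,y)-f(x_0,y)\big)dt}{F(r)}+f(x_0,y)\Big(\frac{V_r}{F(r)}-\frac1{g(x_0)}\Big).
\]
By Assumption~\ref{ass:g-regular}(i), $|F(r)-g(x_0)V_r|\le C_g rV_r$. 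Combined with $V_r\ge C_{x_0}r^d$, this makes the second term bounded in $L^1(\nu)$ by $O(r)\,g(x_0)^{-1}\int f(x_0,y)\,\nu(dy)=O(r)$.

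For the first term I expand $f(x_0+t,y)$ using Assumption~\ref{ass:smoothness-holder}. The Taylor terms with $1\le\|\beta\|_1\le\kappa_{\rm s}$ contribute at most $\sum_\beta|\partial^\beta_xf(x_0,y)|\,r^{\|\beta\|_1}V_r/F(r)$. The remainder contributes at most $M(y)r^{\kappa}V_r/F(r)$. Since $V_r/F(r)=O(1)$ and the envelopes are $\nu$-integrable by \eqref{eq:M-L1integral}, integrating over $y$ gives $2\,\mathrm{TV}(P_r,P)=\int|h_r-p_{x_0}|\,d\nu=O(r+r^{\kappa})=O(r^{\min\{1,\kappa\}})$. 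For Hellinger I will use $\mathrm{H}^2\le\mathrm{TV}$ from \eqref{eq:HvsTV}, i.e.\ $(\sqrt a-\sqrt b)^2\le|a-b|$, which yields the exponent $\min\{1/2,\kappa/2\}$. This is the boundary statement.

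For an interior point, $B_r\cap\mathcal X=B_r$ for small $r$, and the terms that are linear in $t$ vanish by symmetry, since $\int_{B_r}t\,dt=0$. For the density part this uses $\int_{B_r}\nabla_xf(x_0,y)^\top t\,dt=0$. For the normalization this uses $F(r)=g(x_0)\mathrm{Vol}(B_r)+\int_{B_r}(g(x_0+t)-g(x_0))\,dt$. Here is the main obstacle: Assumption~\ref{ass:g-regular} only gives a Lipschitz bound on $g$, which produces $O(r)$ and not $O(r^2)$. I would therefore not use Assumption~\ref{ass:g-regular}(i) for this term. Instead I would integrate Assumption~\ref{ass:smoothness-holder} over $y$, which is legitimate by \eqref{eq:M-L1integral}, to get
\[
g(x_0+t)=\sum_{\|\beta\|_1\le\kappa_{\rm s}}\frac{t^\beta}{\beta!}\int\partial^\beta_xf(x_0,y)\,\nu(dy)+O\big(\|t\|^{\kappa}\big).
\]
By the same symmetry this gives $F(r)/\mathrm{Vol}(B_r)=g(x_0)+O(r^{\min\{2,\kappa\}})$. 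The odd monomials of degree one integrate to zero, and degree-two and higher terms are $O(r^2)$; when $\kappa_{\rm s}=0$ only the remainder $O(r^\kappa)$ survives, and when $\kappa_{\rm s}=1$ one has $\kappa\le 2$.

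Plugging this into the same decomposition, both pieces are now $O(r^{\min\{2,\kappa\}})$ in $L^1(\nu)$: the degree-$\ge2$ Taylor terms are $O(r^2)$ and the remainder is $O(r^\kappa)$, with integrable envelopes. This gives $\mathrm{TV}(P_r,P)=O(r^{\min\{2,\kappa\}})$, and then $\mathrm{H}\le\sqrt{\mathrm{TV}}$ gives $O(r^{\min\{1,\kappa/2\}})$. I would also check that the case $\kappa_{\rm s}\ge2$ causes no issue, since higher-order terms only contribute $O(r^2)$ or smaller.
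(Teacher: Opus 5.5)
Your proposal is correct and follows essentially the same route as the paper: you expand both $\int_{B_r\cap\mathcal X} f(x,y)\,dx$ and the normalizing mass $\int_{B_r\cap\mathcal X} g$ via Assumption~\ref{ass:smoothness-holder}, use $\int_{B_r} t\,dt=0$ to remove the degree-one terms at interior points, bound $\mathrm{TV}$ in $L^1(\nu)$ using the integrable envelopes, and pass to Hellinger via $\mathrm{H}^2\le\mathrm{TV}$. The differences are minor. You split $h_r-p_{x_0}$ into a density-difference piece plus a normalization piece, whereas the paper keeps a single ratio with numerator terms $\partial_x^\beta f(x_0,y)-h_0(y)\,\partial_x^\beta g(x_0)$. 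In the boundary case you control the normalization through the Lipschitz bound in Assumption~\ref{ass:g-regular}(i) rather than the integrated expansion. Your coefficients $\int\partial_x^\beta f(x_0,y)\,\nu(dy)$ also avoid the implicit derivative--integral interchange in the paper's notation.
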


Theorem~\ref{thm:Holder-TV-H} provides the marginal convergence rates implied by Assumption~\ref{ass:smoothness-holder}. These rates contrast with those obtained under the QMD condition in Theorem~\ref{thm:H-QMD}: the two sets of primitives generate distinct marginal exponents $(a_h,a_{tv})$, which then determine the joint rates through Theorem~\ref{thm:high-level}. Under H\"older smoothness, the exponents depend on the combined smoothness order $\kappa_{\rm s}+\kappa_{\rm r}$ \emph{and} on whether $x_0$ is an interior point of $\mathcal X$. In general,
\begin{align}
    a_h=\min\{1/2,(\kappa_{\rm s}+\kappa_{\rm r})/2\} \qquad\text{and}\qquad a_{tv}=\min\{1,\kappa_{\rm s}+\kappa_{\rm r}\}~,
    \label{eq:Holder-TV-H_1}
\end{align}
and if $x_0$ is an interior point of $\mathcal X$,
\begin{align}
    a_h=\min\{1,(\kappa_{\rm s}+\kappa_{\rm r})/2\} \qquad\text{and}\qquad a_{tv}=\min\{2,\kappa_{\rm s}+\kappa_{\rm r}\}~.
    \label{eq:Holder-TV-H_2}
\end{align}
In both \eqref{eq:Holder-TV-H_1} and \eqref{eq:Holder-TV-H_2}, one has $a_{tv}=2a_h$, so TV converges (at most) twice as fast as Hellinger.

The effect of whether $x_0$ is an interior point of $\mathcal X$ or not is analogous to the one that occurs in kernel regression estimation under twice continuous differentiability. When $x_0$ lies in the interior of $\mathcal X$, the first moment is zero and the second term of the Taylor series expansion becomes the leading term. The same phenomenon explains the differences between \eqref{eq:Holder-TV-H_1} and \eqref{eq:Holder-TV-H_2}. As $\kappa_{\rm s}+\kappa_{\rm r}$ increases, the Hellinger rate saturates at $r$ (interior case) or at $r^{1/2}$ (non-interior case), while the TV rate saturates at $r^2$ (interior case) or at $r$ (non-interior case). In contrast, under QMD, both metrics necessarily decay at a rate $r$ and do not exhibit this interior/boundary split. 

\begin{remark}
    The rates in Theorem~\ref{thm:Holder-TV-H} can be shown to be sharp by characterizing DGPs that satisfy the conditions of the theorem and lead to the rates described in each of the cases. This sharpness result thus shows that these marginal rates cannot be improved under Assumption~\ref{ass:smoothness-holder}. We omit the details here for brevity. 
\end{remark}

Combining Theorem~\ref{thm:Holder-TV-H} with Theorem~\ref{thm:high-level} in the case with $\kappa_{\rm s}+\kappa_{\rm r}=2$ and $x_0$ belongs  in the interior of $\mathcal X$ yields
\begin{equation}\label{eq:rates-joint-k2}
    \mathrm{H}\big(\mathcal L(S_n),\mathcal L(S)\big) = O\big(k^{1/2}(k/n)^{1/d}\big),
    \qquad
    \mathrm{TV}\big(\mathcal L(S_n),\mathcal L(S)\big) = O\big(k(k/n)^{2/d}\big)~.
\end{equation}
Here $a_{tv}=2$ and $a_h=1$, so the direct bound $k(k/n)^{a_{tv}/d}$ from Theorem~\ref{thm:high-level} determines the TV rate, rather than the indirect inequality $\mathrm{TV}\le\sqrt{2}\,\mathrm{H}$. As in the marginal case, the TV distance converges twice as fast as the Hellinger distance. Moreover, both expressions imply the same growth condition on $k$ for joint convergence:
\[
    k(k/n)^{2/d} \to 0     \qquad\Longleftrightarrow\qquad  k = o\big(n^{2/(2+d)}\big)~.
\]
For $d=1$ this becomes $k = o(n^{2/3})$, matching the threshold obtained in Section~\ref{sec:qmd}. If instead $\kappa_{\rm s}+\kappa_{\rm r}=2$ and $x_0$ does not belong in the interior of $\mathcal X$, the marginal exponents are $a_h=1/2$ and $a_{tv}=1$, and Theorem~\ref{thm:high-level} yields the slower joint rates
\[
    \mathrm{H}\big(\mathcal L(S_n),\mathcal L(S)\big) = O\big(k^{1/2}(k/n)^{1/(2d)}\big),
    \qquad
    \mathrm{TV}\big(\mathcal L(S_n),\mathcal L(S)\big) = O\big(k(k/n)^{1/d}\big),
\]
which require $k=o\big(n^{1/(1+d)}\big)$. More generally, as $\kappa_{\rm s}+\kappa_{\rm r}$ decreases, the admissible growth of $k$ slows accordingly, and the slowdown is more pronounced in the non-interior case where the linear term does not cancel.

\begin{remark}
The $L^1$ integrability requirement $\int M(y)\,\nu(dy)<\infty$ in Assumption~\ref{ass:smoothness-holder} may be strengthened to an $L^p$ condition for some $p>1$. Such a strengthening does not, by itself, improve the convergence rates for $\mathrm{H}(P_r,P)$ or $\mathrm{TV}(P_r,P)$ obtained in Theorem~\ref{thm:Holder-TV-H}. This parallels the result in Lemma~\ref{lem:FHR-Lp}. 
\end{remark}

\subsection{Auxiliary Results under Alternative Assumptions}\label{sec:auxiliary-results-holder}

Assumption~\ref{ass:book} in \cite{falk/husler/reiss:2010} implies the approximation error $\mathrm{H}(P_r,P)=O(r^{2})$, a rate that is substantially faster than the feasible rates delivered by Theorem~\ref{thm:Holder-TV-H}, which are sharp. Because Assumption~\ref{ass:book} is closely related to Assumption~\ref{ass:smoothness-holder} with $\kappa_{\rm s}=\kappa_{\rm r}=1$, it is natural to examine more carefully which aspects of these conditions drive such a discrepancy in rates. In order to keep the discussion as simple as possible, in this section we focus on the univariate case $d=m=1$ and $x_0$ in the interior of the support of $X$. 

The main takeaway is that Assumption \ref{ass:book} imposes a multiplicative structure on the local approximation to $f(x,y)$ (or $p_x(y)$) that automatically restricts the behavior of $f(x,y)$ (or $p_x(y)$) in a neighborhood of the zero–density region
\begin{equation*}
    \mathcal N := \{\,y \in \mathcal Y : f(x_0,y)=0\,\}~.
\end{equation*}
Using similar arguments to those in the proof of Theorem \ref{thm:Holder-TV-H}, it is possible to show that if Assumption~\ref{ass:smoothness-holder} holds with $\kappa_{\rm s}=\kappa_{\rm r}=d=1$ and $M(y)/f(x_0,y)\in L^{2}(p_{x_0}\!\cdot\nu)$, then
\begin{equation}\label{eq:our-rate-LP}
    \mathrm{H}(P_r,P)=
    \begin{cases}
        O\!\left(r^{2}\right) & \text{if }\nu(\mathcal N)=0~,\\[3pt]
        O(r) & \text{otherwise}~,
    \end{cases}
\end{equation}
so that the baseline linear rate can only be improved by strengthening the integrability condition in Assumption~\ref{ass:smoothness-holder} and imposing $\nu(\mathcal N)=0$. Effectively, these additional assumptions close the gap between Assumption~\ref{ass:book} and Assumption~\ref{ass:smoothness-holder}, yielding the same fast rate. This highlights the central role of the zero–density region $\mathcal N$ in determining whether higher integrability translates into faster convergence. The details behind the result in \eqref{eq:our-rate-LP} are omitted for brevity.

\subsection{Proof of Theorem \ref{thm:Holder-TV-H}}

By Assumption \ref{ass:g-regular}(ii), for any $r\in ( 0,r_{0}) $, 
\begin{equation}
\mathrm{Vol}( B_{r}\cap \mathcal{X}) =\int_{B_{r}\cap \mathcal{X} }dx\geq C_{x_{0}}r^{d}>0~.  \label{eq:thm4_1}
\end{equation}
Let $r_{1}$ denote the constant $r$ in Assumption \ref{ass:smoothness-holder}. For the rest of the proof, let $r_{2}= \min \{ r_{0},r_{1}\} >0$.

To get the result for $\mathrm{TV}(P_{r},P)$, we derive expansions for any $r\in ( 0,r_{2}) $. Let
\[
\mathcal I_{\kappa_s}^{(1)}:=\big\{\beta\in \mathbf N_0^{d}: 1\le \|\beta\|_1\le \kappa_s\big\},
\qquad
\mathcal I_{\kappa_s}^{(2)}:=\big\{\beta\in \mathbf N_0^{d}: 2\le \|\beta\|_1\le \kappa_s\big\}~.
\]
First, we have
\begin{align}
\int_{B_{r}\cap \mathcal{X}}f(x,y)dx &\overset{(1)}{=}\int_{B_{r}\cap \mathcal{X}}\Big( f(x_{0},y)+\sum_{\beta \in \mathcal I_{\kappa_s}^{(1)}}\frac{ \partial _{x}^{\beta }f(x_{0},y)}{\prod_{s=1}^{d}\beta _{s}!} \prod_{j=1}^{d}( x_{j}-x_{0,j}) ^{\beta _{j}}+\zeta _{\kappa _{s}+\kappa _{r}}( x-x_{0},y) \Big) dx  \notag \\
&\overset{(2)}{=}\mathrm{Vol}( B_{r}\cap \mathcal{X}) f(x_{0},y)+\sum_{\beta \in \mathcal I_{\kappa_s}^{(1)}}\frac{ \partial _{x}^{\beta }f(x_{0},y)}{\prod_{s=1}^{d}\beta _{s}!}J_{\beta }( r) +R( r,y) ~, \label{eq:thm4_2} 
\end{align}
where (1) holds by Assumption \ref{ass:smoothness-holder}, and (2) by setting
\begin{align*}
J_{\beta }( r)  & := \int_{B_{r}\cap \mathcal{X} }\prod_{j=1}^{d}( x_{j}-x_{0,j}) ^{\beta _{j}}dx \\
R( r,y)  & := \int_{B_{r}\cap \mathcal{X}}\zeta _{\kappa _{s}+\kappa _{r}}( x-x_{0},y) dx~.
\end{align*}

Second, for any $r\in ( 0,r_{2}) $, we have
\begin{align}
&\int_{B_{r}\cap \mathcal{X}}g( x) dx =\int_{B_{r}\cap \mathcal{X}}\int_{y}f(x,y)v( dy) dx  \notag \\
&\overset{(1)}{=}\int_{B_{r}\cap \mathcal{X}}\Big( g(x_{0})+\sum_{\beta \in \mathcal I_{\kappa_s}^{(1)}}\frac{ \partial _{x}^{\beta }\int_{y}f(x_{0},y)v( dy) }{ \prod_{s=1}^{d}\beta _{s}!}\prod_{j=1}^{d}( x_{j}-x_{0,j}) ^{\beta _{j}}+\int_{y}\zeta _{\kappa _{s}+\kappa _{r}}( x-x_{0},y) v( dy) \Big) dx  \notag \\
&\overset{(2)}{=}\mathrm{Vol}( B_{r}\cap \mathcal{X}) g(x_{0})+\sum_{\beta \in \mathcal I_{\kappa_s}^{(1)}}\frac{\partial _{x}^{\beta }\int_{y}f(x_{0},y)v( dy) }{ \prod_{s=1}^{d}\beta _{s}!}J_{\beta }( r) +R( r) ~,\label{eq:thm4_3}
\end{align}
where (1) holds by \eqref{eq:thm4_2}, $g( x_{0}) =\int_{y}f(x_{0},y)v( dy) $, and by setting
\begin{equation*}
R( r)  := \int_{B_{r}\cap \mathcal{X}}\int_{y}\zeta _{\kappa _{s}+\kappa _{r}}( x-x_{0},y) v( dy) dx~.
\end{equation*}

Finally, for any $r\in ( 0,r_{2}) $, we have
\begin{align}
h_{r}( y) &- h_{0}( y)  =\frac{\int_{B_{r}\cap \mathcal{X}}f(v,y)dv-h_{0}( y) \int_{B_{r}\cap \mathcal{X} }g( x) dx}{\int_{B_{r}\cap \mathcal{X}}g( x) dx} \notag \\
&\overset{(1)}{=}\frac{( \int_{B_{r}\cap \mathcal{X} }f(v,y)dv-h_{0}( y) \int_{B_{r}\cap \mathcal{X}}g( x) dx) /\mathrm{Vol}( B_{r}\cap \mathcal{X}) }{\int_{B_{r}\cap \mathcal{X}}g( x) dx/\mathrm{Vol}( B_{r}\cap \mathcal{X} ) }  \notag \\
&\overset{(2)}{=}\frac{ 
{\sum}_{\beta \in \mathcal I_{\kappa_s}^{(1)}}\frac{ \partial _{x}^{\beta }f(x_{0},y)-h_{0}( y) \partial _{x}^{\beta }\int_{\tilde{y}}f(x_{0},\tilde{y})v( d\tilde{y}) }{ \prod_{s=1}^{d}\beta _{s}!}\frac{J_{\beta }( r) }{\mathrm{Vol} ( B_{r}\cap \mathcal{X}) } +\frac{R(r,y) -h_{0}( y) R(r) }{\mathrm{ Vol}( B_{r}\cap \mathcal{X}) }
 }{\int_{B_{r}\cap \mathcal{X}}g( x) dx/\mathrm{Vol}( B_{r}\cap \mathcal{X})}~,\label{eq:thm4_4}
\end{align}
where (1) holds by \eqref{eq:thm4_1}, which yields $\mathrm{Vol}( B_{r}\cap \mathcal{X}) >0$, and (2) by \eqref{eq:thm4_2}, \eqref{eq:thm4_3}, and $f(x_{0},y)=h_{0}( y) g(x_{0})$.

For any $x\in \mathcal{X}$ with $\Vert x-x_{0}\Vert \leq r$ and $ r\in ( 0,r_{2}) $, Assumption \ref{ass:g-regular}(i) implies $g( x) \geq g( x_{0}) /2>0.$ Then, for any $r\in ( 0,r_{2}) $,
\begin{equation*}
\int_{B_{r}\cap \mathcal{X}}g( x) dx\overset{(1)}{\geq }( g( x_{0}) /2) \mathrm{Vol}( B_{r}\cap \mathcal{X} ) >0~.
\end{equation*}
Then, for all $r\in ( 0,r_{2}) $, we get 
\begin{equation}
\frac{\int_{B_{r}\cap \mathcal{X}}g( x) dx}{\mathrm{Vol}(B_{r}\cap \mathcal{X}) }\geq g( x_{0}) /2>0~.
\label{eq:thm4_5}
\end{equation}

Now there are two cases, depending on whether $x_0$ is in the interior of $\mathcal X$ or not. First, consider the case where $x_0$ is in the interior of $\mathcal X$. By definition, this implies that there exist $\tilde{r}$ such that $B_r \subset \mathcal X$ for all $r \in (0,\tilde{r})$. Then, for any $r\in ( 0,\min\{r_{2},\tilde{r}\}) $, we obtain 
\begin{equation}
h_{r}( y) -h_{0}( y) \overset{(1)}{=}\frac{
    \sum_{\beta \in \mathcal I_{\kappa_s}^{(2)}}\frac{\partial _{x}^{\beta }f(x_{0},y)-h_{0}( y) \partial _{x}^{\beta }\int_{ \tilde{y}}f(x_{0},\tilde{y})v( d\tilde{y}) }{\prod_{s=1}^{d}\beta _{s}!}\frac{J_{\beta }( r) }{\mathrm{Vol}( B_{r}\cap \mathcal{X}) } +\frac{R(r,y) -h_{0}( y) R( r) }{\mathrm{ Vol}( B_{r}\cap \mathcal{X}) }}{\int_{B_{r}\cap \mathcal{X}}g( x) dx/\mathrm{Vol}( B_{r}\cap \mathcal{X}) }~,  \label{eq:thm4_6}
\end{equation}
where (1) holds by \eqref{eq:thm4_4} and $B_r \subset \mathcal X$ for all $r\in ( 0,\tilde{r}) $, and so $J_{\beta }( r)  = \int_{B_{r}}\prod_{j=1}^{d}( x_{j}-x_{0,j}) ^{\beta _{j}}dx =0$ for any $\beta \in \{ \{ 0,1\} ^{d}:\Vert \beta \Vert _{1}=1\} $. Then, for any $r\in ( 0,\min\{r_{2},\tilde{r}\}) $,  
\begin{align*}
\mathrm{TV}(P_{r},P) &=\frac{1}{2}\int \vert h_{r}( y) -h_{0}( y) \vert v( dy)  \\
&\overset{(1)}{\leq }\frac{2  \sum_{\beta \in \mathcal I_{\kappa_s}^{(2)}}\frac{\int \vert \partial _{x}^{\beta }f(x_{0},y)\vert v( dy) }{ \prod_{s=1}^{d}\beta _{s}!}\frac{\vert J_{\beta }( r) \vert }{\mathrm{Vol}( B_{r}\cap \mathcal{X}) } +2\frac{\int \vert R(r,y) \vert v( dy) }{ \mathrm{Vol}( B_{r}\cap \mathcal{X}) } }{g( x_{0}) } \\
&\overset{(2)}{\leq }\sum_{\beta \in \mathcal I_{\kappa_s}^{(2)}}O(r^{\Vert \beta \Vert _{1}})+O(r^{\kappa _{s}+\kappa _{r}} )\\
&=O( r^{\min \{ 2,\kappa _{s}+\kappa _{r}\} }) ~,
\end{align*}
where (1) holds by \eqref{eq:thm4_6}, $\int h_{0}( y) v( dy) =1$, $h_{0}( y) \geq 0$, $\vert \partial _{x}^{\beta }\int_{{y}}f(x_{0},{y})v( d{y}) \vert \leq \int \vert \partial _{x}^{\beta }f(x_{0},y)\vert v( dy) $, and $\vert R(r) \vert \leq \int
\vert R(r,y) \vert v( dy) $, and (2) by Assumption \ref{ass:smoothness-holder}, which yields $ \int \vert \partial _{x}^{\beta }f(x_{0},y)\vert v( dy) <\infty $, 
\begin{align*}
\vert J_{\beta }( r) \vert =\Big\vert \int_{B_{r}\cap \mathcal{X}}\prod_{j=1}^{d}( x_{j}-x_{0,j}) ^{\beta _{j}}dx\Big\vert \leq r^{\Vert \beta \Vert _{1}}\mathrm{Vol} ( B_{r}\cap \mathcal{X})~,
\end{align*}
and
\begin{align*}
\int \vert R(r,y) \vert v( dy) & \leq \int M(y)\,v(dy)\; r^{\kappa _{s}+\kappa _{r}}\mathrm{Vol}( B_{r}\cap \mathcal{X})~ .
\end{align*}

Next, consider the case where $x_0$ is not in the interior of $\mathcal X$. The sole difference is that \eqref{eq:thm4_6} includes the term $J_{\beta }( r) $ with $\beta \in \{ \{ 0,1\} ^{d}:\Vert \beta \Vert _{1}=1\} $, which now becomes
\begin{equation}
h_{r}( y) -h_{0}( y) \overset{(1)}{=}\frac{    \sum_{\beta \in \mathcal I_{\kappa_s}^{(1)}}\frac{\partial _{x}^{\beta }f(x_{0},y)-h_{0}( y) \partial _{x}^{\beta }\int_{
\tilde{y}}f(x_{0},\tilde{y})v( d\tilde{y}) }{\prod_{s=1}^{d}\beta _{s}!}\frac{J_{\beta }( r) }{\mathrm{Vol}( B_{r}\cap  \mathcal{X}) } +\frac{R(r,y) -h_{0}( y) R(r) }{\mathrm{ Vol}( B_{r}\cap \mathcal{X}) } }{\int_{B_{r}\cap \mathcal{X}}g( x) dx/\mathrm{Vol}( B_{r}\cap \mathcal{X}) }~.
\end{equation}
By using this expression and repeating previous arguments, we get 
\begin{equation*}
\mathrm{TV}(P_{r},P)=O( r^{\min \{ 1,\kappa _{s}+\kappa _{r}\} }) ~.
\end{equation*}

The convergence rate of the Hellinger distance follows from its relationship with the total variation distance, $\mathrm{H}^{2}(P_{r},P)\leq \mathrm{TV}(P_{r},P)$. \qed 

\end{document}